\def\BibTeX{{\rm B\kern-.05em{\sc i\kern-.025em b}\kern-.08em
		T\kern-.1667em\lower.7ex\hbox{E}\kern-.125emX}}
\theoremstyle{definition}
\newtheorem{theorem}{Theorem}
\newtheorem{lemma}{Lemma}
\newtheorem{remark}{Remark}
\newtheorem{example}{Example}
\newcommand{\parastretch}{\emergencystretch 3em}
	\newcommand{\figurewidth}{0.5\columnwidth}  
\renewcommand{\figurename}{Figure}
\newcommand{\E}[1]{\mathbb{E}\left[#1\right]}
\newcommand{\Var}[1]{\mathbf{Var}\left[#1\right]}
\newcommand{\Cov}[1]{\mathbf{Cov}\left[#1\right]}
\newcommand{\indicator}[1]{{\mathbf{1}}\left( #1 \right)}
\newcommand{\assumption}[1]{{\rm \boldsymbol{A#1}}}
\DeclareMathOperator*{\argmax}{arg\,max}
\newcommand{\copyrightstatement}{
	\begin{textblock}{0.828}(0.088,0.95)    
		\noindent
		\footnotesize
		\textcopyright This work has been submitted to the IEEE for possible publication. 
		Copyright may be transferred without notice, after which this version may no 
		longer be accessible.
	\end{textblock}
}
\begin{document}
\copyrightstatement
	
\title{
	Bayes Risk Consistency of Nonparametric Classification Rules 
	for Spike Trains Data
}
\author{
	Miros\l{}aw~Pawlak,~\IEEEmembership{Member,~IEEE,}
	Mateusz~Pabian,~\IEEEmembership{Student Member,~IEEE,}
	and~Dominik~Rzepka,~\IEEEmembership{Member,~IEEE}%
	\thanks{
		M.~Pawlak is with the Department of Electrical and Computer Engineering, 
		University of Manitoba, R3T 5V6 Winnipeg, Canada,
		and with the Department of Measurement and Electronics, 
		AGH University of Krakow, 30-059 Kraków, Poland.
		E-mail: Miroslaw.Pawlak@umanitoba.ca.
	}%
	\thanks{
		M.~Pabian and D.~Rzepka are with the Department of Measurement and Electronics, 
		AGH University of Krakow, 30-059 Kraków, Poland.
	}%
	\thanks{
		A preliminary version of this paper was presented at IEEE ICASSP 2023.
	}%
}

\markboth{Preprint}
{
	M.~Pawlak, M.~Pabian, D.~Rzepka: 
		Bayes Risk Consistency of Nonparametric Classification Rules 
		for Spike Trains Data
}

\maketitle

\begin{abstract}
	\noindent Spike trains data find a growing list of applications in computational 
	neuroscience,
	imaging, streaming data and finance. Machine learning strategies for spike
	trains are based on various neural network and probabilistic models. The probabilistic
	approach is relying on parametric or nonparametric specifications of the underlying spike
	generation model. In this paper we
	consider the two-class statistical classification problem for a class of spike train data
	characterized by nonparametrically specified intensity functions. We derive the optimal
	Bayes rule and next form the plug-in nonparametric kernel classifier. Asymptotical
	properties of the rules are established including the limit with respect to the
	increasing recording time interval and the size of a training set.
	In particular the convergence of the kernel classifier to the Bayes rule is proved.
	The obtained results are supported by a finite sample simulation studies.
\end{abstract}

\begin{IEEEkeywords}
	\noindent Bayes risk consistency, kernel classifiers, spike trains data, stochastic 
	integrals
\end{IEEEkeywords}

\section{Introduction}
\label{sec:01_introduction}

\IEEEPARstart{E}{vent} driven systems are often encountered in science and engineering. 
In such systems 
data are represented by point processes that define arrival times of events. In 
computational neuroscience and machine learning this type of data are called spike 
trains~\cite{gerstner2014neuronal, jang2019introduction, shchur2021neural}. 
In optical communication systems one observes a train of impulses (representing
a point process) emitted by photon-sensitive detectors. Signal detection and estimation
methods for such the so-called Poisson regime channels have been extensively examined
in communication and information theory~\cite{david69, guo08, merhav021}.
On the other hand, the mathematical theory of point processes has been extensively
studied in the statistical and stochastic processes
literature~\cite{daley2003introduction, andersen2012statistical}.
However, the research on event type processes from the statistical 
classification theory~\cite{devroye2013probabilistic} perspective has been initiated 
very recently~\cite{cholaquidis2017classification, rong2021error}. Probabilistic 
spiking neural networks have been introduced for supervised and unsupervised learning 
problems~\cite{jang2019introduction}. Various simulation results have been reported 
supporting their usefulness without, however, any accuracy studies and fundamental 
limits.

In this paper, we develop the Bayes strategy~\cite{devroye2013probabilistic} for the 
spiking data supervised classification problem. This strategy can be applied to 
research problems where event ocurrence is the primary information 
carrier~\cite{mazza2019rtbust, fanaee2014event}. We consider a class of 
temporal spiking processes that are characterized by non-random intensity functions. 
The intensity function plays the central role in our theory as it describes the
local rate of occurrence of spikes. For such processes 
(Section~\ref{sec:02_bayes_rule}) we derive the optimal Bayes rule in terms of class 
intensity functions. In Section~\ref{sec:03_bayes_bounds} the limit behavior the Bayes 
rule with respect to the increasing length of the observation interval is examined. In 
Section~\ref{sec:04_nonparametric} the plug-in nonparametric kernel classification rule 
from multiple replications of spiking processes is proposed. This is followed by the 
asymptotical optimality result, i.e., the convergence of the kernel rule to the Bayes 
rule. This result can be considered as the counterpart of the result 
in~\cite{greblicki1978asymptotically} concerning the classical plug-in nonparametric 
classification rules defined in the finite-dimensional Euclidean space. The spike train 
data are characterized by the variable-length continuous-discrete vectors of event times 
and their number over a given observation interval. The main mathematical tool in our
asymptotic analysis is the theory of the martingale decomposition for counting
processes~\cite{andersen2012statistical}.

It is also worth mentioning that the asymptotic optimality does not hold if one observes 
the long single realization of the underlying spiking process. In fact, the intensity 
estimation problem for spiking processes does not fall into the classical 
large-sample-smaller distance between sample points framework as the point process is 
casual in time~\cite{diggle1988equivalence}. Hence, for a fixed observation interval 
one must increase the number of events. This can be achieved by either scaling the 
intensity function or by using the replicates of the spiking process. The former 
approach can be based on the multiplicative intensity model due to 
Aalen~\cite{aalen1978nonparametric}, whereas the latter one (used in this paper) is 
the standard machine learning strategy, where the replicates form the training set. In 
this case the resulting kernel estimate will be obtained by aggregating kernel 
estimates from single realizations. Our asymptotic results are supported by 
simulation studies presented in Section~\ref{sec:05_simulations}. The preliminary
version of the results developed in this paper has been reported
in~\cite{pawlak2023asymptotically}.

The symbol $\indicator{A}$ denotes the indicator function of the set $A$. We shall use 
the notation $(P)$ for the convergence in probability, whereas $(a.s.)$ denotes the 
convergence with probability one. Also $a_{T} \prec cb_{T}$ denotes the asymptotic 
bound, i.e., $a_{T} \leq cb_{T}$ for sufficiently large $T$. Furthermore, 
$\overline{\lim}$, $\underline{\lim}$ denote the limit superior and inferior,
respectively.  Also, by $M_f$ we will denote the Lipschitz constant of a function
$f(t)$ , i.e., $f(t)$ meets the Lipschitz condition if
$|f(t_1) - f(t_2)| \leq M_f |t_1-t_2|$ for all $t_1,t_2$.

\section{Bayes Classification Rule}
\label{sec:02_bayes_rule}

\noindent A temporal spiking process~\mbox{$\{N(t),t\geq0\}$} consists of a sequence 
of random times $\{t_{i}\}$ of isolated events in time such that~\mbox{$N(0)=0$}. The 
process~$N(t)$ can be defined by the counting function \mbox{
	$N(t)=\sum_i \indicator{t_i \leq t}$
} which is the number of events in \mbox{$\left[0,t\right]$}. 
We assume that the process is observed on 
the time window~$[0,T]$ and is characterized by the non-random intensity 
function~$\lambda(t)$ that is defined for all $t\geq0$. 
This is a non-negative function that describes the local 
arriving rate of events such that \mbox{$\E{N(T)}=\int_{0}^{T}\lambda(u)du$} 
is the average number of events in~$[0,T]$. Hence, the observed on~$[0,T]$ 
process~$N(t)$ can be represented by the variable-length vector 
\mbox{$\mathbf{X}=[t_{1},\ldots,t_{N};N]$}, where \mbox{$0<t_{1}<\cdots<t_{N}<T$} are 
the event times and~\mbox{$N=N(T)$}. Writing $\left[t_1,\ldots,t_N;N\right]$ we 
emphasize the fact that the data vector consists of two parts: the occurrence times 
\mbox{$\lbrace t_1,\ldots,t_N \rbrace$} and~$N$ being the number of events in 
\mbox{$\left[0,T\right]$}. The former is the continuous part of the vector 
$\mathbf{X}$, whereas the latter is its discrete part.

The goal of this paper is to develop a rigorous classification methodology for the 
aforementioned class of spiking processes based on the Bayes theory of 
classification~\cite{devroye2013probabilistic}. Without a loss of generality we 
consider a two-class classification problem (see Section~\ref{sec:06_conclusions} for 
the generalization to the multi-class case) where class labels are denoted as 
\mbox{$\omega_{1}$, $\omega_{2}$} with the priori probabilities \mbox{
	$\pi_{1}$, $\pi_{2}$
},  respectively. In order to form the optimal Bayes rule we recall the following known 
result~\cite{daley2003introduction} on the joint 
occurrence density of~$\mathbf{X}$
\begin{equation}
	f(\mathbf{x})=\prod_{i=1}^{N}\lambda(t_{i})
	{\rm exp}\left(-\int_{0}^{T}\lambda(u)du\right) 
	\label{eqn:2.1}
\end{equation}
for $N=N(T)\geq1$, whereas if $N=0$ then \mbox{
	$f(\mathbf{x})=\exp\left(-\int_{0}^{T}\lambda(u)du\right)$
}. It is worth noting that~\eqref{eqn:2.1} is the continuous-discrete distribution and 
by virtue of~\eqref{eqn:2.1} the marginal density of the 
occurrence times \mbox{$\lbrace t_1,\ldots,t_N \rbrace$} for \mbox{
	$N \in \lbrace 0,1,\ldots \rbrace$
} is given by
\begin{equation}
	\begin{split}
		\sum_{n=0}^{\infty} & f\left(t_1,\ldots,t_N; N=n\right) =
		\exp\left(-\int_{0}^{T}\lambda(u)du\right) \\
		& + \exp\left(-\int_{0}^{T}\lambda(u)du\right)
		\sum_{n=1}^{\infty} \prod_{j=1}^{n} \lambda(t_j)
	\end{split}
	\,,
	\label{eqn:2.2}
\end{equation}
which is defined over the simplex regions \mbox{$
	\mathbb{C}_n=\lbrace \left(t_1,\ldots,t_n \right):
	0 \leq t_1 \le \ldots \le t_n \leq T \rbrace
	$}, \mbox{$n=1,2,\ldots$}. The formula in~\eqref{eqn:2.2} defines the proper density 
over $\{\mathbb{C}_n\}$, i.e., we have
\begin{equation}
	\begin{split}
		& \exp\left(-\int_{0}^{T}\lambda(u)du\right) \\
		& +
		\exp\left(-\int_{0}^{T}\lambda(u)du\right)
		\sum_{n=1}^{\infty} \int_{\mathbb{C}_n} \prod_{j=1}^{n} \lambda(t_j) 
		dt_1 \cdots dt_n
		= 1
	\end{split}
	\label{eqn:2.2a}\,.
\end{equation}

In the context of the classification problem the class occurrence densities 
in~\eqref{eqn:2.1} will be denoted~$f_1(\mathbf{x})$ and~$f_2(\mathbf{x})$ depending 
whether~$\mathbf{X}$ comes from class~$\omega_{1}$ (denoted as \mbox{
	$\mathbf{X} \in \omega_{1}$
}) or if \mbox{$\mathbf{X} \in \omega_{2}$}, respectively. The corresponding class
intensities are \mbox{$\lambda_{1}(t)$, $\lambda_{2}(t)$} being the non-negative 
functions defined on \mbox{$\left[0, \infty\right)$}. Then using~\eqref{eqn:2.1}, 
one can form the optimal Bayes rule \mbox{
	$\psi_{T}^{*}$: $\mathbf{X}\in\omega_{1}$
} if 
\begin{equation}
	\prod_{i=1}^{N}\frac{\lambda_{1}(t_{i})}{\lambda_{2}(t_{i})}
	{\rm exp}\left(\int_{0}^{T}\left[\lambda_{2}(u)-\lambda_{1}(u)\right]du\right)
	\geq\frac{\pi_{2}}{\pi_{1}}
	\label{eqn:2.3} \,.
\end{equation}
assuming that~$N\geq1$ and 
$
\exp\left(\int_{0}^{T}\left[\lambda_{2}(u)-\lambda_{1}(u)\right]du\right) 
\geq \frac{\pi_{2}}{\pi_{1}}
$ if~$N=0$. Clearly, if the reverse inequality in~\eqref{eqn:2.3} holds, then we 
classify $\mathbf{X}$ to $\omega_{2}$. The log transform of~\eqref{eqn:2.3} gives the 
alternative convenient form 
of the rule~$\psi_{T}^{*}$, i.e., \mbox{$\mathbf{X}\in\omega_{1}$} if 
\begin{equation}
	\sum_{i=1}^{N}\log
	\left(\frac{\lambda_{1}(t_{i})}{\lambda_{2}(t_{i})}\right)
	\geq\gamma
	\label{eqn:2.4} \,,
\end{equation}
where \mbox{$
	\gamma=
	\int_{0}^{T}\left[\lambda_{1}(u)-\lambda_{2}(u)\right]du +
	\log\left(\frac{\pi_{2}}{\pi_{1}}\right)
	$}. The rule in~\eqref{eqn:2.4} can be usefully written in terms of the stochastic 
integral of the log-ratio $\log\left(\frac{\lambda_{1}(t)}{\lambda_{2}(t)}\right)$ 
with respect to the increments of the counting process $N(t)$, i.e., 
\mbox{$\mathbf{X}\in\omega_{1}$} if 
\begin{equation}
	\int_{0}^{T}
	\log
	\left(\frac{\lambda_{1}(t)}{\lambda_{2}(t)}\right)
	dN(t)
	\geq\gamma
	\label{eqn:2.4a} \,.
\end{equation}
Here $N(t)$ is the aforemenioned counting process with the intensity function 
$\lambda(t)$, where
\begin{equation}
	\lambda(t) = 
	\begin{cases}
		\lambda_{1}(t) & \text{if} \enspace \mathbf{X}\in\omega_{1} \\
		\lambda_{2}(t) & \text{if} \enspace \mathbf{X}\in\omega_{2}
	\end{cases}
	\label{eqn:2.4b} \,.
\end{equation}

For our further considerations it is useful to represent the class intensity functions 
on $[0,T]$ in terms of the so-called intensity factor and shape 
function~\cite{gajardo2021cox}. 
Thus, let \mbox{$\lambda_{1}(t)=\tau_{1}p_{1}(t)$}, 
\mbox{$\lambda_{2}(t)=\tau_{2}p_{2}(t)$}, where 
\begin{equation}
	\tau_{i}=\int_{0}^{T}\lambda_{i}(u)du ,\enspace p_{i}(t)=\lambda_{i}(u)/\tau_{i} ,\enspace
	i=1,2
	\label{eqn:2.5} \,.
\end{equation}
Clearly~$p_{1}(t)$,~$p_{2}(t)$ are well-defined probability density functions 
on~$[0,T]$. The 
representation in~\eqref{eqn:2.5} allows us to represent the classification problem in 
terms of the class intensity factors and shape densities, and employ 
information-theoretic divergence measures. Using~\eqref{eqn:2.5}, we can 
rewrite the rule in~\eqref{eqn:2.4} as follows, \mbox{$\mathbf{X}\in\omega_{1}$} if 
\begin{equation}
	\sum_{i=1}^{N}\log\left(
	\frac{p_{1}(t_{i})}{p_{2}(t_{i})}
	\right)
	\geq\eta
	\label{eqn:2.6} \,,
\end{equation}
where \mbox{$
	\eta=
	\tau_{1} - \tau_{2} + 
	N\log\left(\frac{\tau_{2}}{\tau_{1}}\right) +
	\log\left(\frac{\pi_{2}}{\pi_{1}}\right)
	$}. 
\parastretch The Bayes rule~$\psi_{T}^{*}$ in~\eqref{eqn:2.6} will be written as 
\mbox{$W_T(\mathbf{X})\geq\eta_T$} emphasizing the fact that the vector~$\mathbf{X}$ is
observed within the time window~$[0,T]$. \\
It is worth noting that if \mbox{$\lambda_{1}(t)=\lambda_{1}$} and
\mbox{$\lambda_{2}(t)=\lambda_{2}$}, i.e., if we have the homogeneous spike train data 
then the Bayes rule takes the following form \mbox{
	$\psi_{T}^{*}$: $\mathbf{X}\in\omega_{1}$
}
\begin{equation}
	N \log \left(\frac{\lambda_{1}}{\lambda_{2}}\right) + 
	T \left(\lambda_{2}-\lambda_{1}\right) 
	\geq \log \left(\frac{\pi_{2}}{\pi_{1}}\right)
	\label{eqn:2.7} \,,
\end{equation}
provided that~$N\geq1$. In the case~$N=0$ this reads as
$
\lambda_{2}-\lambda_{1} \geq \frac{1}{T} \log \left(\frac{\pi_{2}}{\pi_{1}}\right)
$. The risk associated with the rule \mbox{$\psi_{T}^{*}(\mathbf{x})$} 
in~\eqref{eqn:2.4} (or~\eqref{eqn:2.6}) is defined as \mbox{
	$\mathbf{R}_{T}^{*}=\mathbf{P(\psi_{T}^{*}(\mathbf{X})\neq Y)}$	
} and is referred as the Bayes risk. Here \mbox{
	$Y\in \{\omega_{1}, \omega_{2}\}$
} is the true class label of $\mathbf{X}$. For our future studies we express the 
Bayes risk in terms of the decision function $W_T(\mathbf{X})$, i.e., we write
\begin{equation}
	\begin{split}
		\mathbf{R}_{T}^{*} = {}
		&\mathbf{P}\left(\mathbf{W}_{T}(\mathbf{X})\geq\eta_{T}|
		\mathbf{X}\in\omega_{2}\right)\pi_{2}\\
		&+ \mathbf{P}\left(\mathbf{W}_{T}(\mathbf{X})<\eta_{T}|
		\mathbf{X}\in\omega_{1}\right)\pi_{1}
	\end{split}
	\label{eqn:2.8} \,.
\end{equation}
It is an important question to evaluate the Bayes risk. This includes various bounds 
on~$\mathbf{R}_{T}^{*}$ and the behavior of~$\mathbf{R}_{T}^{*}$ as a function of~$T$. 
In Sections~\ref{subsec:03_bayes_bounds/decision} and~\ref{subsec:03_bayes_bounds/risk} 
we present results concerning such issues.

The presented results rely on the following local decomposition 
(see Appendix~A) of the increment~$dN(t)$ of the point process~$N(t)$. Hence, we have
\begin{equation}
	dN(t) = \lambda(t)dt + dM(t)
	\label{eqn:2.9} \,,
\end{equation}
where~$\lambda(t)$ is the intensity function of~$N(t)$, and~$dM(t)$ is a zero mean 
process with uncorrelated but non-stationary increments. The formula 
in~\eqref{eqn:2.9} can be viewed as the local signal plus noise decomposition, where 
the noise process~$dM(t)$ reveals the local martingale 
structure~\cite{andersen2012statistical}. Appendix~A gives the pertinent results
concerning the martingale decomposition of the underlying spiking process.

The decomposition in~\eqref{eqn:2.9} allows us to express the classification rule 
in~\eqref{eqn:2.4} (or its version in~\eqref{eqn:2.6}) in the convenient stochastic 
integral form. In fact, by virtue of~\eqref{eqn:2.9} and~\eqref{eqn:2.4a} we write the 
left-hand side of~\eqref{eqn:2.4a} as
\begin{equation}
	\begin{split}
		\int_{0}^{T} \log \left(\frac{\lambda_{1}(t)}{\lambda_{2}(t)}\right)dN(t) = {}
		& \int_{0}^{T} \log \left(\frac{\lambda_{1}(t)}{\lambda_{2}(t)}\right)\lambda(t)dt \\
		& + \int_{0}^{T} \log \left(\frac{\lambda_{1}(t)}{\lambda_{2}(t)}\right)dM(t)
	\end{split}
	\label{eqn:2.9a} \,,
\end{equation}
where $\lambda(t)$ is given in~\eqref{eqn:2.4b} and $M(t)$ is the corresponding noise
process defined in~\eqref{eqn:2.9}.
The first term in~\eqref{eqn:2.9a} is the bias term of the optimal decision function, 
whereas the second one is the zero mean random variable contributing to the 
statistical variability of the rule. In Section~\ref{sec:03_bayes_bounds} we show that 
the normalized version of this term converges exponentially fast to zero as 
$T \to \infty$ with probability one.

\section{The Bayes Rule and Risk: Bounds and Asymptotic Behavior}
\label{sec:03_bayes_bounds}

\subsection{The Bayes Decision Function}
\label{subsec:03_bayes_bounds/decision}

\noindent In this section we examine the optimal decision function derived 
in~\eqref{eqn:2.4} or 
its alternative form in~\eqref{eqn:2.6}. Owing to the decomposition 
in~\eqref{eqn:2.9a} and using~\eqref{eqn:2.4a} we can arrive to the following 
equivalent form of the rule $\psi_{T}^{*}$ in~\eqref{eqn:2.6}, 
$\mathbf{X}\in\omega_{1}$ if
\begin{equation}
	U_{T}(\mathbf{X}) \geq
	\alpha_{T} + \log\left(\frac{\pi_{2}}{\pi_{1}}\right)
	\label{eqn:3.1} \,,
\end{equation}
where
\begin{equation}
	U_{T}(\mathbf{X}) =
	\int_{0}^{T} g(t)dM(t)
	\label{eqn:3.2} \,.
\end{equation}
Here $
g(t) = 
\log\left(\frac{\lambda_{1}(t)}{\lambda_{2}(t)}\right) =
\log\left(\frac{p_{1}(t)}{p_{2}(t)}\right) + \log\left(\frac{\tau_{1}}{\tau_{2}}\right)
$ and
\begin{equation}
	\begin{split}
		\alpha_{T} = {}
		& \tau_{1} - \tau_{2} + \log\left(\frac{\tau_{2}}{\tau_{1}}\right)
		\int_{0}^{T} \lambda(t)dt \\
		&+ \int_{0}^{T} \log\left(\frac{p_{2}(t)}{p_{1}(t)}\right) \lambda(t)dt
	\end{split}
	\label{eqn:3.3} \,,
\end{equation}
where $\lambda(t)$ is specified in~\eqref{eqn:2.4b}.

It is worth noting that $U_T(\mathbf{X})$ in~\eqref{eqn:3.1} represents the stochastic 
part of the Bayes rule. This takes the form of the stochastic integral with respect to 
the increments of the martingale process $M(t)$. It is 
known~\cite{andersen2012statistical} that the martingale property is preserved under 
stochastic integration. Hence, since $\E{dM(t)}=0$ the process
\begin{equation*}
	\left\{U_t(\mathbf{X}) = \int_{0}^{t} g(u) dM(u), 0 \leq t \leq T \right\}
\end{equation*}
is a zero mean local martingale associated with the counting process $N(t)$, 
see Appendix~A for further details. In addition, the integral in~\eqref{eqn:3.2} is
specified by the log-ratio $\log\left(\frac{\lambda_{1}(t)}{\lambda_{2}(t)}\right)$ 
and this is generally the 
unbounded function. To prevent this singularity it suffices to assume the class 
intensities $\lambda_{1}(t)$, $\lambda_{2}(t)$ that are bounded away from zero. 
Moreover, intensity functions are commonly bounded. All these restrictions can be 
formalized by the following assumption that will be used in the paper. Hence, assume 
that there exist positive numbers $\delta$ and $C$ such that
\begin{flalign}
	\assumption{1:} 
	\enspace
	0 < \delta \leq \lambda_{i}(t) \leq C,
	\enspace
	i = 1,2,
	\enspace
	\text{for all}
	\enspace
	t\geq 0
	\label{eqn:3.3a} \,. &&
\end{flalign}
We refer to~\cite{birge1993rates, van1995exponential} for some weaker conditions for 
the existence of the aforementioned log-ratio.

In this section we present the preliminary results that characterize the Bayes rule
specified by~\eqref{eqn:3.2} and~\eqref{eqn:3.3}. This includes some bounds on the 
threshold $\alpha_{T}$ in~\eqref{eqn:3.3} and the statistical properties of the 
stochastic term in~\eqref{eqn:3.2}. To do so, we recall that the Kullback-Leibler~(KL) 
divergence~\cite{ghosal2017fundamentals} between densities $p(t)$ and $q(t)$ on $[0,T]$ 
is defined as follows
\begin{equation}
	\mathbf{K}_{T} \left(p \parallel q\right) = 
	\int_{0}^{T} \log\left(\frac{p(t)}{q(t)}\right)p(t)dt
	\label{eqn:3.4}\,.
\end{equation}
It is known that $\mathbf{K}_{T} \left(p \parallel q\right) \geq 0$ and 
$\mathbf{K}_{T} \left(p \parallel q\right) = 0$ if $p=q$.

The following lemma gives the upper and lower bounds for the threshold $\alpha_{T}$ 
in~\eqref{eqn:3.3} in terms of the KL divergence between the class densities and the 
normalized square distance between the corresponding intensity factors. We will find 
these bounds useful in evaluating the Bayes risk.
\begin{lemma}
	\label{lemma:1}
	Let $\alpha_{T}$ be the threshold defined in~\eqref{eqn:3.3}. Then we have
	\begin{enumerate}[(a)]
		\item If $\mathbf{X}\in\omega_{1}$ then
		\begin{equation}
			\begin{split}
				-\frac{\left(\tau_{1} - \tau_{2}\right)^2}{\tau_2}
				& -\tau_{1}\mathbf{K}_{T}\left(p_1 \parallel p_2\right)\\
				& \leq \alpha_{T} \leq
				-\tau_{1}\mathbf{K}_{T}\left(p_1 \parallel p_2\right)
			\end{split}
			\label{eqn:3.4a}\,.
		\end{equation}
		\item If $\mathbf{X}\in\omega_{2}$ then
		\begin{equation}
			\begin{split}
				\tau_{2}\mathbf{K}_{T}\left(p_2 \parallel p_1\right)
				& \leq \alpha_{T} \\
				& \leq \frac{\left(\tau_{1} - \tau_{2}\right)^2}{\tau_1}
				+ \tau_{2}\mathbf{K}_{T}\left(p_2 \parallel p_1\right)
			\end{split}
			\label{eqn:3.4b}\,,
		\end{equation}
	\end{enumerate}
	where $p_{1}, p_{2},\tau_{1},\tau_{2}$ are defined in~\eqref{eqn:2.5}. 
	The proof of Lemma~\ref{lemma:1} is given in Appendix~B.
\end{lemma}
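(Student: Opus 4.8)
The plan is to first collapse $\alpha_{T}$ in~\eqref{eqn:3.3} into a closed form that depends only on the intensity factors $\tau_{1},\tau_{2}$ and a single KL divergence, and then to read off the four inequalities from the elementary bound $\log x\le x-1$ (equivalently $\log x\ge 1-1/x$), $x>0$. Concretely, I would substitute the conditional intensity $\lambda(t)$ from~\eqref{eqn:2.4b} into~\eqref{eqn:3.3}, use the factorization $\lambda_{i}(t)=\tau_{i}p_{i}(t)$ of~\eqref{eqn:2.5} and the fact that $p_{1},p_{2}$ integrate to one on $[0,T]$, so that $\int_{0}^{T}\lambda(t)\,dt$ and $\int_{0}^{T}\log(p_{2}/p_{1})\lambda(t)\,dt$ become, up to a factor $\tau_{i}$, either $1$ or a KL divergence as in~\eqref{eqn:3.4}.

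In case (a), $\mathbf{X}\in\omega_{1}$, so $\lambda=\lambda_{1}$, $\int_{0}^{T}\lambda(t)\,dt=\tau_{1}$ and $\int_{0}^{T}\log(p_{2}/p_{1})\lambda_{1}(t)\,dt=\tau_{1}\int_{0}^{T}\log(p_{2}/p_{1})p_{1}\,dt=-\tau_{1}\mathbf{K}_{T}(p_{1}\parallel p_{2})$; hence
\[
\alpha_{T}=\tau_{1}-\tau_{2}+\tau_{1}\log\!\Big(\tfrac{\tau_{2}}{\tau_{1}}\Big)-\tau_{1}\mathbf{K}_{T}(p_{1}\parallel p_{2}).
\]
Subtracting the exact KL term, the claim~\eqref{eqn:3.4a} reduces to showing
\[
-\frac{(\tau_{1}-\tau_{2})^{2}}{\tau_{2}}\;\le\;\tau_{1}-\tau_{2}+\tau_{1}\log\!\Big(\tfrac{\tau_{2}}{\tau_{1}}\Big)\;\le\;0 .
\]
Writing $x=\tau_{2}/\tau_{1}>0$, the middle term equals $\tau_{1}(1-x+\log x)$: the upper bound is immediate from $\log x\le x-1$, and, dividing by $\tau_{1}$ and rearranging (using $(\tau_{1}-\tau_{2})^{2}/(\tau_{1}\tau_{2})=(1-x)^{2}/x$), the lower bound is equivalent to $\log x\ge 1-1/x$, i.e.\ the same inequality applied to $1/x$.

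In case (b), $\mathbf{X}\in\omega_{2}$, so $\lambda=\lambda_{2}$, $\int_{0}^{T}\lambda(t)\,dt=\tau_{2}$ and $\int_{0}^{T}\log(p_{2}/p_{1})\lambda_{2}(t)\,dt=\tau_{2}\mathbf{K}_{T}(p_{2}\parallel p_{1})$, giving
\[
\alpha_{T}=\tau_{1}-\tau_{2}+\tau_{2}\log\!\Big(\tfrac{\tau_{2}}{\tau_{1}}\Big)+\tau_{2}\mathbf{K}_{T}(p_{2}\parallel p_{1}),
\]
so~\eqref{eqn:3.4b} reduces to $0\le \tau_{1}-\tau_{2}+\tau_{2}\log(\tau_{2}/\tau_{1})\le (\tau_{1}-\tau_{2})^{2}/\tau_{1}$. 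With $x=\tau_{2}/\tau_{1}$ the middle term is $\tau_{1}(1-x+x\log x)$; its nonnegativity follows since $h(x)=1-x+x\log x$ satisfies $h(1)=0$ and $h'(x)=\log x$, so $x=1$ is its global minimum, while the upper bound again reduces to $\log x\le x-1$ after noting $(\tau_{1}-\tau_{2})^{2}/\tau_{1}=\tau_{1}(1-x)^{2}$. The whole argument is routine; the only point requiring care is the bookkeeping of signs and the choice of orientation ($x$ versus $1/x$) of the elementary inequality in each of the four bounds, and I do not anticipate any genuine obstacle — in particular the nonnegativity of the KL divergence is not even needed, since the KL term cancels exactly.
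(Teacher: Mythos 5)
Your proposal is correct and follows essentially the same route as the paper: reduce $\alpha_{T}$ to the closed form $\tau_{1}-\tau_{2}+\tau_{i}\log(\tau_{2}/\tau_{1})\mp\tau_{i}\mathbf{K}_{T}(\cdot\parallel\cdot)$ in each case and apply the elementary bounds $\frac{x-1}{x}\leq\log x\leq x-1$, exactly as in Appendix~B. The only cosmetic difference is that you verify one lower bound by a one-line calculus argument for $h(x)=1-x+x\log x$ where the paper applies the same elementary inequality directly, which changes nothing of substance.
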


As the KL divergence is non-negative, then Lemma~\ref{lemma:1}(a) yields 
$\alpha_{T}\leq0$ if $\mathbf{X}\in\omega_{1}$, whereas Lemma~\ref{lemma:1}(b) gives 
$\alpha_{T}\geq0$ for $\mathbf{X}\in\omega_{2}$. Also it is seen that $\alpha_{T}$ 
lies in the interval of the length $\left(\tau_{1}-\tau_{2}\right)^2/\tau_{2}$ and 
$\left(\tau_{1}-\tau_{2}\right)^2/\tau_{1}$ if $\mathbf{X}\in\omega_{1}$ and 
$\mathbf{X}\in\omega_{2}$, respectively. It is also worth noting that \mbox{
	$\left(\tau_{1}-\tau_{2}\right)^2=\left\{
	\int_{0}^{T}\left(\lambda_{1}(t)-\lambda_{2}(t)\right)dt
	\right\}^2$
} represents the square of the difference of the average number of events on $[0,T]$
coming from
classes $\omega_{1}$ and $\omega_{2}$. As a result, if $\tau_{1} = \tau_{2}$ then
$\alpha_{T} = -\tau_{1}\mathbf{K}_{T}\left(p_1 \parallel p_2\right)$
for $\mathbf{X}\in\omega_{1}$ and
$\alpha_{T} = \tau_{2}\mathbf{K}_{T}\left(p_2 \parallel p_1\right)$
for $\mathbf{X}\in\omega_{2}$.

The next result concerns the stochastic part $U_{T}(\mathbf{X})$ defined 
in~\eqref{eqn:3.2}. This is given in the form of the stochastic integral of the log-ratio 
$\log\left(\frac{\lambda_{1}(t)}{\lambda_{2}(t)}\right)$ with respect to the 
increments of $M(t)$ such that $\E{U_{T}(\mathbf{X})}=0$. In the following lemma we 
evaluate the basic statistical feature of this term by deriving its variance.
\begin{lemma}
	\label{lemma:2}
	Let us consider the stochastic part $U_T(\mathbf{X})$ of the Bayes rule 
	in~\eqref{eqn:3.1}. Then,
	\begin{enumerate}[(a)]
		\item If $\mathbf{X}\in\omega_{1}$ then
		\begin{equation}
			\begin{split}
				& \Var{U_{T}(\mathbf{X})} \\
				& =
				\tau_{1}\int_{0}^{T}\left\{
				\log\left(\frac{p_{1}(t)}{p_{2}(t)}\right)
				+ log\left(\frac{\tau_{1}}{\tau_{2}}\right)
				\right\}^2 p_{1}(t)dt
			\end{split}
			\label{eqn:3.5}\,.
		\end{equation}
		\item If $\mathbf{X}\in\omega_{2}$ then
		\begin{equation}
			\begin{split}
				& \Var{U_{T}(\mathbf{X})} \\
				& =
				\tau_{2}\int_{0}^{T}\left\{
				\log\left(\frac{p_{2}(t)}{p_{1}(t)}\right)
				+ log\left(\frac{\tau_{2}}{\tau_{1}}\right)
				\right\}^2 p_{2}(t)dt
			\end{split}
			\label{eqn:3.6}\,.
		\end{equation}
	\end{enumerate}
	The proof of Lemma~\ref{lemma:2} is given in Appendix~B.
\end{lemma}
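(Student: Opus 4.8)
The plan is to compute $\Var{U_T(\mathbf{X})}$ directly from the representation $U_T(\mathbf{X}) = \int_0^T g(t)\,dM(t)$ in~\eqref{eqn:3.2}, using the martingale structure of $M(t)$ recalled in Appendix~A. The key fact I would invoke is the predictable variation formula for stochastic integrals against the counting-process martingale: since $\E{U_T(\mathbf{X})}=0$, we have $\Var{U_T(\mathbf{X})} = \E{\langle U(\mathbf{X})\rangle_T}$, and the predictable quadratic variation of $\int_0^t g(u)\,dM(u)$ is $\int_0^t g^2(u)\,d\langle M\rangle_u = \int_0^t g^2(u)\lambda(u)\,du$, because $d\langle M\rangle_t = \lambda(t)\,dt$ for a counting process with intensity $\lambda$. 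Hence $\Var{U_T(\mathbf{X})} = \int_0^T g^2(u)\lambda(u)\,du$, where $\lambda$ is the true class intensity as in~\eqref{eqn:2.4b}. (Equivalently, one can argue without the quadratic-variation machinery: use $dN(t) = \lambda(t)dt + dM(t)$, note that the increments $dM(t)$ are uncorrelated with mean zero and variance $\lambda(t)dt$, and sum/integrate $g^2(t)\,dN(t)$ in mean.)

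Next I would substitute the two ingredients. First, $g(t) = \log\!\big(\lambda_1(t)/\lambda_2(t)\big) = \log\!\big(p_1(t)/p_2(t)\big) + \log(\tau_1/\tau_2)$, which is exactly the bracketed quantity appearing in~\eqref{eqn:3.5} (and its sign-flipped counterpart gives the bracket in~\eqref{eqn:3.6}, since $\log(p_1/p_2)+\log(\tau_1/\tau_2) = -[\log(p_2/p_1)+\log(\tau_2/\tau_1)]$, and squaring removes the sign). Second, for case~(a) with $\mathbf{X}\in\omega_1$ we have $\lambda(t)=\lambda_1(t)=\tau_1 p_1(t)$ by~\eqref{eqn:2.4b} and~\eqref{eqn:2.5}, so
\begin{equation*}
	\Var{U_T(\mathbf{X})} = \int_0^T g^2(t)\,\tau_1 p_1(t)\,dt
	= \tau_1 \int_0^T \left\{\log\!\left(\frac{p_1(t)}{p_2(t)}\right) + \log\!\left(\frac{\tau_1}{\tau_2}\right)\right\}^2 p_1(t)\,dt,
\end{equation*}
which is~\eqref{eqn:3.5}. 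Case~(b) is identical with the indices $1$ and $2$ interchanged, using $\lambda(t)=\lambda_2(t)=\tau_2 p_2(t)$, giving~\eqref{eqn:3.6}. Assumption~\assumption{1} guarantees $g$ is bounded, so the integrals are finite and all interchanges of expectation and integration are justified.

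The main obstacle is not computational but a matter of rigor: one must justify that $\E{U_T(\mathbf{X})} = 0$ and that the variance equals $\E{\langle U(\mathbf{X})\rangle_T}$, i.e., that $U_T(\mathbf{X})$ is a genuine square-integrable martingale (not merely a local martingale) so that the optional/predictable variation identity and the isometry apply. Under~\assumption{1} the integrand $g$ is bounded and $\E{N(T)} = \tau < \infty$, so $\E{\int_0^T g^2(t)\lambda(t)\,dt} < \infty$; this integrability bound promotes the local martingale to an $L^2$-martingale and licenses the identity $\Var{U_T} = \E{\int_0^T g^2 \lambda\,dt}$. I expect Appendix~A to supply precisely these facts (the Doob–Meyer decomposition $dN = \lambda\,dt + dM$, the predictable variation $d\langle M\rangle = \lambda\,dt$, and the stability of the martingale property under stochastic integration), so in the body of the proof I would simply cite Appendix~A for the martingale apparatus and then carry out the two-line substitution above for each of the cases (a) and (b).
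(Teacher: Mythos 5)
Your proposal is correct and follows essentially the same route as the paper: the authors' Appendix~B proof simply applies the variance identity~\eqref{eqn:a7} from Appendix~A to the stochastic integral $\int_{0}^{T}\log\left(\lambda_{1}(t)/\lambda_{2}(t)\right)dM(t)$ and substitutes $\lambda(t)=\tau_{i}p_{i}(t)$, exactly as you do. Your added remarks on promoting the local martingale to an $L^{2}$-martingale under \assumption{1} are a harmless elaboration of what the paper leaves implicit.
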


The formulas in Lemma~\ref{lemma:2} can be expressed in terms of the higher-order KL 
divergence between two class densities referred to as the KL 
variation~\cite{ghosal2017fundamentals}. Hence, let
\begin{equation}
	\mathbf{V}_T \left(p \parallel q\right) =
	\int_{0}^{T} \log^2\left(\frac{p(t)}{q(t)}\right)p(t) dt
	\label{eqn:3.7}
\end{equation}
be the KL variation between densities $p(t)$ and $q(t)$ on $[0,T]$. Note that 
$\mathbf{V}_T \left(p \parallel q\right)=0$ if $p=q$. Moreover, the following result 
describes the relationship between $\mathbf{V}_T \left(p \parallel q\right)$ and the 
standard KL divergence in~\eqref{eqn:3.4}.
\begin{lemma}
	\label{lemma:3}
	For any pair of probability densities $p,q$ on $[0,T]$ we have
	\begin{equation}
		\mathbf{K}_T \left(p \parallel q\right) \leq
		\sqrt{\mathbf{V}_T \left(p \parallel q\right)}
		\label{eqn:3.8}\,.
	\end{equation}
	The bound in~\eqref{eqn:3.8} results from the direct application of the 
	Cauchy-Schwarz inequality. Returning back to the formula in~\eqref{eqn:3.5} we can 
	obtain that
	\begin{equation}
		\begin{split}
			\Var{U_{T}(\mathbf{X})} &=
			\tau_{1} \bigg\{
			\mathbf{V}_T \left(p_{1} \parallel p_{2}\right) 
			\bigg. \\
			& \left.
			+ 2\log\left(\frac{\tau_{1}}{\tau_{2}}\right)
			\mathbf{K}_T \left(p_{1} \parallel p_{2}\right) +
			\log^2\left(\frac{\tau_{1}}{\tau_{2}}\right)
			\right\}
		\end{split}
		\label{eqn:3.9}\,.
	\end{equation}
	The analogous formula can be written for~\eqref{eqn:3.6}.
\end{lemma}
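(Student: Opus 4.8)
The plan is to verify the two assertions of the lemma in turn, both by elementary manipulations. For the inequality~\eqref{eqn:3.8}, the idea is to read $\mathbf{K}_T(p \parallel q)$ as an inner product in $L^2(p\,dt)$. Writing the integrand in~\eqref{eqn:3.4} as $\log(p(t)/q(t))\,p(t) = \big(\log(p(t)/q(t))\sqrt{p(t)}\big)\cdot\sqrt{p(t)}$ and applying the Cauchy--Schwarz inequality on $[0,T]$ yields
\begin{equation*}
	\mathbf{K}_T(p \parallel q) \le \int_0^T \left|\log\frac{p(t)}{q(t)}\right| p(t)\,dt
	\le \left(\int_0^T \log^2\frac{p(t)}{q(t)}\,p(t)\,dt\right)^{1/2}\left(\int_0^T p(t)\,dt\right)^{1/2}.
\end{equation*}
Since $p$ is a probability density on $[0,T]$, the last factor equals $1$ and the first equals $\sqrt{\mathbf{V}_T(p \parallel q)}$ by~\eqref{eqn:3.7}; the non-negativity of $\mathbf{K}_T$ noted just after~\eqref{eqn:3.4} then justifies removing the absolute value on the left. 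If $\mathbf{V}_T(p \parallel q) = \infty$ the bound is vacuous, so no integrability hypothesis is required, while under assumption~\eqref{eqn:3.3a} all the quantities involved are automatically finite.

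For the identity~\eqref{eqn:3.9} I would start from the variance formula~\eqref{eqn:3.5} in Lemma~\ref{lemma:2}(a) and expand the squared bracket,
\begin{equation*}
	\left\{\log\frac{p_1(t)}{p_2(t)} + \log\frac{\tau_1}{\tau_2}\right\}^{2}
	= \log^2\frac{p_1(t)}{p_2(t)} + 2\log\frac{\tau_1}{\tau_2}\,\log\frac{p_1(t)}{p_2(t)} + \log^2\frac{\tau_1}{\tau_2}\,.
\end{equation*}
Integrating term by term against $p_1(t)\,dt$ over $[0,T]$, the first term gives $\mathbf{V}_T(p_1 \parallel p_2)$ by~\eqref{eqn:3.7}, the cross term gives $2\log(\tau_1/\tau_2)\,\mathbf{K}_T(p_1 \parallel p_2)$ by~\eqref{eqn:3.4}, and the last term gives $\log^2(\tau_1/\tau_2)$ since $p_1$ integrates to $1$ on $[0,T]$. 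Multiplying the resulting sum by $\tau_1$ reproduces~\eqref{eqn:3.9} exactly; the companion expansion for~\eqref{eqn:3.6} follows by interchanging the roles of the two classes.

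There is essentially no serious obstacle in either part --- each is a one-line computation. The only point deserving a word of care is that the Cauchy--Schwarz step must be applied to $|\log(p/q)|\,p$ rather than directly to $\log(p/q)\,p$, since the latter integrand can change sign on $[0,T]$; this is handled, as above, by first bounding $\mathbf{K}_T(p \parallel q)$ by $\int_0^T |\log(p/q)|\,p\,dt$ and only afterwards invoking $\mathbf{K}_T(p \parallel q) \ge 0$ to recover the stated bound.
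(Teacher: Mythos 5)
Your proof is correct and follows the same route the paper intends: Cauchy--Schwarz applied to $\log(p/q)\sqrt{p}\cdot\sqrt{p}$ together with $\int_0^T p = 1$ gives~\eqref{eqn:3.8}, and expanding the square in~\eqref{eqn:3.5} and integrating term by term against $p_1\,dt$ gives~\eqref{eqn:3.9}. The extra care you take with the absolute value before invoking $\mathbf{K}_T \ge 0$ is a sound refinement of the same argument, not a different approach.
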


It is an interesting question to examine the behavior of the stochastic term 
$U_{T}(\mathbf{X})$ for an increasing value of the observation interval $T$. In 
particular, we wish to derive an analog of the law of large numbers, i.e., the limit 
behavior of
\begin{equation}
	\frac{1}{T}U_{T}(\mathbf{X})=
	\frac{1}{T} \int_{0}^{T} g(t) dM(t)
	\label{eqn:3.9a}
\end{equation}
as $T\to\infty$, where $g(t)$ is the log-ratio 
$\log\left(\frac{\lambda_{1}(t)}{\lambda_{2}(t)}\right)$. To give an answer to such 
questions we need to put some condition on the growth of the assumed class of intensity 
functions. Hence, suppose that there exists positive number $d$ such that
\begin{flalign}
	\assumption{2:}
	\enspace
	\frac{1}{T} \int_{0}^{T} \lambda_{i}(u)du \to d,
	\enspace
	i=1,2
	\enspace
	\text{as}
	\enspace
	T\to\infty
	\label{eqn:3.10}\,. &&
\end{flalign}
The meaning of this condition is that the average number of events from the each class 
increases linearly with $T$. It is worth noting that for  intensity functions that are
integrable on $[0,\infty)$ the condition in~\eqref{eqn:3.10} holds with $d=0$. \\
Based on the assumption $\assumption{2}$ we wish to evaluate the limit behavior of
$\Var{U_{T}(\mathbf{X})}$ as $T\to\infty$. It is clear that such
limit may not exist. Nevertheless, using the assumption $\assumption{1}$ we can find the 
upper and lower bounds for $\Var{U_{T}(\mathbf{X})}$. In fact, recalling~\eqref{eqn:3.5} 
and~\eqref{eqn:3.3a} we have that if $\mathbf{X}\in\omega_{1}$
\begin{equation}
	\begin{split}
		\Var{U_{T}(\mathbf{X})} &= 
		\tau_{1} \int_{0}^{T} \left\{
		\log\left(\frac{\lambda_{1}(t)}{\lambda_{2}(t)}\right)
		\right\}^2 p_{1}(t)dt \\
		& \leq \tau_{1} \log^2 \left(\frac{C}{\delta}\right)
	\end{split}
	\label{eqn:3.11} \,.
\end{equation}
On the other hand by~\eqref{eqn:3.9} and Lemma~\ref{lemma:3}, we get
\begin{equation*}
	\begin{split}
		\Var{U_{T}(\mathbf{X})} & \geq
		\tau_{1} \bigg\{
		\mathbf{K}_{T}^{2}\left(p_{1} \parallel p_{2}\right) 
		\bigg. \\
		& \left.
		+ 2 \log\left(\frac{\tau_{1}}{\tau_{2}}\right)
		\mathbf{K}_{T}\left(p_{1} \parallel p_{2}\right) +
		\log^{2}\left(\frac{\tau_{1}}{\tau_{2}}\right)
		\right\}
	\end{split}
	\,.
\end{equation*}
The right-hand side of this inequality is equal to \mbox{$
	\tau_{1}\left(
	\int_{0}^{T}
	\log\left(\frac{\lambda_{1}(t)}{\lambda_{2}(t)}\right)
	p_{1}(t)dt
	\right)^{2}
	$} and by~\eqref{eqn:3.3a} this is not smaller than \mbox{$
	\tau_{1}\log^{2}\left(\frac{\delta}{C}\right)
	$}. Hence, if $\mathbf{X}\in\omega_{1}$ this gives the following bounds
\begin{equation}
	\tau_{1}\log^{2}\left(\frac{\delta}{C}\right) \leq
	\Var{U_{T}(\mathbf{X})} \leq
	\tau_{1}\log^{2}\left(\frac{C}{\delta}\right)
	\label{eqn:3.11a}\,.
\end{equation}
Analogously, we can show that if $\mathbf{X}\in\omega_{2}$, then
\begin{equation}
	\tau_{2}\log^{2}\left(\frac{\delta}{C}\right) \leq
	\Var{U_{T}(\mathbf{X})} \leq
	\tau_{2}\log^{2}\left(\frac{C}{\delta}\right)
	\label{eqn:3.12}\,.
\end{equation}

The bounds in~\eqref{eqn:3.11a}, \eqref{eqn:3.12} and the assumption
in~\eqref{eqn:3.10} lead to the following limit behavior of
$\Var{U_{T}(\mathbf{X})}$.
\begin{lemma}
	\label{lemma:4}
	Let the assumptions $\assumption{1}$, $\assumption{2}$ hold. Then for
	$\mathbf{X}\in\omega_{1}$ or $\mathbf{X}\in\omega_{2}$ we have
	\begin{equation}
		\begin{split}
			d\log^{2}\left(\frac{\delta}{C}\right) 
			& \leq
			\underline{\lim}_{T\to\infty}
			\Var{\frac{1}{\sqrt{T}}U_{T}\left(\mathbf{X}\right)} \\
			& \leq
			\overline{\lim}_{T\to\infty}
			\Var{\frac{1}{\sqrt{T}}U_{T}\left(\mathbf{X}\right)} \leq
			d\log^{2}\left(\frac{C}{\delta}\right)
		\end{split}
		\label{eqn:3.12b} \,.
	\end{equation}
\end{lemma}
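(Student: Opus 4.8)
The plan is to reduce the statement directly to the non-asymptotic two-sided bounds~\eqref{eqn:3.11a} and~\eqref{eqn:3.12} already established from $\assumption{1}$ (via Lemma~\ref{lemma:2} and Lemma~\ref{lemma:3}), combined with the growth condition $\assumption{2}$. First I would record the elementary scaling identity $\Var{\frac{1}{\sqrt{T}}U_{T}(\mathbf{X})} = \frac{1}{T}\Var{U_{T}(\mathbf{X})}$, which holds because variance is quadratically homogeneous; this recasts the problem as one of controlling $\frac{1}{T}\Var{U_{T}(\mathbf{X})}$ as $T\to\infty$.

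Next I would treat the two class conditions separately. If $\mathbf{X}\in\omega_{1}$, dividing the chain of inequalities in~\eqref{eqn:3.11a} by $T$ gives
\begin{equation*}
	\frac{\tau_{1}}{T}\log^{2}\left(\frac{\delta}{C}\right)
	\leq
	\Var{\frac{1}{\sqrt{T}}U_{T}(\mathbf{X})}
	\leq
	\frac{\tau_{1}}{T}\log^{2}\left(\frac{C}{\delta}\right) .
\end{equation*}
Since $\frac{\tau_{1}}{T} = \frac{1}{T}\int_{0}^{T}\lambda_{1}(u)du$ by the definition~\eqref{eqn:2.5}, assumption $\assumption{2}$ yields $\frac{\tau_{1}}{T}\to d$ as $T\to\infty$. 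Taking the limit inferior on the left-hand bound and the limit superior on the right-hand bound then reproduces exactly~\eqref{eqn:3.12b}. If $\mathbf{X}\in\omega_{2}$, the identical argument applied to~\eqref{eqn:3.12}, using $\frac{\tau_{2}}{T}\to d$, delivers the same two-sided bound; hence the conclusion holds uniformly over both class conditions.

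I do not expect a genuine obstacle here: the substantive analytic work is already contained in the derivation of~\eqref{eqn:3.11a} and~\eqref{eqn:3.12}. The only point needing a word of care is that $\frac{1}{T}\Var{U_{T}(\mathbf{X})}$ itself need not converge (as noted just before the statement), so the conclusion must be phrased through $\underline{\lim}$ and $\overline{\lim}$ rather than an ordinary limit; the sandwich above is consistent with this precisely because both bounding sequences $\frac{\tau_{i}}{T}\log^{2}(\delta/C)$ and $\frac{\tau_{i}}{T}\log^{2}(C/\delta)$ do converge, so their limits pass straight through the inequalities.
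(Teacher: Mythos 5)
Your proposal is correct and follows essentially the same route as the paper: the paper derives Lemma~\ref{lemma:4} directly from the two-sided bounds~\eqref{eqn:3.11a} and~\eqref{eqn:3.12} together with $\assumption{2}$, exactly as you do by dividing by $T$, using $\tau_i/T \to d$, and passing to $\underline{\lim}$ and $\overline{\lim}$. No gaps.
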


The question whether the inferior and superior limits in~\eqref{eqn:3.12b} are equal
remains open. It should be noted that if~\eqref{eqn:3.10} is in the form \mbox{$
	\frac{1}{T}	\int_{0}^{T} \lambda_{i}(u)du \to d_{i}
	\enspace i=1,2
	$}, then the result of Lemma~\ref{lemma:4} holds with $d$ replaced by $d_{1}$
(if $\mathbf{X}\in\omega_{1}$) or $d_{2}$ (if $\mathbf{X}\in\omega_{2}$), respectively.
To shed some light on the result in \eqref{eqn:3.12b} let us consider the following
simple example.
\begin{example} 
	\label{example:1}
	Let us consider the classification problem with the intensity functions 
	$\lambda_1(t)$ and $\lambda_2(t) = \mu \lambda_1(t)$ for some $\mu > 0$. 
	Then we have $\tau_2 = \mu\tau_1$ and $p_2(t) = p_1(t)$. This implies that the 
	condition in \eqref{eqn:3.10} reads as \mbox{$
		\frac{1}{T}	\int_{0}^{T} \lambda_{1}(u)du \to d
		$}	and \mbox{$\frac{1}{T} \int_{0}^{T} \lambda_{2}(u)du \to \mu d$}. Then, a simple 
	algebra gives the following analog of Lemma~\ref{lemma:4}.\\
	If $\mathbf{X}\in\omega_{1}$ then
	\begin{equation}
		\lim_{T\to\infty} \Var{\frac{1}{\sqrt{T}}U_{T}\left(\mathbf{X}\right)} = 
		d \log^{2}(\mu)  
		\label{eqn:3.12c}\,,
	\end{equation}
	whereas if $\mathbf{X}\in\omega_{2}$ then
	\begin{equation}
		\lim_{T\to\infty} \Var{\frac{1}{\sqrt{T}}U_{T}\left(\mathbf{X}\right)} = 
		d \mu \log^{2}(\mu)
		\label{eqn:3.12d}\,.
	\end{equation}
	Note that the assumption $\assumption{1}$ is not required here. Also if 
	$\mu=1$ then the asymptotic constants are zero, i.e., this corresponds to the case
	$\lambda_1(t) = \lambda_2(t)$. Moreover, the asymptotic constants tend to infinity
	as $\mu \to \infty$.
\end{example}

An important consequence of Lemma~\ref{lemma:4} is the following weak law of large 
numbers for the average value of $U_{T}(\mathbf{X})$ defined in~\eqref{eqn:3.9a}.
\begin{theorem}
	\label{theorem:1}
	Let the conditions of Lemma~\ref{lemma:4} hold. Then for $\mathbf{X}$ coming 
	either from class $\omega_{1}$ or class $\omega_{2}$ we have
	\begin{equation}
		\frac{1}{T}U_{T}(\mathbf{X}) =
		\frac{1}{T} \int_{0}^{T} \log\left(
		\frac{\lambda_{1}(t)}{\lambda_{2}(t)}
		\right) dM(t) \to 0
		\enspace\enspace
		(P)
		\label{eqn:3.13}
	\end{equation}
	as $T\to\infty$.
\end{theorem}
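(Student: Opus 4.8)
The plan is to apply Chebyshev's inequality to the zero-mean random variable $U_{T}(\mathbf{X})$ and then to exploit the variance bounds already established. Recall from the discussion following~\eqref{eqn:3.3} that $\{U_{t}(\mathbf{X})\}$ is a zero-mean local martingale, so $\E{U_{T}(\mathbf{X})}=0$. Hence, for every $\varepsilon>0$,
\[
\mathbf{P}\left(\left|\tfrac{1}{T}U_{T}(\mathbf{X})\right|>\varepsilon\right)
\leq \frac{\Var{U_{T}(\mathbf{X})}}{\varepsilon^{2}T^{2}}\,.
\]
So the whole proof reduces to showing that $\Var{U_{T}(\mathbf{X})}/T^{2}\to 0$.

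Next I would bound the numerator using $\assumption{1}$, which keeps the integrand $g(t)=\log\!\left(\lambda_{1}(t)/\lambda_{2}(t)\right)$ bounded. By~\eqref{eqn:3.11a} (if $\mathbf{X}\in\omega_{1}$) and~\eqref{eqn:3.12} (if $\mathbf{X}\in\omega_{2}$) we have $\Var{U_{T}(\mathbf{X})}\leq \tau_{i}\log^{2}\!\left(C/\delta\right)$, with $i=1$ or $i=2$ according to the class of $\mathbf{X}$, where $\tau_{i}=\int_{0}^{T}\lambda_{i}(u)\,du$. Dividing by $T^{2}$ and invoking $\assumption{2}$, i.e.\ $\tau_{i}/T\to d$, gives
\[
\frac{\Var{U_{T}(\mathbf{X})}}{T^{2}}
\leq \frac{\log^{2}\!\left(C/\delta\right)}{T}\cdot\frac{\tau_{i}}{T}
\longrightarrow 0
\]
as $T\to\infty$. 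Equivalently, this step can be read off Lemma~\ref{lemma:4}: since $\Var{T^{-1/2}U_{T}(\mathbf{X})}$ stays bounded, $\Var{T^{-1}U_{T}(\mathbf{X})}=O(T^{-1})$.

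Combining the two displays, the Chebyshev bound tends to $0$ for every fixed $\varepsilon>0$, which is precisely the convergence in probability asserted in~\eqref{eqn:3.13}; in fact the argument even yields $L^{2}$ convergence. There is no genuine obstacle here — the only point requiring care is that both ingredients (the vanishing mean and the $O(T)$ variance) rest on structural facts already in place: the martingale property of $M(t)$, the boundedness of $g$ under $\assumption{1}$, and the linear growth control on $\tau_{i}$ from $\assumption{2}$, as packaged in Lemmas~\ref{lemma:2} and~\ref{lemma:4}.
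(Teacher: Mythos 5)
Your proposal is correct and follows essentially the same route as the paper: Chebyshev's inequality applied to the zero-mean integral $U_{T}(\mathbf{X})$, with the variance bound $\Var{T^{-1/2}U_{T}(\mathbf{X})}$ kept bounded via Lemma~\ref{lemma:4} (equivalently the bounds in~\eqref{eqn:3.11a}, \eqref{eqn:3.12} together with $\assumption{2}$), so that the bound is $O(1/T)$. No issues.
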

The proof of this fact is a direct application of Lemma~\ref{lemma:4} and the 
Chebyshev inequality. In fact, let us consider the case $\mathbf{X}\in\omega_{1}$. 
Then, for any $\epsilon>0$ we have
\begin{equation}
	\mathbf{P}\left(\frac{1}{T}\left|U_{T}(\mathbf{X})\right|\geq\epsilon\right) \leq
	\frac{\Var{U_{T}(\mathbf{X})}}{T^2\epsilon^2}
	\label{eqn:3.14}\,.
\end{equation}
The right-hand side of~\eqref{eqn:3.14} is equal to 
$\Var{\frac{1}{\sqrt{T}}U_{T}(\mathbf{X})}/T\epsilon^2$, where due to~\eqref{eqn:3.12b} 
the limit superior of $\Var{\frac{1}{\sqrt{T}}U_{T}(\mathbf{X})}$ is bounded by a finite
constant. This confirms the claim of Theorem~\ref{theorem:1}.

Our next goal is to strengthen the result of Theorem~\ref{theorem:1} by establishing 
the strong law of large numbers. This will result directly from the exponential 
inequality for the average of $U_{T}(\mathbf{X})$ defined in~\eqref{eqn:3.9a}. Our 
main tools here are exponential inequalities for martingales of counting processes 
established recently in~\cite{le2021exponential}, see also~\cite{van1995exponential} 
for earlier results. Hence, we employ the following adapted to our needs version 
of Theorem~5 in~\cite{le2021exponential}, see Appendix~B  for details.
\begin{lemma}
	\label{lemma:5}
	Let $N(t)$ be the counting process allowing the decomposition in~\eqref{eqn:2.9}. 
	Let $U_{T}=\int_{0}^{T}g(t)dM(t)$ be the stochastic integral of the real-valued 
	function $g(t)$ with respect to the martingale $M(t)$ increments. Suppose that
	\begin{enumerate}[(a)]
		\item $\left|g(t)\right| \leq u_{T}$ for all $t\in[0,T]$.
		\item $\int_{0}^{T}g^2(t)\lambda(t)dt \leq v_T$,
	\end{enumerate}
	where $u_T$ and $v_T$ are some finite constants. Then, for each $\epsilon>0$ we 
	have
	\begin{equation}
		\mathbf{P}\left(\left|U_{T}\right|\geq\epsilon\right) \leq
		2\exp\left[
		-\frac{\epsilon^2}{2v_{T}+u_{T}\epsilon}
		\right]
		\label{eqn:3.15}\,.
	\end{equation}
	It is worth noting that this bound holds for any finite $T$.
\end{lemma}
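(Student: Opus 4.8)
The plan is to recognize $U_T = \int_0^T g(t)\,dM(t)$ as a zero-mean local martingale (as noted in the discussion around~\eqref{eqn:3.2}) and to apply the Freedman-type / Bernstein-type exponential inequality for counting-process martingales from~\cite{le2021exponential}. The two hypotheses (a) and (b) are precisely the two inputs such bounds require: a uniform bound $u_T$ on the size of the integrand (hence on the jumps of $U_T$, since $N(t)$ jumps by $1$) and a bound $v_T$ on the predictable quadratic variation. Concretely, the predictable variation of $U_T$ is $\langle U\rangle_T = \int_0^T g^2(t)\,d\langle M\rangle_t = \int_0^T g^2(t)\lambda(t)\,dt$ because, by the Doob--Meyer decomposition recalled in Appendix~A, $\langle M\rangle_t = \int_0^t \lambda(u)\,du$ for the counting process $N$. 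Hypothesis (b) therefore states $\langle U\rangle_T \le v_T$ deterministically, and hypothesis (a) states that the jumps satisfy $|\Delta U_t| = |g(t)|\,\Delta N(t) \le u_T$.

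The key steps, in order, are: (i) record the martingale structure of $\{U_t, 0\le t\le T\}$ and identify its predictable quadratic variation as $\int_0^t g^2(u)\lambda(u)\,du$; (ii) quote Theorem~5 of~\cite{le2021exponential} in the form it is actually stated there — a one-sided bound $\mathbf{P}(U_T \ge \epsilon,\ \langle U\rangle_T \le v_T) \le \exp[-\epsilon^2/(2v_T + u_T\epsilon)]$ valid for any fixed horizon $T$ whenever the jumps are bounded by $u_T$; (iii) apply this to $U_T$ and, separately, to $-U_T$ (which has the same jump bound and the same predictable variation), then union-bound the two tail events to obtain the factor $2$ and the two-sided statement~\eqref{eqn:3.15}; (iv) observe that since the bounds $u_T$, $v_T$ in (a), (b) are assumed to hold deterministically, the event $\{\langle U\rangle_T \le v_T\}$ is almost sure and can be dropped from the probability, which yields exactly the clean form claimed. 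The remark that the inequality holds for any finite $T$ is then automatic, since no passage to the limit is used.

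The main obstacle is purely one of bookkeeping rather than of mathematics: the cited theorem in~\cite{le2021exponential} is stated under its own normalization and notation (for instance, it may be phrased in terms of a stochastic integral against a compensated counting measure, or with the predictable variation appearing as a random upper bound rather than a deterministic one), so the work is to verify that the translation "predictable quadratic variation $=\int_0^T g^2\lambda\,dt$" and "maximal jump size $=\sup_t|g(t)|$" matches the hypotheses of that theorem, and that our $g(t) = \log(\lambda_1(t)/\lambda_2(t))$ — which is indeed bounded under assumption $\assumption{1}$ by $\log(C/\delta)$ — legitimately plays the role of their integrand. Once the dictionary is fixed, steps (i)--(iv) are routine; the symmetrization in step (iii) is the only genuinely new move beyond citing the source, and it is standard. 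I would place the detailed verification of the correspondence with~\cite[Theorem~5]{le2021exponential} in Appendix~B, as the statement of the lemma already promises.
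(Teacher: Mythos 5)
Your overall strategy -- treat $U_T$ as a counting-process martingale integral with jumps bounded by $u_T$ and predictable variation $\int_0^T g^2(t)\lambda(t)\,dt \le v_T$, then invoke the exponential inequality of \cite{le2021exponential} -- is the same as the paper's. However, there is one concrete mismatch that leaves a gap in your write-up: Theorem~5 of \cite{le2021exponential}, in the form the paper uses it, is \emph{not} stated with the Bernstein-type denominator $2v_T+u_T\epsilon$; it is a Bennett-type bound, already two-sided, of the form $\mathbf{P}(|U_T|\ge\epsilon)\le 2\exp\left[-\frac{v_T}{u_T^2}J\!\left(\frac{\epsilon u_T}{v_T}\right)\right]$ with $J(x)=(1+x)\log(1+x)-x$. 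Consequently your step (ii) assumes an input that the reference does not supply, and your step (iii) (symmetrization of a one-sided bound to get the factor $2$) is solving a problem that does not arise, while the step that actually carries the mathematical content of the lemma is missing: one must convert the Bennett bound into the stated Bernstein form. The paper does this via the elementary inequality $J(x)\ge x^2/(2+x)$ (which follows from $2(x-1)/(x+1)\le\log(x)$ for $x\ge1$, i.e.\ \eqref{eqn:b2}); substituting $x=\epsilon u_T/v_T$ then gives $\frac{v_T}{u_T^2}J\!\left(\frac{\epsilon u_T}{v_T}\right)\ge \frac{\epsilon^2}{2v_T+u_T\epsilon}$, which is exactly \eqref{eqn:3.15}.

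Your identification of the predictable quadratic variation via \eqref{eqn:a7} and of the jump bound via hypothesis (a) is correct, and your alternative route would in fact be legitimate if you invoked a reference that states a one-sided Bernstein inequality with a random variation bound -- e.g.\ the martingale inequality in \cite{van1995exponential}, which the paper also cites -- and then union-bounded $U_T$ and $-U_T$. So the plan is salvageable as a genuinely different derivation, but as written it attributes to \cite[Theorem~5]{le2021exponential} a statement it does not make and omits the $J$-function comparison that the paper's proof consists of; you should either supply that comparison or switch the citation to a source that already gives the Bernstein form.
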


Lemma~\ref{lemma:5} can be directly applied for the evaluation of the stochastic 
integral in~\eqref{eqn:3.9a}. In fact, with \mbox{$
	g(t) = \log\left(\frac{\lambda_{1}(t)}{\lambda_{2}(t)}\right)
	$} and by the assumption $\assumption{1}$, we have that \mbox{$
	|g(t)| \leq \log\left(\frac{C}{\delta}\right)
	$}. Hence, the condition (a) in Lemma~\ref{lemma:5} is met with 
$u_{T}=\log\left(\frac{C}{\delta}\right)$ for all $T>0$. By virtue of the 
property~\eqref{eqn:a7} in  Appendix~A the integral in the condition (b) of
Lemma~\ref{lemma:5} reads as
\begin{equation}
	\Var{U_{T}(\mathbf{X})} =
	T \Var{\frac{1}{\sqrt{T}}U_{T}(\mathbf{X})} =
	T\theta_{T}
	\label{eqn:3.16}\,,
\end{equation}
where due to~\eqref{eqn:3.12b} the limit superior of $\theta_T$ is bounded by a finite
constant.

The preceding discussion gives the following exponential bound for the average value of
$U_{T}(\mathbf{X})$ defined in~\eqref{eqn:3.9a}. The bound is valid for any finite 
$T > 0$.
\begin{lemma}
	\label{lemma:6}
	Suppose that the assumption $\assumption{1}$ holds. Then for $\mathbf{X}$ coming 
	either from class $\omega_{1}$ or class $\omega_{2}$ and every $\epsilon>0$ we have
	\begin{equation}
		\mathbf{P}\left(
		\frac{1}{T}\left|U_{T}(\mathbf{X})\right| \geq \epsilon
		\right) \leq
		2 \exp\left[
		-T\frac{\epsilon^2}{2\theta_{T}+u\epsilon}
		\right]
		\label{eqn:3.17}\,,
	\end{equation}
	where $u=\log\left(\frac{C}{\delta}\right)$ and the factor $\theta_{T}$ is defined 
	in~\eqref{eqn:3.16}.
\end{lemma}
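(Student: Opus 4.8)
The plan is to obtain the bound as an immediate consequence of the martingale exponential inequality in Lemma~\ref{lemma:5}, applied to the stochastic integral $U_T(\mathbf{X}) = \int_0^T g(t)\,dM(t)$ with $g(t) = \log\left(\frac{\lambda_1(t)}{\lambda_2(t)}\right)$. The whole argument reduces to checking the two hypotheses of that lemma and then substituting $T\epsilon$ for $\epsilon$.

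First I would verify hypothesis (a). The assumption $\assumption{1}$ gives $\delta \leq \lambda_i(t) \leq C$ for $i=1,2$, hence $\frac{\delta}{C} \leq \frac{\lambda_1(t)}{\lambda_2(t)} \leq \frac{C}{\delta}$ and therefore $|g(t)| \leq \log\left(\frac{C}{\delta}\right) =: u$ uniformly in $t \in [0,T]$; so (a) holds with $u_T = u$ for every $T > 0$. For hypothesis (b) I would invoke the identity~\eqref{eqn:3.16}, which itself rests on the predictable quadratic variation property~\eqref{eqn:a7} of Appendix~A: the integrand appearing in condition (b) is exactly $\int_0^T g^2(t)\lambda(t)\,dt = \Var{U_T(\mathbf{X})} = T\theta_T$, where $\lambda(t)$ is the class-conditional intensity in~\eqref{eqn:2.4b}. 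Thus (b) holds with $v_T = T\theta_T$.

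With these identifications, Lemma~\ref{lemma:5} applied at level $T\epsilon$ in place of $\epsilon$ yields
\begin{equation*}
	\mathbf{P}\left(\left|U_T(\mathbf{X})\right| \geq T\epsilon\right) \leq
	2\exp\left[-\frac{(T\epsilon)^2}{2T\theta_T + uT\epsilon}\right]
	= 2\exp\left[-T\frac{\epsilon^2}{2\theta_T + u\epsilon}\right].
\end{equation*}
Since the event $\left\{\frac{1}{T}|U_T(\mathbf{X})| \geq \epsilon\right\}$ coincides with $\left\{|U_T(\mathbf{X})| \geq T\epsilon\right\}$, this is precisely~\eqref{eqn:3.17}. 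The computation is identical whether $\mathbf{X}\in\omega_1$ or $\mathbf{X}\in\omega_2$; only the intensity $\lambda(t)$ in~\eqref{eqn:2.4b}, and hence the numerical value of $\theta_T$, changes, while the bound $u$ on $|g(t)|$ is class-independent.

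There is essentially no obstacle beyond bookkeeping; the one point that deserves care is the second step, namely that the deterministic bound $v_T$ in condition (b) of Lemma~\ref{lemma:5} may be taken equal to $\Var{U_T(\mathbf{X})}$ itself. This uses the fact that for the counting-process martingale the optional and predictable variation have the same expectation, so that $\E{\int_0^T g^2\,dN} = \int_0^T g^2 \lambda\,dt = \Var{U_T(\mathbf{X})}$, which is the content of~\eqref{eqn:a7}. I would also remark, for completeness, that $\theta_T$ need not converge as $T\to\infty$, but by Lemma~\ref{lemma:4} its limit superior is finite, so the exponent grows linearly in $T$; the inequality itself, however, is valid for every finite $T>0$, as asserted.
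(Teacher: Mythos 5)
Your proposal is correct and follows essentially the same route as the paper: verify condition (a) of Lemma~\ref{lemma:5} with $u_T=\log(C/\delta)$ via $\assumption{1}$, identify $v_T=\int_0^T g^2(t)\lambda(t)\,dt=\Var{U_T(\mathbf{X})}=T\theta_T$ via~\eqref{eqn:a7} and~\eqref{eqn:3.16}, and apply the exponential inequality at level $T\epsilon$. No gaps to report.
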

The exponential bound  in~\eqref{eqn:3.17} and the Borel-Cantelli lemma yield the
following strong version of Theorem~\ref{theorem:1}. We should note, however,
that the Borel-Cantelli lemma applies to a sequence of random variables, while the
random variable $\xi_T = U_{T}(\mathbf{X})/T$ is a function of the continuous
parameter~$T$. Nevertheless, one can discretize~$\xi_T$ by finding a sequence of
times~$T_n$, such that $T_n \to \infty$ as $n \to \infty$ and then employ the standard
Borel-Cantelli lemma. We refer to~\cite{vere1982estimation} for details for such
discretization strategy.
\begin{theorem}
	\label{theorem:2}
	Let the assumptions $\assumption{1}$ and $\assumption{2}$ hold. 
	Then for $\mathbf{X}$ coming either from 
	class $\omega_{1}$ or class $\omega_{2}$ we have
	\begin{equation}
		\frac{1}{T}U_{T}(\mathbf{X}) =
		\frac{1}{T} \int_{0}^{T} \log\left(
		\frac{\lambda_{1}(t)}{\lambda_{2}(t)}
		\right) dM(t) \to 0
		\enspace\enspace
		(a.s.)
		\label{eqn:3.18}
	\end{equation}
	as $T\to\infty$.
\end{theorem}


\subsection{The Bayes Risk}
\label{subsec:03_bayes_bounds/risk}

\noindent In this section we wish to evaluate the Bayes risk defined in~\eqref{eqn:2.8}. 
Our analysis will employ the results obtained 
in Section~\ref{subsec:03_bayes_bounds/decision}. Owing 
to~\eqref{eqn:2.8} it suffices to consider the probability of misclassification
$\mathbf{P}\left(W_{T}(\mathbf{X})\geq\eta_{T}|\mathbf{X}\in\omega_{2}\right)$. The 
analysis of the probability 
$\mathbf{P}\left(W_{T}(\mathbf{X})<\eta_{T}|\mathbf{X}\in\omega_{1}\right)$ is
analogous. By virtue of~\eqref{eqn:3.1} we can write 
\begin{equation}
	\begin{split}
		& \mathbf{P}\left(W_{T}(\mathbf{X})\geq\eta_{T}|\mathbf{X}\in\omega_{2}\right) \\
		& = 
		\mathbf{P}\left(U_{T}(\mathbf{X})\geq\alpha_{T} +
		\log\left(\frac{\pi_{2}}{\pi_{1}}\right)
		|\mathbf{X}\in\omega_{2}\right)
	\end{split}
	\label{eqn:3.19}\,,
\end{equation}
where $U_{T}(\mathbf{X})$ is defined in~\eqref{eqn:3.2} and $\alpha_{T}$
(under the fact that $\mathbf{X}\in\omega_{2}$) is given by
\begin{equation}
	\alpha_{T} =
	\tau_{1} - \tau_{2} + \tau_{2}
	\log\left(\frac{\tau_{2}}{\tau_{1}}\right) +
	\tau_{2}\mathbf{K}_T(p_{2} \parallel p_{1})
	\label{eqn:3.20}\,.
\end{equation}
The first result reveals that the Bayes risk tends to zero as $T\to\infty$ under
the assumptions $\assumption{1}$ and $\assumption{2}$. 
This is the direct consequence of the weak law of large numbers 
established in Theorem~\ref{theorem:1}, see~\eqref{eqn:3.13}. \\
Hence, we have the 
following convergence result that also gives the upper bound for the Bayes risk.
\begin{theorem}
	\label{theorem:3}
	Let the assumptions $\assumption{1}$ and $\assumption{2}$ hold. Then, we have
	\begin{equation*}
		\mathbf{R}_{T}^{*} \to 0 \enspace \text{as} \enspace T\to\infty
		\,.
	\end{equation*}
	Furthermore,
	\begin{equation}
		\mathbf{R}_{T}^{*} \leq 
		\left(\pi_{1}a_{T} + \pi_{2}b_{T}\right) \frac{1}{T}
		\label{eqn:3.21} \,,
	\end{equation}
	for some finite constants $a_{T}$, $b_{T}$.
\end{theorem}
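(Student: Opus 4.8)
The plan is to control the misclassification probability $\mathbf{P}(W_T(\mathbf{X}) \geq \eta_T \mid \mathbf{X} \in \omega_2)$ (the other term being symmetric) by reducing it to a statement about the stochastic part $U_T(\mathbf{X})$ and then invoking the weak law of large numbers from Theorem~\ref{theorem:1}. First I would start from the identity~\eqref{eqn:3.19}, which rewrites the conditional misclassification probability as $\mathbf{P}(U_T(\mathbf{X}) \geq \alpha_T + \log(\pi_2/\pi_1) \mid \mathbf{X} \in \omega_2)$, with $\alpha_T$ given explicitly by~\eqref{eqn:3.20}. The key structural observation is that, conditionally on $\mathbf{X}\in\omega_2$, Lemma~\ref{lemma:1}(b) gives $\alpha_T \geq \tau_2 \mathbf{K}_T(p_2 \parallel p_1) \geq 0$, and in fact $\alpha_T$ grows at least linearly in $T$ under $\assumption{1}$ and $\assumption{2}$: writing $\alpha_T = (\tau_1-\tau_2) + \tau_2\log(\tau_2/\tau_1) + \tau_2 \mathbf{K}_T(p_2\parallel p_1)$ and using $\tau_i \asymp dT$, one checks that $\alpha_T/T$ stays bounded away from $0$ unless the two intensities asymptotically coincide — and in the degenerate case $\lambda_1 = \lambda_2$ the Bayes risk is handled trivially since then the classes are indistinguishable only in the sense that $\alpha_T$ and the fluctuation both vanish. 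The cleaner route, avoiding case analysis, is simply to bound the event $\{U_T \geq \alpha_T + \log(\pi_2/\pi_1)\}$ by $\{|U_T| \geq \alpha_T + \log(\pi_2/\pi_1)\}$ when the threshold is nonnegative (true for large $T$), and then by $\{|U_T|/T \geq (\alpha_T + \log(\pi_2/\pi_1))/T\}$.

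Next I would apply the Chebyshev bound~\eqref{eqn:3.14}, exactly as in the proof of Theorem~\ref{theorem:1}: for any fixed $\epsilon > 0$, once $T$ is large enough that $(\alpha_T + \log(\pi_2/\pi_1))/T \geq \epsilon$ fails to hold we instead use the direct estimate
\begin{equation*}
	\mathbf{P}\!\left(U_T(\mathbf{X}) \geq \alpha_T + \log\tfrac{\pi_2}{\pi_1} \,\Big|\, \mathbf{X}\in\omega_2\right)
	\leq \frac{\Var{U_T(\mathbf{X})}}{\left(\alpha_T + \log(\pi_2/\pi_1)\right)^2}
	= \frac{\theta_T}{T}\cdot\frac{T^2}{\left(\alpha_T + \log(\pi_2/\pi_1)\right)^2},
\end{equation*}
valid once the threshold is positive. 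Here $\Var{U_T(\mathbf{X})} = T\theta_T$ by~\eqref{eqn:3.16}, and $\theta_T$ has bounded limit superior by Lemma~\ref{lemma:4}. Thus the conditional misclassification probability is at most $b_T/T$ with $b_T = T\,\theta_T/(\alpha_T + \log(\pi_2/\pi_1))^2$; since $\overline{\lim}\,\theta_T < \infty$ and $\alpha_T$ is of exact order $T$ (when the classes genuinely differ), $b_T$ is a bounded — hence finite — sequence, and moreover $b_T/T \to 0$. The symmetric argument on $\mathbf{P}(W_T(\mathbf{X}) < \eta_T \mid \mathbf{X}\in\omega_1)$, now using $\alpha_T \leq -\tau_1\mathbf{K}_T(p_1\parallel p_2) \leq 0$ from Lemma~\ref{lemma:1}(a) so that the complementary threshold $-\alpha_T - \log(\pi_2/\pi_1)$ is of order $T$, produces a finite sequence $a_T$ with $\mathbf{P}(W_T(\mathbf{X}) < \eta_T \mid \mathbf{X}\in\omega_1) \leq a_T/T$. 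Plugging both into the decomposition~\eqref{eqn:2.8} of $\mathbf{R}_T^*$ yields~\eqref{eqn:3.21}, and letting $T\to\infty$ gives $\mathbf{R}_T^*\to 0$.

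The main obstacle is the degenerate regime where $\lambda_1$ and $\lambda_2$ become asymptotically indistinguishable, so that $\alpha_T = o(T)$ and the denominator in the Chebyshev bound is no longer guaranteed to dominate $T$. In that case one cannot conclude $b_T$ is bounded from the displayed bound alone. The fix is to note that when $\assumption{2}$ holds with a common limit $d$, the leading terms $\tau_1 - \tau_2 + \tau_2\log(\tau_2/\tau_1)$ are each $o(T)$ only if $\tau_1/\tau_2 \to 1$, and then $\mathbf{K}_T(p_2\parallel p_1)$ still controls the gap; if even that KL term is $o(1)$, the classes are asymptotically identical and the correct statement is that $\mathbf{R}_T^*$ converges to the (vanishing) limiting Bayes risk — which one should verify is $0$ here because the threshold $\eta_T$ in~\eqref{eqn:2.6} exactly compensates. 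I would handle this by replacing the crude Chebyshev step with the exponential bound of Lemma~\ref{lemma:6} whenever $\alpha_T$ fails to be of order $T$, since that bound gives $\mathbf{P}(|U_T|/T \geq \epsilon) \leq 2\exp(-T\epsilon^2/(2\theta_T + u\epsilon))$ for every fixed $\epsilon$, which still forces the misclassification probability to $0$; one then absorbs the resulting (exponentially small, hence certainly $O(1/T)$) bound into the constants $a_T, b_T$. This keeps the final inequality~\eqref{eqn:3.21} intact in all cases while the convergence $\mathbf{R}_T^* \to 0$ follows uniformly.
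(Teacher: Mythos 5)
Your main argument is the paper's own proof: starting from \eqref{eqn:3.19}, applying Chebyshev's inequality to $U_{T}(\mathbf{X})$ with $\Var{U_{T}(\mathbf{X})}=T\theta_{T}$ from \eqref{eqn:3.16}, controlling $\overline{\lim}\,\theta_{T}$ by Lemma~\ref{lemma:4} and $\alpha_{T}/T$ through \assumption{1}--\assumption{2}, treating the $\omega_{1}$ term symmetrically, and assembling via \eqref{eqn:2.8}. (One bookkeeping slip: with your displayed Chebyshev estimate the correct normalization is $b_{T}=T^{2}\theta_{T}/(\alpha_{T}+\kappa)^{2}$, $\kappa=\log(\pi_{2}/\pi_{1})$, as in the paper, not $T\theta_{T}/(\alpha_{T}+\kappa)^{2}$; this is immaterial to the conclusion.)

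The genuine problem is your closing paragraph. The proposed rescue of the regime $\alpha_{T}+\kappa=o(T)$ via Lemma~\ref{lemma:6} does not work: to bound the misclassification event you must take $\epsilon_{T}=(\alpha_{T}+\kappa)/T$ in the exponential inequality, so the exponent $T\epsilon_{T}^{2}/(2\theta_{T}+u\epsilon_{T})$ degenerates for exactly the same reason the Chebyshev bound does -- both require the threshold to remain of order $T$. Worse, the claim that in the asymptotically indistinguishable case the Bayes risk still vanishes ``because $\eta_{T}$ exactly compensates'' is false: take $\lambda_{1}(t)=\lambda_{2}(t)=d$ constant, so that \assumption{1} and \assumption{2} hold, yet $\mathbf{R}_{T}^{*}=\min(\pi_{1},\pi_{2})$ for every $T$. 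Hence no patch can close this case; what is really needed is a separation hypothesis ensuring $\underline{\lim}_{T\to\infty}(\alpha_{T}+\kappa)/T>0$, e.g.\ a positive lower bound on $\tau_{2}\mathbf{K}_{T}\left(p_{2}\parallel p_{1}\right)/T$ or on the asymptotic discrepancy of the intensities. To your credit, you put your finger on precisely the point where the paper itself is cavalier: in Appendix~B the passage from \eqref{eqn:b6-1}--\eqref{eqn:b6-2} to \eqref{eqn:b11} implicitly treats $\left|\alpha_{T}/T+\kappa/T\right|$ as bounded below by $d\left|\log(\delta/C)\right|$, which does not follow from those one-sided limits (for $\mathbf{X}\in\omega_{2}$ one only knows $\alpha_{T}\geq 0$, so the negative bound \eqref{eqn:b6-2} is vacuous and $\alpha_{T}/T$ may approach zero). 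The honest resolution is to add such a separation condition to the hypotheses of Theorem~\ref{theorem:3}; under it your Chebyshev argument is complete, and the degenerate-case ``fix'' should simply be deleted rather than absorbed into $a_{T},b_{T}$.
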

The proof of Theorem~\ref{theorem:3} is deferred to Appendix~B, where also the
explicit expressions for $a_{T}$ and $b_{T}$ are given. 
The bound in~\eqref{eqn:3.21} is obtained by utilizing only the second
moment of the stochastic integral $U_{T}(\mathbf{X})$
in~\eqref{eqn:3.2}.
\begin{remark}
	\label{remark:1}
	Hence under the assumptions $\assumption{1}$ and $\assumption{2}$ the Bayes risk tends
	to zero with the rate $1/T$. The proof of Theorem~\ref{theorem:3} reveals also the
	following form of the asymptotic constant
	\begin{equation}
		c_{1} = \frac{1}{d} \left(
		\frac{\log(C/\delta)}{\log(\delta/C)}
		\right)^{2}
		\label{eqn:3.22} \,.
	\end{equation}
	Hence, for large $T$ one can write $\mathbf{R}_{T}^{*} \prec c_{1}\frac{1}{T}$.
\end{remark}

By virtue of the result of Lemma~\ref{lemma:6} we can substantially improve the bound
in~\eqref{eqn:3.21}. Hence, we have the following result.
\begin{theorem}
	\label{theorem:4}
	Let the assumptions $\assumption{1}$ and $\assumption{2}$ hold. Then, we have
	\begin{equation}
		\mathbf{R}_{T}^{*}  \leq
		\pi_{1}\exp\left[-A_{T}T\right] + \pi_{2}\exp\left[-B_{T}T\right]
		\label{eqn:3.23} \,,
	\end{equation}
	for some finite constants $A_{T}$, $B_{T}$.
\end{theorem}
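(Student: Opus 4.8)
The plan is to start from the two-term decomposition of the Bayes risk in~\eqref{eqn:2.8} and to bound each conditional misclassification probability using the martingale exponential inequality of Lemma~\ref{lemma:6}. Consider first the term $\mathbf{P}(W_{T}(\mathbf{X})\geq\eta_{T}\mid\mathbf{X}\in\omega_{2})$. By~\eqref{eqn:3.19} and~\eqref{eqn:3.20} this equals $\mathbf{P}(U_{T}(\mathbf{X})\geq\beta_{T}\mid\mathbf{X}\in\omega_{2})$ with $\beta_{T}=\alpha_{T}+\log(\pi_{2}/\pi_{1})$, where by Lemma~\ref{lemma:1}(b) the bias obeys $\alpha_{T}\geq\tau_{2}\mathbf{K}_{T}(p_{2}\parallel p_{1})\geq0$. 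When $\beta_{T}\leq0$ the probability is at most $1$ and is absorbed into a (finite, possibly non-positive) $B_{T}$; otherwise $\beta_{T}>0$, and I would use $\{U_{T}(\mathbf{X})\geq\beta_{T}\}\subseteq\{\,|U_{T}(\mathbf{X})|/T\geq\beta_{T}/T\,\}$ and apply Lemma~\ref{lemma:6} with $\epsilon=\beta_{T}/T$, giving $\mathbf{P}(U_{T}(\mathbf{X})\geq\beta_{T})\leq2\exp[-T(\beta_{T}/T)^{2}/(2\theta_{T}+u\beta_{T}/T)]$, where $u=\log(C/\delta)$ by~$\assumption{1}$ and $\theta_{T}=\Var{U_{T}(\mathbf{X})/\sqrt{T}}$ as in~\eqref{eqn:3.16}. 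Collecting the exponent and absorbing the factor $2$, this has the form $\exp[-B_{T}T]$ with $B_{T}=\beta_{T}^{2}/(T^{2}(2\theta_{T}+u\beta_{T}/T))-T^{-1}\log 2$.

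The term $\mathbf{P}(W_{T}(\mathbf{X})<\eta_{T}\mid\mathbf{X}\in\omega_{1})$ is handled symmetrically: by the analogue of~\eqref{eqn:3.19} it equals $\mathbf{P}(U_{T}(\mathbf{X})<\alpha_{T}+\log(\pi_{2}/\pi_{1})\mid\mathbf{X}\in\omega_{1})$, and Lemma~\ref{lemma:1}(a) now gives $\alpha_{T}\leq-\tau_{1}\mathbf{K}_{T}(p_{1}\parallel p_{2})\leq0$, so the misclassification event is $\{-U_{T}(\mathbf{X})>|\alpha_{T}|+\log(\pi_{1}/\pi_{2})\}$; since $-U_{T}(\mathbf{X})$ is again a stochastic integral against $dM$ with the same bound $u$ on the integrand, Lemma~\ref{lemma:6} applies verbatim and yields an $\exp[-A_{T}T]$ bound with an $A_{T}$ of the same shape. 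Multiplying the two bounds by $\pi_{2}$ and $\pi_{1}$ respectively and summing produces~\eqref{eqn:3.23}. Throughout I would invoke $\assumption{1}$ to bound $g(t)=\log(\lambda_{1}(t)/\lambda_{2}(t))$ by $u$ (exactly as in the paragraph preceding Lemma~\ref{lemma:6}) and $\assumption{2}$ together with~\eqref{eqn:3.11a}--\eqref{eqn:3.12} to keep $\theta_{T}$ and $\alpha_{T}/T$ bounded, so that $A_{T},B_{T}$ stay finite; the edge case $N=0$, where the rule reduces to the deterministic comparison noted after~\eqref{eqn:2.7}, contributes nothing random and is subsumed in the bound.

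I expect the real obstacle to be not the application of Lemma~\ref{lemma:6} — which is essentially mechanical once its hypotheses are verified as in the lead-up to that lemma — but controlling the threshold $\beta_{T}$ from below so that $A_{T}$ and $B_{T}$ are bounded \emph{away} from zero and not merely finite. For this one needs the bias $\alpha_{T}$, equivalently $\int_{0}^{T}\lambda_{2}(t)\bigl[\lambda_{1}(t)/\lambda_{2}(t)-1-\log(\lambda_{1}(t)/\lambda_{2}(t))\bigr]dt$, to grow linearly in $T$, which requires the two classes to be asymptotically distinct (a non-vanishing Kullback--Leibler separation rate) on top of $\assumption{1}$ and $\assumption{2}$; under such a condition $\beta_{T}/T$ tends to a positive constant, $\theta_{T}$ stays in the band of Lemma~\ref{lemma:4}, and $A_{T},B_{T}$ converge to explicit positive constants of the form $\beta^{2}/(2\theta+u\beta)$, which strengthens the $1/T$ rate of Theorem~\ref{theorem:3} to the exponential rate~\eqref{eqn:3.23}. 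A secondary nuisance is bookkeeping the factor $2$ and the prior ratio inside the exponent without spoiling the clean statement, which is precisely why the result is phrased with $T$-dependent constants $A_{T},B_{T}$.
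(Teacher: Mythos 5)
Your proposal is correct and follows essentially the same route as the paper's Appendix~B proof: decompose the risk via~\eqref{eqn:2.8} and~\eqref{eqn:3.19}, apply the exponential martingale inequality of Lemma~\ref{lemma:6} with $\epsilon_{T}=(\alpha_{T}+\log(\pi_{2}/\pi_{1}))/T$, and use $\assumption{1}$, $\assumption{2}$ with Lemma~\ref{lemma:4} to keep $\theta_{T}$ and $\alpha_{T}/T$ bounded, yielding exponents of the form $\epsilon_{T}^{2}/(2\theta_{T}+u\epsilon_{T})$. Your extra bookkeeping (the factor $2$, the sign of the threshold, and the remark that positivity of $A_{T},B_{T}$ — as opposed to mere finiteness — requires a non-vanishing class separation) is a more careful treatment of points the paper passes over, but it does not change the argument.
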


The proof of Theorem~\ref{theorem:4} is deferred to Appendix~B, where also the
explicit expressions for $A_{T}$ and $B_{T}$ are presented.

\begin{remark}
	\label{remark:2}
	The proof of Theorem~\ref{theorem:4} shows that using the exponential inequality
	for the martingale process the Bayes risk tends to zero with the exponential
	rate and the	following asymptotic constant
	\begin{equation}
		c_{2} = d\frac{1}{3}\left(
		\frac{\log(\delta/C)}{\log(C/\delta)}
		\right)^{2}
		\label{eqn:3.24} \,,			
	\end{equation}
	where $(\delta,C)$ characterizes the assumption $\assumption{1}$, whereas $d$ appears
	in the assumption $\assumption{1}$. Hence, for large $T$ one can write \mbox{$
		\mathbf{R}_{T}^{*} \prec \exp\left[-c_{2}T\right]
		$}.
	It is also worth noting that larger $d$ in the assumption $\assumption{2}$ makes the
	bounds in~\eqref{eqn:3.21} and~\eqref{eqn:3.23} tighter. In fact, the constant~$c_{1}$
	in~\eqref{eqn:3.22} decreases with $d$, whereas the constant
	$c_{2}$ in~\eqref{eqn:3.24} increases with $d$.
\end{remark}

\begin{example} 
	\label{example:2}
	Consider the classification problem discussed in Example~\ref{example:1}. 
	Then, using the results in~\eqref{eqn:3.12c} and~\eqref{eqn:3.12d} and some algebra
	we can show the following counterpart of the result of Theorem~\ref{theorem:4}
	\begin{equation}
		\mathbf{R}_{T}^{*} \prec 
		\pi_1 \exp\left[-c_{1}(\mu)dT\right] + 
		\pi_2 \exp\left[-c_{2}(\mu)dT\right]
		\label{eqn:3.25a}\,.
	\end{equation}
	The asymptotic constants $c_1 (\mu )$, $c_2 (\mu)$ can be written in the explicit
	form and they obey the following properties
	\begin{equation*}
		\lim_{\mu \to 1} c_1 (\mu ) = \lim_{\mu \to 1} c_2 (\mu ) = 0
	\end{equation*}
	and
	\begin{equation*}
		\lim_{\mu \to \infty} c_1 (\mu ) = \lim_{\mu \to \infty} c_2 (\mu ) = \infty. 
	\end{equation*}
	The former limit corresponds to the indistinguishable case, i.e., 
	$\lambda_1(t) = \lambda_2(t)$. On the other hand, the latter limit exhibits that
	if $\mu \to \infty$ then $\mathbf{R}_{T}^{*} \to 0$. Again the assumption
	$\assumption{1}$ is not needed here.
\end{example}

\begin{remark}
	\label{remark:3}
	In~\cite{rong2021error} the following upper bound for the Bayes risk is given
	\begin{equation*}
		\mathbf{R}_{T}^{*} \leq
		\sqrt{\pi_{1}\pi_{2}} \exp\left(-\beta(T)\right)
		\,,
	\end{equation*}
	where \mbox{$
		\beta(T) = \int_{0}^{T} \left[
		\frac{1}{2} \lambda_{1}(u) + \frac{1}{2} \lambda_{2}(u) -
		\sqrt{\lambda_{1}(u)\lambda_{2}(u)}
		\right]	du
		$} is a positive factor. This is the
	classical Bhattacharya bound~\cite{devroye2013probabilistic} extended to the
	classification problem for point processes. The behavior of $\beta(T)$ under the
	condition $\assumption{2}$ is an interesting open question.
	In the special case examined in Examples~\ref{example:1},~\ref{example:2}
	we can show that $\beta(T)$ behaves asymptotically as $c(\mu )dT$, where \mbox{$
		c(\mu ) = (\mu +1)/2 - \sqrt{\mu}
		$}. Interestingly $c(\mu ) \geq  c_1 (\mu ), c_2 (\mu)$, where 
	$c_1 (\mu), c_2 (\mu )$ appear in our bound in~\eqref{eqn:3.25a}.
\end{remark}

\begin{remark}
	\label{remark:4}
	The convergence of the Bayes risk $\mathbf{R}_{T}^{*}$ to zero is determined by 
	the condition in $\assumption{2}$. This is due to the fact that the class 
	intensity functions $\lambda_{1}(t),\lambda_{2}(t)$ grow with increasing $T$. If 
	$\assumption{2}$ does not hold, e.g, if $\lambda_{1}(t),\lambda_{2}(t)$ are 
	compactly supported then the convergence of $\mathbf{R}_{T}^{*}$ to zero is 
	impossible. In this case in order to enforce the grow of 
	$\lambda_{1}(t),\lambda_{2}(t)$ one could use the multiplicative model due to 
	Aalen~\cite{aalen1978nonparametric}, i.e., we consider
	\begin{equation}
		\lambda_{i}(t)=d\gamma_{i}(t),
		\enspace
		i=1,2
		\label{eqn:3.27}\,,		
	\end{equation}
	where $\gamma_{i}(t)$\enspace$i=1,2$ are fixed functions and $d$ is a parameter 
	that is allowed to grow. It is an interesting alternative 
	to derive the results obtained in this paper under the multiplicative class 
	intensity model in~\eqref{eqn:3.27}.
\end{remark}

In the following example we give some numerical illustration of the aforementioned
results.
\begin{example}
	\label{example:3}
	Let us model the class intensities in the following form
	\begin{equation}
		\begin{split}
			\lambda(t;\phi) = {}
			& 1.6+\cos\left(\frac{\pi}{4\sqrt{3}}t+\phi\right) \\
			&+ 0.5\cos\left(\frac{\pi}{3\sqrt{2}}t+\frac{\pi}{4} + \phi\right)
		\end{split}
		\label{eqn:sim1} \,.
	\end{equation}
	Various choices of~$\phi$ define~$\lambda_{1}(t)$,~$\lambda_{2}(t)$. 
	\figurename~\ref{fig:03_bayes_bounds/risk/intensity} depicts 
	\mbox{$\lambda_{1}(t)=\lambda\left(t; \frac{\pi}{16}\right)$} and 
	\mbox{$\lambda_{2}(t)=\lambda\left(t; \frac{\pi}{2}\right)$}.
	
	\begin{figure}[!t]
		\includegraphics[width=\figurewidth]{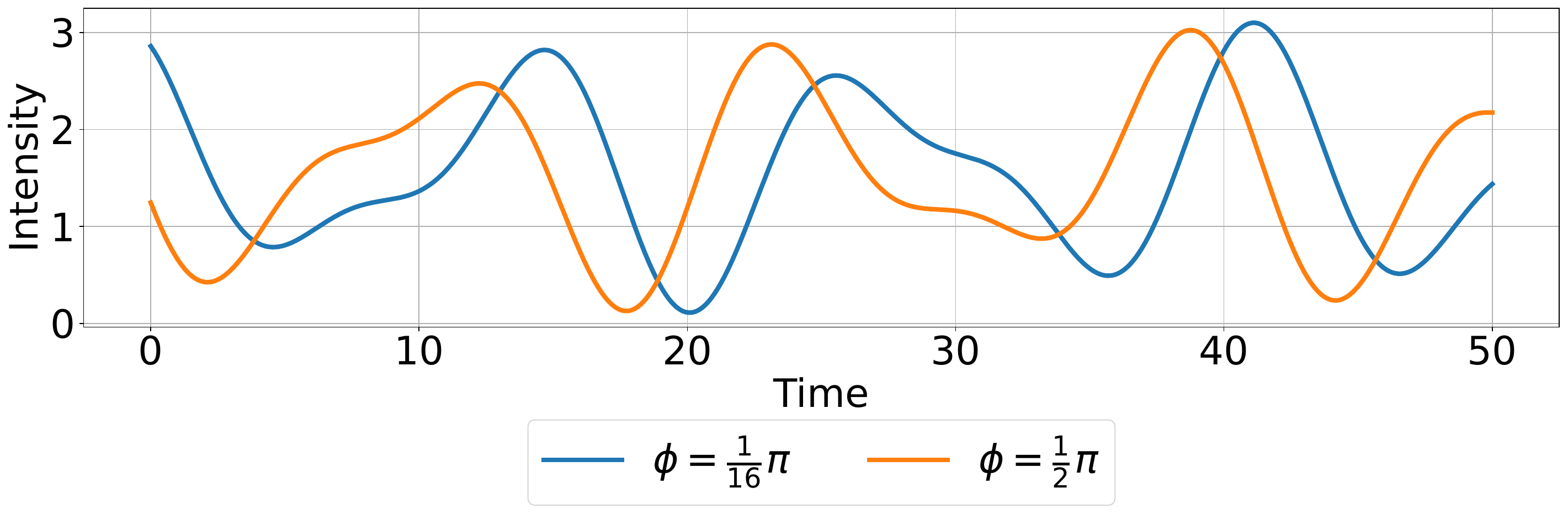}
		\centering
		\caption{Intensity functions \mbox{
				$\lambda_{1}(t)=\lambda\left(t; \frac{\pi}{16}\right)$
			} and \mbox{
				$\lambda_{2}(t)=\lambda\left(t; \frac{\pi}{2}\right)$
			}.
		}
		\label{fig:03_bayes_bounds/risk/intensity}
	\end{figure}
	
	Figure~\ref{fig:03_bayes_bounds/risk/simulations} illustrates the fact that 
	the Bayes risk tends to zero as~$T$ gets larger. The model of class intensities  
	defined in~\eqref{eqn:sim1} is parametrized by $\phi$, i.e., we set 
	\mbox{$\lambda_{1}(t)=\lambda_{1}(t;\phi_{1})$} and 
	\mbox{$\lambda_{2}(t)=\lambda_{1}(t;\phi_{2})$}. The slowest decay of 
	$\mathbf{R}_T^{*}$ is seen for very close intensities, i.e., when
	\mbox{$\phi_{2}/\phi_{1}=2$}
	(in red), whereas the fast rate of convergence is observed for distant intensities,
	i.e., when
	\mbox{$\phi_{2}/\phi_{1}=16$} (in blue). Nevertheless, since $\lambda(t;\phi)$ 
	in~\eqref{eqn:sim1} meets the assumptions $\assumption{1}$ and $\assumption{2}$ 
	we can observe the exponential rate of convergence.
	\begin{figure}[!t]
		\includegraphics[width=\figurewidth]{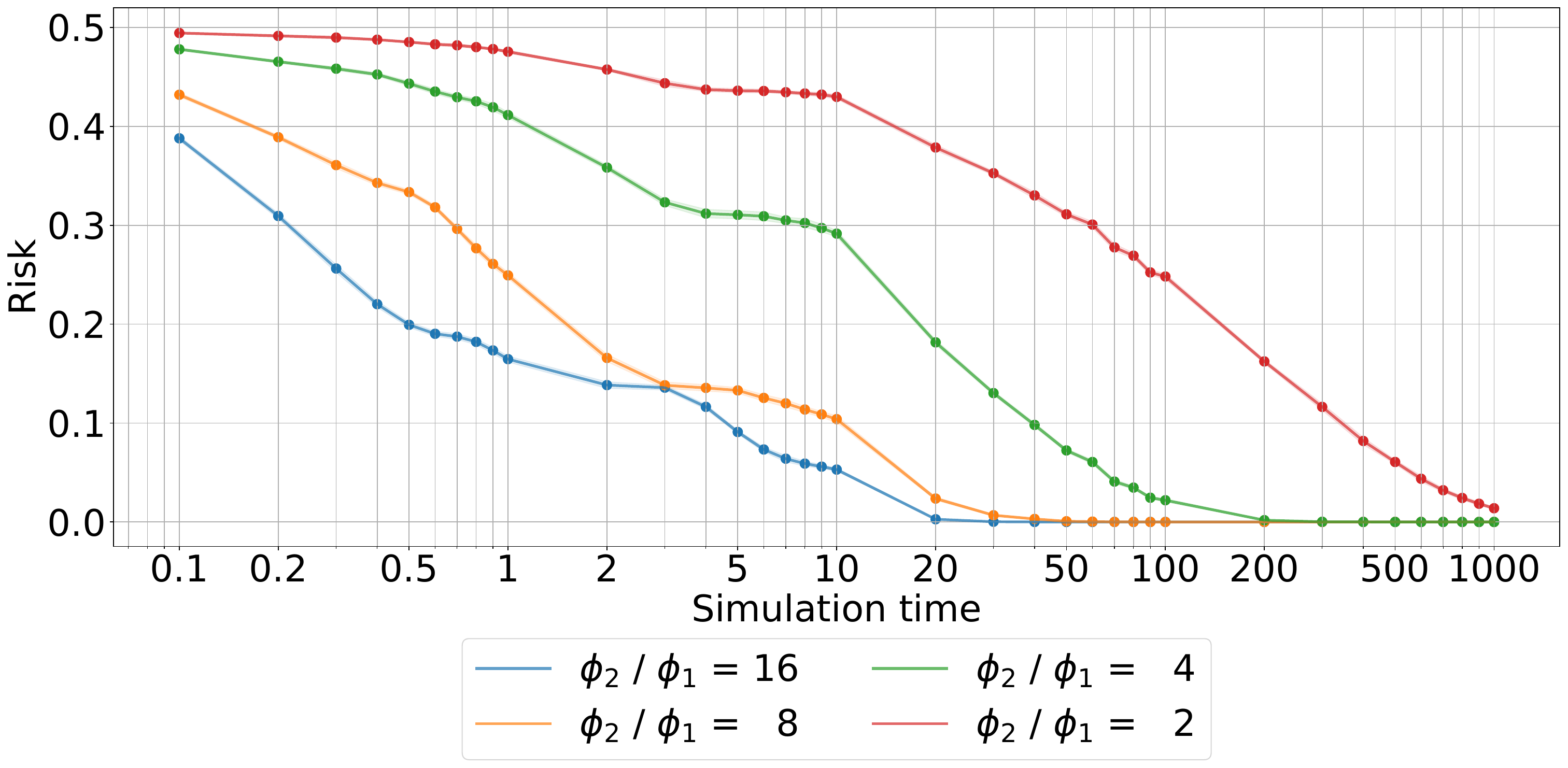}
		\centering
		\caption{The Bayes risk~$\mathbf{R}^{*}_{T}$ versus~$T$ for a two-class 
			classification problem.}
		\label{fig:03_bayes_bounds/risk/simulations}
	\end{figure}
\end{example}

\section{Nonparametric Classification Rules}
\label{sec:04_nonparametric}

\subsection{Plug-in Classifiers}
\label{subsec:04_nonparametric/plug-in}

\noindent In practice one does not know the true class intensities functions and must rely 
on some training data in order to form a data-driven classification rule. In this paper we
apply the plug-in strategy to design a classifier, i.e. the classifier that is the 
empirical counterpart of the optimal Bayes rule in~\eqref{eqn:2.4} 
or equivalently in~\eqref{eqn:2.6}.
We have already pointed out that the single-sample 
based intensity function estimate cannot be consistent unless there is a certain 
mechanism that makes the intensity function increase, e.g., the multiplicative model
in~\eqref{eqn:3.27}. In this paper we consider the intensity model based on the
increasing number of replicates of the class spiking processes. Hence, contrary to the
results of Section~\ref{sec:03_bayes_bounds} the observation interval $[0,T]$ is kept
constant.

Hence, let \mbox{
	$\mathbf{D}_{L}=\{(\mathbf{X}_{1},Y_{1}),\ldots,(\mathbf{X}_{L},Y_{L})\}$
} be the learning sequence being a sample of~$L$ independent observations of the 
labeled spiking processes $(\mathbf{X},Y)$. 
Here~$\mathbf{X}_{j}$ is the variable-length vector, i.e., \mbox{
	$\mathbf{X}_{j}=\left[t_{1}^{[j]},\ldots,t_{N^{[j]}}^{[j]};N^{[j]}\right]$
} and \mbox{$Y_{j}\in \{\omega_{1},\omega_{2}\}$}, where \mbox{$N^{[j]}=N^{[j]}(T)$}. 
Hence, all data are measured in the fixed time window~$[0,T]$. Let \mbox{
	$L_{1}$, $L_{2}$
} be the number of training data of classes~$\omega_{1}$ and~$\omega_{2}$, 
respectively. 

We wish to form the plug-in classification rule based on the optimal 
decision given in~\eqref{eqn:2.6}. This requires estimating the class intensity 
functions \mbox{$\lambda_{1}(t)$, $\lambda_{2}(t)$}, or equivalently the shape 
densities \mbox{$p_{1}(t)$, $p_{2}(t)$} and the corresponding intensity factors 
\mbox{$\tau_{1}$, $\tau_{2}$}. It is known that the prior probabilities can be 
estimated by \mbox{$\widehat{\pi}_{1}=L_{1}/L$} and 
\mbox{$\widehat{\pi}_{2}=L_{2}/L$}. In order to estimate \mbox{$
	\left\{\left(\tau_{i},p_{i}(t)\right), i=1,2\right\}
	$} one can begin with the use of the single 
sample~$\mathbf{X}_{j}$. Note that 
\mbox{$\E{N^{[j]}|Y_j=\omega_{i}}=\tau_{i}$} and one can form the 
unbiased estimate of~$\tau_{i}$ as \mbox{$\widehat{\tau}_{i}^{[j]}=N^{[j]}$}. However, 
\mbox{$\Var{N^{[j]}|Y_j=\omega_{i}}=\tau_{i}$} and this is an 
inconsistent estimate of~$\tau_{i}$. The latter fact results from the local Poisson 
behavior of the spiking process, see Appendix~A. Nevertheless, the aggregation
of~$\{\widehat{\tau}_{i}^{[j]}\}$ leads to consistent estimate of~$\tau_{i}$ for the 
increased size of the training set. Hence, let 
\begin{equation}
	\widehat{\tau}_{i}=
	\frac{1}{L_{i}}\sum_{j=1}^{L}N^{[j]}\mathbf{1}(Y_{j}=\omega_{i})
	\label{eqn:4.1}
\end{equation}
be an estimate of~$\tau_{i}$, \enspace$i=1,2$. 
In the analogous way we can deal with the problem of estimating~$p_{i}(t)$. Let 
\mbox{$\widehat{p}_{i}^{[j]}(t)$} be a certain nonparametric estimate 
of~$p_{i}(t)$ based on the single sample~$\mathbf{X}_{j}$ from the class $\omega_{i}$. 
Then, the aggregated estimate of~$p_{i}(t)$ takes the following form 
\begin{equation}
	\widehat{p}_{i}(t)=
	\frac{1}{L_{i}}\sum_{j=1}^{L}\widehat{p}_{i}^{[j]}(t)\mathbf{1}(Y_{j}=
	\omega_{i}) ,
	\enspace i=1,2
	\label{eqn:4.2} \,.
\end{equation}
Plugging~\eqref{eqn:4.1} and~\eqref{eqn:4.2} into~\eqref{eqn:2.6} gives us the 
following empirical classification rule \mbox{$\widehat{\psi}_{L,T}$}: 
classify~$\mathbf{X}=\left[t_{1},\ldots,t_{N};N\right]\in\omega_{1}$ if 
\begin{equation}
	\widehat{W}_{L,T}(\mathbf{X})\geq\widehat{\eta}_{L,T}
	\label{eqn:4.3} \,,
\end{equation}
where 
$
\widehat{W}_{L,T}(\mathbf{X})=
\sum_{i=1}^{N}\log\left(
\frac{\widehat{p}_{1}(t_{i})}{\widehat{p}_{2}(t_{i})}
\right)
$,
$
\widehat{\eta}_{L,T}=
\widehat{\tau}_{1}-\widehat{\tau}_{2}+
N\log\left(\frac{\widehat{\tau}_{2}}{\widehat{\tau}_{1}}\right)+
\log\left(\frac{L_{2}}{L_{1}}\right)
$.
In Section~\ref{subsec:04_nonparametric/kernel} we propose a concrete kernel-type 
estimate of the shape densities.

In this section we present a general result on the convergence of the 
rule~$\widehat{\psi}_{L,T}$ to the Bayes decision~$\psi_{T}^{\star}$. This result is 
in the spirit of the Bayes risk consistency theorem established 
in~\cite{greblicki1978asymptotically} in the context of the standard fixed dimension 
data sets. Let us first consider the pointwise behavior of the 
rule~$\widehat{\psi}_{L,T}$ in~\eqref{eqn:4.3}. Hence, let \mbox{$
	\mathbf{P}\left(\widehat{\psi}_{L,T}(\mathbf{x}) = \psi_{T}^{\star}(\mathbf{x})\right)
	$} be the probability that the empirical rule makes the same decisions as the optimal 
Bayes rule for a fixed test vector $\mathbf{x}$. Our first result reveals that this 
probability tends to one if the size of the training set tends to infinity.
\begin{theorem}
	\label{theorem:5}
	Suppose that  for $i = 1,2$ and $L\to\infty$ the following property holds
	\begin{equation}
		\widehat{p}_{i}(t) \to p_{i}(t)
		\enspace 
		(P)
		\enspace 
		\text{uniformly on}
		\enspace
		[0, T]
		\label{eqn:4.4}\,.
	\end{equation}
	Then,
	\begin{equation*}
		\mathbf{P}\left(\widehat{\psi}_{L,T}(\mathbf{x}) =
		\psi_{T}^{\star}(\mathbf{x})\right) \to 1
	\end{equation*}
	as $L\to\infty$. The proof of Theorem~\ref{theorem:5} is given in Appendix~C.
	This result assures that~$\widehat{\psi}_{L,T}$ converges to $\psi_{T}^{\star}$ as 
	long as one can construct uniformly consistent estimates 
	of~$p_{i}(t)$, \enspace$i = 1,2$. Clearly, the uniform convergence of estimates of 
	the class intensity functions $\lambda_{i}(t)$ also	implies the local consistency 
	result of Theorem~\ref{theorem:5}.
\end{theorem}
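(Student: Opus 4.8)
The plan is to reduce the claim to a single convergence-in-probability statement about the empirical decision statistic, together with a sign-preservation argument. Fix $\mathbf{x}=[t_1,\dots,t_N;N]$ and write the Bayes decision statistic as $\Delta(\mathbf{x})=W_T(\mathbf{x})-\eta_T$ with $W_T,\eta_T$ from \eqref{eqn:2.6}, and its plug-in analogue as $\widehat{\Delta}_L(\mathbf{x})=\widehat{W}_{L,T}(\mathbf{x})-\widehat{\eta}_{L,T}$ with $\widehat{W}_{L,T},\widehat{\eta}_{L,T}$ from \eqref{eqn:4.3}. Then $\psi_T^\star(\mathbf{x})=\omega_1$ iff $\Delta(\mathbf{x})\ge 0$ and $\widehat{\psi}_{L,T}(\mathbf{x})=\omega_1$ iff $\widehat{\Delta}_L(\mathbf{x})\ge 0$. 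Assuming $\Delta(\mathbf{x})\ne 0$ (the borderline case is addressed below), an elementary check shows that the event $\{\widehat{\psi}_{L,T}(\mathbf{x})=\psi_T^\star(\mathbf{x})\}$ contains $\{\,|\widehat{\Delta}_L(\mathbf{x})-\Delta(\mathbf{x})|<|\Delta(\mathbf{x})|\,\}$, so it is enough to prove $\widehat{\Delta}_L(\mathbf{x})\to\Delta(\mathbf{x})$ in probability as $L\to\infty$; equivalently, that $\widehat{W}_{L,T}(\mathbf{x})\to W_T(\mathbf{x})$ and $\widehat{\eta}_{L,T}\to\eta_T$ in probability.

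Next I would establish these two convergences separately. For the threshold, recall $\widehat{\eta}_{L,T}=\widehat{\tau}_1-\widehat{\tau}_2+N\log(\widehat{\tau}_2/\widehat{\tau}_1)+\log(L_2/L_1)$. Since $\mathbf{D}_L$ consists of i.i.d.\ copies of $(\mathbf{X},Y)$ and $\pi_i>0$, the law of large numbers gives $L_i/L\to\pi_i$, and since $\widehat{\tau}_i$ in \eqref{eqn:4.1} is the sample mean of the $N^{[j]}$ over the class-$\omega_i$ training points with $\E{N^{[j]}\mid Y_j=\omega_i}=\tau_i<\infty$, also $\widehat{\tau}_i\to\tau_i$ in probability, for $i=1,2$. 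As $\tau_i>0$ and $\pi_i>0$, the continuous mapping theorem yields $\widehat{\eta}_{L,T}\to\tau_1-\tau_2+N\log(\tau_2/\tau_1)+\log(\pi_2/\pi_1)=\eta_T$ in probability. For the statistic, $\widehat{W}_{L,T}(\mathbf{x})=\sum_{\ell=1}^{N}\log\bigl(\widehat{p}_1(t_\ell)/\widehat{p}_2(t_\ell)\bigr)$ is a sum of a \emph{fixed} finite number $N$ of terms; \eqref{eqn:4.4} (uniform, hence in particular pointwise, consistency — pointwise at $t_1,\dots,t_N$ would already suffice here) gives $\widehat{p}_i(t_\ell)\to p_i(t_\ell)$ in probability for each $\ell$, with $p_1(t_\ell),p_2(t_\ell)>0$ (which holds under \assumption{1} and is in any case needed for the Bayes summands to be finite). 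Since $\widehat{p}_2(t_\ell)\to p_2(t_\ell)>0$, with probability tending to one $\widehat{p}_2(t_\ell)$ stays bounded away from zero, so the continuous mapping theorem gives $\log(\widehat{p}_1(t_\ell)/\widehat{p}_2(t_\ell))\to\log(p_1(t_\ell)/p_2(t_\ell))$ in probability; adding the $N$ terms yields $\widehat{W}_{L,T}(\mathbf{x})\to W_T(\mathbf{x})$ in probability.

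Combining the two, $\widehat{\Delta}_L(\mathbf{x})\to\Delta(\mathbf{x})$ in probability, and the inclusion from the first paragraph together with $|\Delta(\mathbf{x})|>0$ gives $\mathbf{P}(\widehat{\psi}_{L,T}(\mathbf{x})=\psi_T^\star(\mathbf{x}))\to 1$. I do not expect a deep obstacle: the argument is a continuous-mapping-plus-law-of-large-numbers computation. The one point that genuinely needs care — and the only real subtlety — is the borderline configuration $\Delta(\mathbf{x})=0$, where $\psi_T^\star(\mathbf{x})=\omega_1$ by the $\ge$ convention in \eqref{eqn:4.3} but $\widehat{\Delta}_L(\mathbf{x})$ may land on either side of $0$, so the conclusion can fail at such $\mathbf{x}$; the statement is thus to be read for test vectors with $\Delta(\mathbf{x})\ne 0$, which is harmless for the later integrated Bayes-risk result since $\{\mathbf{x}:\Delta(\mathbf{x})=0\}$ is the decision boundary and has probability zero under the law of $\mathbf{X}$. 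A minor bookkeeping point is that $\widehat{p}_i$ need not be a genuine density; this is immaterial since only its values at the fixed points $t_1,\dots,t_N$ enter and positivity there holds eventually with high probability.
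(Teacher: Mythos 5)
Your proposal is correct, and its skeleton (reduce the event $\{\widehat{\psi}_{L,T}(\mathbf{x})=\psi_{T}^{\star}(\mathbf{x})\}$ to convergence in probability of $\widehat{W}_{L,T}(\mathbf{x})$ and $\widehat{\eta}_{L,T}$, then preserve the sign of $W_{T}(\mathbf{x})-\eta_{T}$) is exactly the paper's: Appendix~C takes $\epsilon<\tfrac{1}{2}\left(W_{T}(\mathbf{x})-\eta_{T}\right)$ and uses $\mathbf{P}(A\cap B)\geq\mathbf{P}(A)+\mathbf{P}(B)-1$, which is your event-containment argument in a slightly different dress; note the paper also tacitly assumes the strict inequality $W_{T}(\mathbf{x})>\eta_{T}$, so your explicit caveat about the borderline set $\{\Delta(\mathbf{x})=0\}$ matches (and makes visible) the same implicit restriction. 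Where you genuinely diverge is in how the key convergence $\widehat{W}_{L,T}(\mathbf{x})\to W_{T}(\mathbf{x})$ is obtained. The paper routes this through Lemma~\ref{lemma:7}: the difference is written as the Stieltjes integral $\int_{0}^{T}\bigl[\log(\widehat{p}_{1}/\widehat{p}_{2})-\log(p_{1}/p_{2})\bigr]dN(t)$, Helly's theorem is invoked, and assumption \assumption{1} on $[0,T]$ is used to bound the log-ratio uniformly, yielding convergence that is uniform in $\mathbf{x}$. You instead observe that for a fixed test vector the statistic is a sum of finitely many terms, so pointwise consistency of $\widehat{p}_{i}$ at $t_{1},\dots,t_{N}$ plus the continuous mapping theorem suffices; this is more elementary and shows the uniform hypothesis \eqref{eqn:4.4} is stronger than needed for the pointwise claim of Theorem~\ref{theorem:5} (you do still need $p_{i}(t_{\ell})>0$, which in the paper comes from \assumption{1} via Lemma~\ref{lemma:7}, so state that positivity assumption explicitly rather than leaving it parenthetical). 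What the paper's heavier route buys is the uniformity in $\mathbf{x}$ remarked after Lemma~\ref{lemma:7}; what yours buys is simplicity, an explicit proof of \eqref{eqn:4.7} (which the paper only asserts), and a cleaner account of why the decision boundary does not spoil the later risk consistency — though your claim that $\{\Delta(\mathbf{x})=0\}$ always has probability zero deserves a qualifier, since it fails in degenerate cases such as $\lambda_{1}=\lambda_{2}$ with $\pi_{1}=\pi_{2}$.
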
 
The proof of Theorem~\ref{theorem:5} 
reveals also that the 0-1 distance between~$\widehat{\psi}_{L,T}(\mathbf{x})$ 
and~$\psi_{T}^{\star}(\mathbf{x})$ tends to zero. Hence, we have
\begin{equation}
	\rho(
	\widehat{\psi}_{L,T}(\mathbf{x}), \psi_{T}^{\star}(\mathbf{x})
	) \to 0
	\enspace (P)
	\label{eqn:4.5}
\end{equation}
as $L\to\infty$, where 
\begin{equation*}
	\rho(
	\widehat{\psi}_{L,T}(\mathbf{x}), \psi_{T}^{\star}(\mathbf{x})
	) = \indicator{
		\widehat{\psi}_{L,T}(\mathbf{x}) \neq \psi_{T}^{\star}(\mathbf{x})
	}
	\,.
\end{equation*}

The condition in~\eqref{eqn:4.4} of Theorem~\ref{theorem:5} assures that the decision 
function~$\widehat{W}_{L,T}(\mathbf{x})$ in~\eqref{eqn:4.3} tends to the optimal 
decision function~$W_{T}(\mathbf{x})$ in~\eqref{eqn:2.6}. This is the convergence
needed in the proof of Theorem~\ref{theorem:5} and is summarized in the following lemma.
\begin{lemma}
	\label{lemma:7}
	Let the class intensities $\lambda_{1}(t),\lambda_{2}(t)$ be uniformly continuous 
	on~$[0,\infty)$ such that restricted to $[0,T]$ satisfy 
	the assumption~$\assumption{1}$. Let~\eqref{eqn:4.4} hold. Then, we have
	\begin{equation}
		\widehat{W}_{L,T}(\mathbf{x}) \to W_{T}(\mathbf{x})
		\enspace (P)
		\label{eqn:4.6}\,,
	\end{equation}
	as $L\to\infty$. 
\end{lemma}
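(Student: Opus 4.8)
The plan is to show that $\widehat{W}_{L,T}(\mathbf{x}) = \sum_{i=1}^{N}\log(\widehat{p}_1(t_i)/\widehat{p}_2(t_i))$ converges in probability to $W_T(\mathbf{x}) = \sum_{i=1}^{N}\log(p_1(t_i)/p_2(t_i))$ by bounding the difference uniformly over the (finitely many) points $t_1,\dots,t_N$ that make up the fixed vector $\mathbf{x}$. Since $\mathbf{x}$ is a fixed realization, $N$ and $t_1<\dots<t_N$ are fixed numbers, so the sum has a fixed finite number of terms; the whole issue is the convergence of each $\log(\widehat{p}_1(t_i)/\widehat{p}_2(t_i))$ to $\log(p_1(t_i)/p_2(t_i))$, made uniform so that the finite sum of the errors is controlled.

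First I would write
\begin{equation*}
\left|\widehat{W}_{L,T}(\mathbf{x}) - W_T(\mathbf{x})\right|
\leq \sum_{i=1}^{N}\left|\log\widehat{p}_1(t_i)-\log p_1(t_i)\right|
+ \sum_{i=1}^{N}\left|\log\widehat{p}_2(t_i)-\log p_2(t_i)\right|,
\end{equation*}
so it suffices to control $\sup_{t\in[0,T]}|\log\widehat{p}_i(t)-\log p_i(t)|$ for $i=1,2$. The key point is that under~$\assumption{1}$ and the representation $p_i(t)=\lambda_i(t)/\tau_i$, the true density $p_i(t)$ is bounded away from zero on $[0,T]$: indeed $\tau_i\le CT$ and $\lambda_i(t)\ge\delta$, so $p_i(t)\ge\delta/(CT)=:c_0>0$. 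Fix $\epsilon>0$. By~\eqref{eqn:4.4}, $\sup_{t\in[0,T]}|\widehat{p}_i(t)-p_i(t)|\to 0$ $(P)$, so with probability tending to one we have $\widehat{p}_i(t)\ge c_0/2$ uniformly in $t\in[0,T]$. On this event, the function $x\mapsto\log x$ is Lipschitz on $[c_0/2,\infty)$ with constant $2/c_0$, whence
\begin{equation*}
\sup_{t\in[0,T]}\left|\log\widehat{p}_i(t)-\log p_i(t)\right|
\leq \frac{2}{c_0}\sup_{t\in[0,T]}\left|\widehat{p}_i(t)-p_i(t)\right|\to 0\quad(P).
\end{equation*}
Combining this with the displayed bound above and $N$ fixed gives $|\widehat{W}_{L,T}(\mathbf{x})-W_T(\mathbf{x})|\le N\cdot\frac{2}{c_0}(\sup_t|\widehat{p}_1(t)-p_1(t)|+\sup_t|\widehat{p}_2(t)-p_2(t)|)\to 0$ $(P)$, which is~\eqref{eqn:4.6}.

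The main obstacle — and the reason the uniform convergence in~\eqref{eqn:4.4} (rather than mere pointwise convergence) is hypothesized — is precisely the passage through $\log$, which is not globally Lipschitz near $0$: one must first establish that the estimates $\widehat{p}_i$ are eventually uniformly bounded below, and this in turn needs both the quantitative lower bound $p_i(t)\ge\delta/(CT)$ coming from~$\assumption{1}$ and the uniformity of the convergence of $\widehat{p}_i$ to $p_i$. The uniform continuity of $\lambda_1,\lambda_2$ on $[0,\infty)$ stated in the hypothesis is what makes $p_1,p_2$ continuous, hence bounded and bounded away from zero on the compact interval $[0,T]$, so that the constant $c_0$ is genuinely positive; the rest is the routine Lipschitz estimate and the fact that a finite sum of quantities each converging to zero in probability converges to zero in probability.
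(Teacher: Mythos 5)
Your proof is correct, and at its core it is the same argument as the paper's: reduce everything to the uniform (in $t\in[0,T]$) convergence of $\log\left(\widehat{p}_{1}(t)/\widehat{p}_{2}(t)\right)$ to $\log\left(p_{1}(t)/p_{2}(t)\right)$, using the lower bound $p_{i}(t)=\lambda_{i}(t)/\tau_{i}\geq\delta/(CT)$ supplied by $\assumption{1}$ together with \eqref{eqn:4.4}, and then exploit the fact that the fixed test point $\mathbf{x}$ contributes only finitely many event times. The differences are in the packaging. The paper writes $\widehat{W}_{L,T}(\mathbf{x})-W_{T}(\mathbf{x})$ as a Stieltjes integral of the log-ratio difference with respect to $dN(t)$ and invokes Helly's theorem (uniform convergence of integrands against a bounded-variation integrator), whereas you observe that for a counting function this integral is just a finite sum, so no such device is needed; and to control the logarithm near zero the paper uses the elementary inequality $\log x\leq x-1$ applied to $\widehat{p}_{1}p_{2}/(\widehat{p}_{2}p_{1})$ and then bounds the resulting ratio, while you first restrict to the event $\left\{\sup_{t}\left|\widehat{p}_{i}(t)-p_{i}(t)\right|<c_{0}/2\right\}$, whose probability tends to one by \eqref{eqn:4.4}, and then use that $\log$ is Lipschitz on $[c_{0}/2,\infty)$. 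Your handling of the denominator is in fact the more careful one: the paper's bound implicitly needs $\widehat{p}_{2}(t)$ to stay above a positive level (its displayed step replaces the random denominator by $p_{1}(t)p_{2}(t)$ without comment), which is exactly what your high-probability event makes explicit. One small remark: the uniform continuity of $\lambda_{1},\lambda_{2}$ is not actually what delivers the positive lower bound $c_{0}$ — that comes from $\assumption{1}$ alone, as you in fact use — so that sentence in your closing discussion could be dropped; also note that, as in the paper, the final bound scales with $N$, so the convergence is for each fixed $\mathbf{x}$ (the paper's claim of uniformity in $\mathbf{x}$ refers to the uniform convergence of the log-ratio on $[0,T]$).
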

The proof of Lemma~\ref{lemma:7} is postponed to Appendix~C.
The convergence in~\eqref{eqn:4.6} is uniform with respect to~$\mathbf{x}$.

The classification rule~$\widehat{\psi}_{L,T}$ in~\eqref{eqn:4.3} is also characterized 
by the threshold value~$\widehat{\eta}_{L,T}$. Note that~$\pi_{1}=L_{1}/L$, 
\enspace$\pi_{2}=L_{2}/L$ are weakly consistent estimates of the prior probabilities
$\pi_{1}, \pi_{2}$. Also the aggregated estimate~$\tau_{i}$ in~\eqref{eqn:4.1} of the 
intensity factor~$\tau_{i}$ is weakly consistent. Hence, the preceding discussion gives 
the following consistency result
\begin{equation}
	\widehat{\eta}_{L,T} \to \eta_{T}
	\enspace (P)
	\label{eqn:4.7}
\end{equation}
as $L\to\infty$.\\
The local consistency of~$\widehat{\psi}_{L,T}$ leads to the global convergence 
characterized by the conditional risk. Hence, let \mbox{$
	\mathbf{P}(
	\widehat{\psi}_{L,T}(\mathbf{X}) \neq Y | \mathbf{D}_{L}
	) = \E{\indicator{
			\widehat{\psi}_{L,T}(\mathbf{X}) \neq Y
		} | \mathbf{D}_{L}}
	$} be the conditional risk associated with the rule~$\widehat{\psi}_{L,T}$.
Since \mbox{$
	\mathbf{R}_{T}^{\star} = 
	\E{\indicator{
			\psi_{T}^{\star}(\mathbf{X}) \neq Y
	}}
	$}, then one can write
\begin{equation*}
	\begin{split}
		0 & \leq \mathbf{R}_{L,T} - \mathbf{R}_{T}^{\star} \\
		& = 
		\E{
			\indicator{\widehat{\psi}_{L,T}(\mathbf{X}) \neq Y} -
			\indicator{\psi_{T}^{\star}(\mathbf{X}) \neq Y}
			| \mathbf{D}_{L}}
	\end{split}
	\,.
\end{equation*}
Recalling the definition of the distance in~\eqref{eqn:4.5} the above is bounded by
\begin{equation*}
	\E{\rho(
		\widehat{\psi}_{L,T}(\mathbf{X}), \psi_{T}^{\star}(\mathbf{X})
		) | \mathbf{D}_{L}
	}
	\,.
\end{equation*}
Owing to~\eqref{eqn:4.5} and Lebesgue's dominated convergence theorem we obtain the 
main result of this section.
\begin{theorem}
	\label{theorem:6}
	Consider the class of plug-in classifiers defined in~\eqref{eqn:4.3}. Suppose that 
	the	conditions of Theorem~\ref{theorem:5} hold. Then, we have the following Bayes
	risk consistency result
	\begin{equation}
		\mathbf{R}_{L,T} \to \mathbf{R}_{T}^{\star}
		\enspace (P)
		\label{eqn:4.8}
	\end{equation}
	as $L\to\infty$.
\end{theorem}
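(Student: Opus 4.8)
The plan is to convert the elementary risk inequality displayed just above the statement into a dominated convergence argument. Let the test pair $(\mathbf{X},Y)$ be drawn independently of the learning sequence $\mathbf{D}_{L}$, and set
\[
\Delta_{L} = \E{\rho\!\left(\widehat{\psi}_{L,T}(\mathbf{X}), \psi_{T}^{\star}(\mathbf{X})\right) \,\big|\, \mathbf{D}_{L}} .
\]
The chain of (in)equalities preceding the theorem already gives $0 \leq \mathbf{R}_{L,T} - \mathbf{R}_{T}^{\star} \leq \Delta_{L}$, so it suffices to prove that $\Delta_{L} \to 0$ in probability as $L \to \infty$.

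First I would control the unconditional mean $\E{\Delta_{L}}$. Writing $h_{L}(\mathbf{x}) = \mathbf{P}\!\left(\widehat{\psi}_{L,T}(\mathbf{x}) \neq \psi_{T}^{\star}(\mathbf{x})\right)$ for the probability over $\mathbf{D}_{L}$ with $\mathbf{x}$ held fixed, the tower property together with the independence of $(\mathbf{X},Y)$ and $\mathbf{D}_{L}$ yields $\E{\Delta_{L}} = \E{\rho\!\left(\widehat{\psi}_{L,T}(\mathbf{X}), \psi_{T}^{\star}(\mathbf{X})\right)} = \E{h_{L}(\mathbf{X})}$, the last expectation being taken with respect to the continuous-discrete law of $\mathbf{X}$ in~\eqref{eqn:2.1}--\eqref{eqn:2.2}. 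By Theorem~\ref{theorem:5} --- equivalently, since $\rho$ is $\{0,1\}$-valued, by the pointwise statement~\eqref{eqn:4.5} --- one has $h_{L}(\mathbf{x}) \to 0$ for ($\mathbf{P}_{\mathbf{X}}$-almost) every $\mathbf{x}$, while $0 \leq h_{L}(\mathbf{x}) \leq 1$. Lebesgue's dominated convergence theorem then gives $\E{\Delta_{L}} = \E{h_{L}(\mathbf{X})} \to 0$. Since $\Delta_{L} \geq 0$, Markov's inequality yields $\mathbf{P}(\Delta_{L} \geq \epsilon) \leq \E{\Delta_{L}}/\epsilon \to 0$ for every $\epsilon > 0$; hence $\Delta_{L} \to 0$ in probability, and by the sandwich above $\mathbf{R}_{L,T} \to \mathbf{R}_{T}^{\star}$ $(P)$.

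The point requiring care --- and the step I expect to be the main obstacle in a fully rigorous write-up --- is the interchange of limit and integration over the mixed continuous-discrete sample space of $\mathbf{X}$. One has to verify joint measurability of $(\mathbf{x},\mathbf{D}_{L}) \mapsto \indicator{\widehat{\psi}_{L,T}(\mathbf{x}) \neq \psi_{T}^{\star}(\mathbf{x})}$ so that $h_{L}$ is a legitimate integrand on each simplex $\mathbb{C}_{n}$, and to dispose of the boundary set $\{\mathbf{x}: W_{T}(\mathbf{x}) = \eta_{T}\}$, on which the pointwise convergence in~\eqref{eqn:4.5} can fail. Under $\assumption{1}$ this set carries zero $\mathbf{P}_{\mathbf{X}}$-mass on each slice $\{N = n\}$, $n \geq 1$ (there it is a Lebesgue-null hypersurface in $\mathbb{C}_{n}$), so the dominated convergence step is unaffected and the conclusion follows.
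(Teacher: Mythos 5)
Your argument is correct and follows essentially the same route as the paper: the sandwich $0 \leq \mathbf{R}_{L,T} - \mathbf{R}_{T}^{\star} \leq \E{\rho(\widehat{\psi}_{L,T}(\mathbf{X}), \psi_{T}^{\star}(\mathbf{X})) \mid \mathbf{D}_{L}}$ followed by dominated convergence based on the pointwise consistency~\eqref{eqn:4.5}. Your additional steps (tower property plus Markov's inequality to pass from $\E{\Delta_L}\to 0$ to convergence in probability, and the remark that the boundary set $\{W_T(\mathbf{x})=\eta_T\}$ is $\mathbf{P}_{\mathbf{X}}$-null) simply make explicit details the paper leaves implicit.
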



\subsection{Kernel Classifiers}
\label{subsec:04_nonparametric/kernel}

\noindent It is known~\cite{andersen2012statistical, diggle1988equivalence} that the 
intensity function of a point process can be efficiently estimated by a class of kernel 
methods~\cite{wand1994kernel, grebpaw08}. In particular, the standard single sample  
kernel estimate of $\lambda_{i}(t)$ takes the form
\begin{equation}
	\widehat{\lambda}_{i}^{[j]}(t)=
	\sum_{l=1}^{N^{[j]}}K_{h}\left(t-t_{l}^{[j]}\right)
	\label{eqn:4.9}\,,
\end{equation}
where the sample \mbox{
	$\mathbf{X}_{j}=\left[t_{1}^{[j]},\ldots,t_{N^{[j]}}^{[j]};N^{[j]}\right]$
} comes from the class $\omega_{i}$. 

Here \mbox{$
	K_{h}(t) = h^{-1}K(t/h)
	$}, where the kernel $K(t)$ is assumed to be a compactly supported on $[-1,1]$, symmetric 
probability density function. For instance, one can choose the so-called
Epanechnikov kernel
\begin{equation*}
	K(t) = \frac{3}{4}\left(1-t^{2}\right)\indicator{|t|\leq1}
	\,.
\end{equation*}
The crucial tuning parameter $h$ is called the bandwidth as it controls the level of
smoothing via the scaled kernel $K_{h}(t)$.

The parameter $\tau_{i}$ can be estimated (from a single sample) by \mbox{$
	\widehat{\tau}_{i}^{[j]}=N^{[j]}
	$}. Therefore~\eqref{eqn:4.9} yields the following estimate of the shape density
\begin{equation*}
	\widehat{p}_{i}^{[j]}(t) = 
	\frac{1}{N^{[j]}} \sum_{l=1}^{N^{[j]}} K_{h}\left(t-t_{l}^{[j]}\right)
	\,.
\end{equation*}
As we have already pointed in Section~\ref{subsec:04_nonparametric/plug-in} the 
estimates $\widehat{\lambda}_{i}^{[j]}(t)$,\enspace $\widehat{p}_{i}^{[j]}(t)$ cannot be
consistent by merely increasing $T$. To overcome this problem one can utilize the
observed multiple training vectors and aggregate the single-sample estimates 
$\left\{\widehat{\lambda}_{i}^{[j]}\right\}$,\enspace 
$\left\{\widehat{p}_{i}^{[j]}\right\}$. This leads to the following aggregated kernel 
estimate of $p_{i}(t)$
\begin{equation}
	\widehat{p}_{i}(t) = 
	\frac{1}{L_i} 
	\sum_{j=1}^{L} \widehat{p}_{i}^{[j]}(t) \indicator{Y_{j}=\omega_{i}}
	\label{eqn:4.10}\,.
\end{equation}
Moreover, the aggregated estimate $\widehat{\tau}_{i}$ of $\tau_{i}$ is defined 
in~\eqref{eqn:4.1}. Plugging $\widehat{p}_{i}(t)$ and $\widehat{\tau}_{i}$, 
\enspace$i=1,2$ into~\eqref{eqn:4.3} we obtain the kernel classification rule. The 
aggregated kernel estimate $\widehat{\lambda}_{i}(t)$ of $\lambda_{i}(t)$ is defined in the 
analogous way, see~\eqref{eqn:4.13}.

Theorem~\ref{theorem:5} and Theorem~\ref{theorem:6} reveal that the sufficient
condition for the Bayes risk consistency is the convergence property 
in~\eqref{eqn:4.4}. Note that the statistical behavior of
$\widehat{p}_{i}(t)$ and $\widehat{\lambda}_{i}(t)$ is the same and therefore we can 
verify the requirement in~\eqref{eqn:4.4} for the
kernel intensity estimate. Hence, with some abuse of the notation let \mbox{$
	\left\{ \mathbf{X}_{1},	\mathbf{X}_{2}, \ldots, \mathbf{X}_{L} \right\}
	$} be the data set from the fixed class ($\omega_{1}$ or $\omega_{2}$)
of the counting process $N(t)$ characterized by 
the class intensity function $\lambda(t)$. Thus, one observes the $L$ copies 
$\left\{N^{[j]}(t)\right\}$ of the counting
process $N(t)$, where $N^{[j]}(t)$ is represented by the feature vector \mbox{$
	\mathbf{X}_{j} = \left[
	t_{1}^{[j]}, \ldots, t_{N^{[j]}}^{[j]}; N^{[j]}
	\right]
	$} with $N^{[j]}=N^{[j]}(T)$. The local martingale decomposition in~\eqref{eqn:2.9} for 
$N^{[j]}(t)$ reads
\begin{equation*}
	dN^{[j]}(t) = 
	\lambda(t)dt + dM^{[j]}(t),
	\enspace j=1,\ldots,L
	\,.
\end{equation*}
This gives the analogous decomposition for the aggregated counting process, i.e., we
have
\begin{equation}
	d\overline{N}_{L}(t) = \lambda(t)dt + d\overline{M}_{L}(t)
	\label{eqn:4.11}\,,
\end{equation}
where 
\begin{equation*}
	\begin{split}
		& d\overline{N}_{L}(t) = \frac{1}{L}	\sum_{j=1}^{L} dN^{[j]}(t) \,,\\
		& d\overline{M}_{L}(t) = \frac{1}{L}	\sum_{j=1}^{L} dM^{[j]}(t) \,.
	\end{split}
\end{equation*}

It is important to note that the aggregated residual process $d\overline{M}_{L}(t)$ 
meets all the properties listed in Appendix~A. Hence, $\E{d\overline{M}_{L}(t)}=0$ and
the properties in~\eqref{eqn:a6} and~\eqref{eqn:a7} are as follows
\begin{equation}
	\begin{aligned}
		\Var{d\overline{M}_{L}(t)} &= 
		\frac{1}{L} \lambda(t)dt
		\,,\\
		\Var{\int_{0}^{T} g(u) d\overline{M}_{L}(u)} &= 
		\frac{1}{L} \int_{0}^{T} g^{2}(u) \lambda(u)du
		\,.
	\end{aligned}
	\label{eqn:4.12}
\end{equation}
The single-sample kernel estimate of $\lambda(t)$ is as in~\eqref{eqn:4.9},
whereas its aggregated version takes the form
\begin{equation}
	\widehat{\lambda}(t) = 
	\frac{1}{L} \sum_{j=1}^{L} \widehat{\lambda}^{[j]}(t)
	\label{eqn:4.13}\,.
\end{equation}
This due to~\eqref{eqn:4.11} can be written in the convenient stochastic integral form
\begin{equation}
	\widehat{\lambda}(t) = 
	\int_{0}^{T} K_{h}\left(t-s\right) d\overline{N}_{L}(s)
	\label{eqn:4.14}\,.
\end{equation}
Employing this identity along with~\eqref{eqn:4.11} and the aforementioned properties 
of $d\overline{M}_{L}(t)$ (see~\eqref{eqn:4.12})  yield the following identities
for the bias and the variance
of $\widehat{\lambda}(t)$
\begin{equation}
	\E{\widehat{\lambda}(t)} = 
	\int_{0}^{T} \frac{1}{h} K\left(\frac{t-s}{h}\right) \lambda(s)ds
	\label{eqn:4.15}\,,
\end{equation}
\begin{equation}
	\Var{\widehat{\lambda}(t)} = 
	\frac{1}{Lh}
	\int_{0}^{T} \frac{1}{h} K^{2}\left(\frac{t-s}{h}\right) \lambda(s)ds
	\label{eqn:4.16}\,.
\end{equation}
These formulas and the standard analysis developed in the context of kernel 
estimates~\cite{wand1994kernel, grebpaw08} reveal that if
\begin{equation*}
	h(L) \to 0
	\enspace \text{and}
	\enspace Lh(L) \to \infty
\end{equation*}
then
\begin{equation}
	\widehat{\lambda}(t) \to \lambda(t)
	\enspace (P)
	\enspace \text{as}
	\enspace L \to \infty
	\label{eqn:4.17}
\end{equation}
at $t\in(0,T)$ where $\lambda(t)$ is continuous. This is the pointwise convergence that 
holds at interior points of $[0,T]$. It is
known~\cite{wand1994kernel, diggle1988equivalence} that the convergence fails at the
boundary points near $t=0$, \enspace$t=T$. This enforces us to confine the required
uniform convergence to the interval $[\epsilon, T-\epsilon]$ for arbitrarily
small $\epsilon>0$. Yet another option
is to introduce the boundary modified 
kernels~\cite{diggle1988equivalence, cattaneo2020simple} that are able to restore the 
convergence property at the boundary points. 
The following lemma gives the sufficient conditions for the uniform
convergence property of the estimate $\widehat{\lambda}(t)$ in~\eqref{eqn:4.13}.
\begin{lemma}
	\label{lemma:8}
	Let~$\lambda(t)$ be  Lipschitz continuous on~$[0,\infty)$. 
	Let the kernel function $K(t)$ be Lipschitz continuous on $[-1,1]$.
	Suppose 
	that
	\begin{equation}
		h(L) \to 0
		\enspace \text{and}
		\enspace Lh^{3}(L) \to \infty
		\enspace \text{as}
		\enspace L \to \infty
		\label{eqn:4.17a}\,.
	\end{equation}
	Then for arbitrarily small $\epsilon>0$
	\begin{equation}
		\sup_{t\in\left[\epsilon, T-\epsilon\right]} \left|
		\widehat{\lambda}(t)-\lambda(t)
		\right| \to 0
		\enspace (P)
		\enspace \text{as}
		\enspace L \to \infty
		\label{eqn:4.18}\,.
	\end{equation}
\end{lemma}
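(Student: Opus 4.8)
The plan is to decompose $\widehat{\lambda}(t)-\lambda(t)$ into a deterministic bias part and a zero-mean stochastic part, to bound the bias uniformly by the Lipschitz constant of $\lambda$, and to control the stochastic part uniformly by discretizing $[\epsilon,T-\epsilon]$, applying the exponential martingale inequality of Lemma~\ref{lemma:5} at the grid points, and bounding the oscillation of the stochastic part between grid points by a deterministic Lipschitz estimate. Write $\widehat{\lambda}(t)-\lambda(t)=b_{L}(t)+Z_{L}(t)$, where $b_{L}(t)=\E{\widehat{\lambda}(t)}-\lambda(t)$ and, by~\eqref{eqn:4.14} and~\eqref{eqn:4.11}, $Z_{L}(t)=\widehat{\lambda}(t)-\E{\widehat{\lambda}(t)}=\int_{0}^{T}K_{h}(t-s)\,d\overline{M}_{L}(s)$. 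For the bias, fix $\epsilon>0$ and take $L$ large enough that $h=h(L)<\epsilon$; then for every $t\in[\epsilon,T-\epsilon]$ the support $[t-h,t+h]$ of $s\mapsto K_{h}(t-s)$ lies inside $[0,T]$, so $\int_{0}^{T}K_{h}(t-s)\,ds=1$ and, by~\eqref{eqn:4.15}, $b_{L}(t)=\int_{0}^{T}K_{h}(t-s)\big(\lambda(s)-\lambda(t)\big)\,ds$. The Lipschitz continuity of $\lambda$ then gives $|b_{L}(t)|\le M_{\lambda}\int_{0}^{T}K_{h}(t-s)|s-t|\,ds\le M_{\lambda}h$, hence $\sup_{t\in[\epsilon,T-\epsilon]}|b_{L}(t)|\le M_{\lambda}h\to 0$.

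For the stochastic part, introduce a grid $\epsilon=t_{1}<\dots<t_{m}=T-\epsilon$ of spacing $\Delta=\Delta_{L}$, so that $m$ is of order $\Delta^{-1}$, and use
\[
\sup_{t\in[\epsilon,T-\epsilon]}|Z_{L}(t)|\le \max_{1\le k\le m}|Z_{L}(t_{k})|+\sup_{|t-t'|\le\Delta}|Z_{L}(t)-Z_{L}(t')|.
\]
For the oscillation term, note that, given the data, $Z_{L}$ is a deterministic function and that $d\overline{M}_{L}(s)=d\overline{N}_{L}(s)-\lambda(s)\,ds$, so the total variation of $\overline{M}_{L}$ as a signed measure is dominated by $d\overline{N}_{L}(s)+\lambda(s)\,ds$, which has total mass $\overline{N}_{L}(T)+\tau$ with $\tau=\int_{0}^{T}\lambda(u)\,du$. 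Combining this with the Lipschitz bound $|K_{h}(t-s)-K_{h}(t'-s)|\le M_{K}h^{-2}|t-t'|$ (valid after extending $K$ by zero off $[-1,1]$) yields $|Z_{L}(t)-Z_{L}(t')|\le M_{K}h^{-2}|t-t'|\big(\overline{N}_{L}(T)+\tau\big)$. Since $\E{\overline{N}_{L}(T)}=\tau$ and $\Var{\overline{N}_{L}(T)}=\tau/L\to 0$, we have $\overline{N}_{L}(T)+\tau=O_{P}(1)$, so the oscillation term is $O_{P}(\Delta h^{-2})$.

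For the grid maximum, apply Lemma~\ref{lemma:5} to the pooled counting process $\sum_{j=1}^{L}N^{[j]}(t)$, which is again a counting process with intensity $L\lambda(t)$ and martingale part $L\overline{M}_{L}(t)$, taking as integrand $g(s)=L^{-1}K_{h}(t_{k}-s)$, so that the corresponding stochastic integral equals $Z_{L}(t_{k})$. Condition (a) of Lemma~\ref{lemma:5} holds with $u_{T}=L^{-1}h^{-1}\|K\|_{\infty}$, and condition (b) with $v_{T}=L^{-1}\int_{0}^{T}K_{h}^{2}(t_{k}-s)\lambda(s)\,ds\le L^{-1}h^{-1}\|K\|_{2}^{2}C_{T}$, where $C_{T}=\sup_{[0,T]}\lambda<\infty$ by continuity. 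Lemma~\ref{lemma:5} then gives, for every $\eta>0$,
\[
\mathbf{P}\big(|Z_{L}(t_{k})|\ge\eta\big)\le 2\exp\left[-\frac{Lh\,\eta^{2}}{2\|K\|_{2}^{2}C_{T}+\|K\|_{\infty}\eta}\right],
\]
and a union bound yields $\mathbf{P}\big(\max_{k}|Z_{L}(t_{k})|\ge\eta\big)\le 2m\exp[-c_{\eta}Lh]$ for a constant $c_{\eta}>0$.

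It then remains to choose $\Delta_{L}$. Taking $\Delta_{L}=h^{3}$ makes the oscillation term $O_{P}(h)\to 0$, while $m$ is of order $h^{-3}$ and $2m\exp[-c_{\eta}Lh]\le 2\exp[-c_{\eta}Lh+3\log(1/h)]\to 0$, because $Lh/\log(1/h)\to\infty$: this follows from the assumption $Lh^{3}\to\infty$ together with $h^{2}\log(1/h)\to 0$, since $Lh=(Lh^{3})\,h^{-2}$. Collecting the bias, grid-maximum and oscillation bounds gives~\eqref{eqn:4.18}. The main obstacle is exactly this last balancing act: to suppress the oscillation of $Z_{L}$ one needs a grid fine enough to offset its modulus of continuity, which blows up like $h^{-2}$, yet a union bound over such a fine grid costs a $\log(1/h)$ factor in the exponent of the martingale tail bound, which forces the stronger bandwidth condition $Lh^{3}\to\infty$ in place of the $Lh\to\infty$ that suffices for the pointwise convergence~\eqref{eqn:4.17}. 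A secondary, routine point is the transfer of the single-process exponential inequality of Lemma~\ref{lemma:5} to the aggregated martingale $\overline{M}_{L}$, which is immediate once one observes that $\sum_{j}N^{[j]}$ admits the same martingale decomposition.
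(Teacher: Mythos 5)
Your proof is correct, and it follows the same overall architecture as the paper's Appendix~D argument: bias bounded by $M_{\lambda}h$ via the Lipschitz condition, then a grid of spacing $\Delta$ with the supremum of the centered part split into an oscillation term (controlled by the Lipschitz constant of $K$, giving $O_P(\Delta h^{-2})$ via the total mass $\overline{N}_{L}(T)+\tau$ of the dominating measure, exactly as in the paper's terms $A_1,A_2$) plus a maximum over grid points. Where you genuinely diverge is the grid maximum: the paper bounds $\mathbf{P}\bigl(|\widehat{\lambda}(u_j)-\E{\widehat{\lambda}(u_j)}|\geq\delta\bigr)$ by Chebyshev using \eqref{eqn:4.12}, gets a polynomial tail $O(1/(Lh))$, and must balance $A_3=\mathcal{O}_P\bigl(1/\sqrt{\Delta Lh}\bigr)$ against $\Delta/h^{2}$ (choosing $\Delta=h/L^{1/3}$), which is what makes $Lh^{3}\to\infty$ genuinely necessary for that argument; you instead invoke the exponential martingale inequality (Lemma~\ref{lemma:5} applied to the pooled process with intensity $L\lambda$), so the union bound over $\sim h^{-3}$ grid points costs only a $\log(1/h)$ term in the exponent. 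Your route is sharper: as you implicitly show, it works whenever $Lh/\log(1/h)\to\infty$, which is precisely the weaker condition the paper only conjectures after Lemma~\ref{lemma:8}; your closing remark that the union bound ``forces'' $Lh^{3}\to\infty$ therefore undersells your own argument -- $Lh^{3}\to\infty$ is simply the stated hypothesis, which implies what you actually need. Two small points to tidy: the Lipschitz bound $|K_h(t-s)-K_h(t'-s)|\leq M_K h^{-2}|t-t'|$ after extending $K$ by zero requires $K(\pm1)=0$ (otherwise the extension has jumps); the paper's proof has the same implicit assumption, so this is not a gap relative to it. And the superposition step (that $\sum_j N^{[j]}$ is a counting process with intensity $L\lambda$ and martingale part $L\overline{M}_L$) is routine but worth a sentence, as you note.
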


It is worth noting that the uniform convergence holds under the condition 
\mbox{$Lh^{3}(L)\to\infty$}. This is the stronger restriction than the one required
for the pointwise
convergence, where one needs that \mbox{$Lh(L)\to\infty$}. We conjecture 
that~\eqref{eqn:4.17a} can be replaced by the weaker condition 
\mbox{$Lh(L)/\log(L)\to\infty$} This is the case for the uniform convergence of the 
kernel density estimate where advanced tools from the empirical processes theory have 
been utilized~\cite{gine2002rates,masry1996multivariate}. Our proof is based on more 
elementary techniques. The proof of Lemma~\ref{lemma:8} is given in Appendix~D. The
result of Lemma~\ref{lemma:8} applies directly to the shape densities and by using 
Theorem~\ref{theorem:5} and Theorem~\ref{theorem:6} we can formulate the following Bayes 
risk consistency result for the kernel classifier.
\begin{theorem}
	\label{theorem:7}
	Let the class intensities $\lambda_{1}(t)$, $\lambda_{2}(t)$
	satisfy the conditions of Lemma~\ref{lemma:7}. 
	If~\eqref{eqn:4.17a} holds, then the kernel classification rule is Bayes risk 
	consistent, i.e., we have
	\begin{equation*}
		\mathbf{R}_{L,T} \to \mathbf{R}_{T}^{\star}
		\enspace (P)
	\end{equation*}
	as $L\to\infty$.
\end{theorem}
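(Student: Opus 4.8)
The plan is to derive Theorem~\ref{theorem:7} by stacking the three preceding results: Lemma~\ref{lemma:8} (uniform consistency of the aggregated kernel intensity estimate), Theorem~\ref{theorem:5} (pointwise agreement of the plug-in rule with the Bayes rule), and Theorem~\ref{theorem:6} (Bayes risk consistency). Since Theorem~\ref{theorem:6} already reduces Bayes risk consistency to the hypotheses of Theorem~\ref{theorem:5}, and the only substantive hypothesis there is the uniform-in-probability convergence~\eqref{eqn:4.4} of the aggregated shape-density estimates $\widehat{p}_{i}(t)$, the whole proof amounts to verifying~\eqref{eqn:4.4} under the bandwidth restriction~\eqref{eqn:4.17a}.

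First I would check that the hypotheses of Lemma~\ref{lemma:8} are met. The class intensities $\lambda_{1},\lambda_{2}$ are assumed Lipschitz on $[0,\infty)$ (this also supplies the uniform continuity demanded by Lemma~\ref{lemma:7}), the kernel $K$---e.g.\ the Epanechnikov kernel, whose derivative is bounded on $[-1,1]$---is Lipschitz on $[-1,1]$, and~\eqref{eqn:4.17a} is in force. Lemma~\ref{lemma:8} then gives, for every $\epsilon>0$ and $i=1,2$,
\begin{equation*}
	\sup_{t\in[\epsilon,\,T-\epsilon]}\bigl|\widehat{\lambda}_{i}(t)-\lambda_{i}(t)\bigr|\to 0
	\quad(P)\quad\text{as}\quad L\to\infty.
\end{equation*}
Because the aggregated shape-density estimate in~\eqref{eqn:4.10} is assembled from the single-sample estimates in exactly the same way as $\widehat{\lambda}_{i}$, with $\tau_{i}p_{i}$ playing the role of $\lambda_{i}$ (as noted in the text preceding Lemma~\ref{lemma:8}, where the bias and variance identities~\eqref{eqn:4.15}--\eqref{eqn:4.16} are recorded), the argument of Lemma~\ref{lemma:8} transfers verbatim and yields $\sup_{t\in[\epsilon,T-\epsilon]}|\widehat{p}_{i}(t)-p_{i}(t)|\to 0$ $(P)$; one may equivalently combine the uniform convergence of $\widehat{\lambda}_{i}$ with the weak consistency $\widehat{\tau}_{i}\to\tau_{i}$ established in Section~\ref{subsec:04_nonparametric/plug-in}, using $\tau_{i}=\int_{0}^{T}\lambda_{i}(u)\,du\ge\delta T>0$ from $\assumption{1}$ to control the ratio.

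The main obstacle is closing the gap between ``uniformly on $[\epsilon,T-\epsilon]$'', which is all Lemma~\ref{lemma:8} provides, and ``uniformly on $[0,T]$'', which~\eqref{eqn:4.4} literally requires. I would resolve it in one of two ways. The cleaner route is to use boundary-corrected kernels near $t=0$ and $t=T$, as indicated in the discussion following~\eqref{eqn:4.18}: the bias and variance bounds~\eqref{eqn:4.15}--\eqref{eqn:4.16} then persist, up to constants, on all of $[0,T]$, Lemma~\ref{lemma:8} upgrades to $\sup_{t\in[0,T]}|\widehat{p}_{i}(t)-p_{i}(t)|\to 0$ $(P)$, condition~\eqref{eqn:4.4} holds outright, and Theorem~\ref{theorem:6} applies directly to give $\mathbf{R}_{L,T}\to\mathbf{R}_{T}^{\star}$ $(P)$. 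Without boundary correction one instead shows the endpoint strips are asymptotically negligible for the risk: following the decomposition used in the proof of Lemma~\ref{lemma:7}, split $\widehat{W}_{L,T}(\mathbf{X})-W_{T}(\mathbf{X})$ into the part coming from event times in $[\epsilon,T-\epsilon]$, which tends to zero in probability by the uniform convergence just established, and the part coming from event times in $[0,\epsilon]\cup[T-\epsilon,T]$, whose expected count is at most $2C\epsilon$ uniformly in $L$ by $\assumption{1}$; after trimming $\widehat{p}_{i}$ away from zero so that the log-ratio stays bounded near the endpoints, this second part perturbs the conditional misclassification probability by a quantity of order $\epsilon$, uniformly in $L$. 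Passing to the limit $L\to\infty$ and then letting $\epsilon\downarrow 0$ in the bound for $\mathbf{R}_{L,T}-\mathbf{R}_{T}^{\star}$ obtained in the proof of Theorem~\ref{theorem:6} then completes the argument.
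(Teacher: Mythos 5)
Your proposal is correct and follows essentially the same route as the paper, which obtains Theorem~\ref{theorem:7} by applying Lemma~\ref{lemma:8} to the shape densities to verify the uniform convergence condition~\eqref{eqn:4.4} and then invoking Theorem~\ref{theorem:5} and Theorem~\ref{theorem:6}. The only difference is that you treat the boundary issue (uniform convergence on $[\epsilon,T-\epsilon]$ versus $[0,T]$) explicitly via boundary-corrected kernels or a negligible-strip argument, whereas the paper leaves this point implicit after its remark following~\eqref{eqn:4.18} --- a welcome extra degree of care rather than a departure in method.
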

The convergence in Theorem~\ref{theorem:7} is an important property of the kernel 
classifier. Nevertheless, the issue of the rate of convergence would also be essential. 
This question is left for further research.

The selection of the bandwidth $h$ is the most important issue in determining the 
the finite sample accuracy of the kernel classification rule. The standard analysis
applied to the
expression in~\eqref{eqn:4.14} and~\eqref{eqn:4.15} shows that if $\lambda(t)$ has two 
continuous derivatives for $t\in(0,T)$ then
\begin{equation*}
	\E{\widehat{\lambda}(t)} = \lambda(t) + \mathcal{O}\left(h^{2}\right)
\end{equation*}
and
\begin{equation*}
	\Var{\widehat{\lambda}(t)} = \mathcal{O}\left(\frac{1}{Lh}\right)
	\,.
\end{equation*}
This leads to the following asymptotical formula for the mean squared error
\begin{equation*}
	\E{\widehat{\lambda}(t)-\lambda(t)}^{2} = 
	\mathcal{O}\left(\frac{1}{Lh}\right) + \mathcal{O}\left(h^{4}\right),
	\enspace t\in(0,T)
	\,.
\end{equation*}
The minimum of the error yields the asymptotically optimal choice of the bandwidth, i.e., 
$h^{\star}=cL^{-1/5}$ for some positive constant $c$. This is the asymptotically 
optimal choice of $h$ that optimizes the kernel intensity estimate. 
An optimal bandwidth for the kernel
classifier may be quite different as it is seen from the restriction
in~\eqref{eqn:4.17a}. See also~\cite{audibert2007fast}
for the general theory of plug-in nonparametric classifiers.\\
In practical applications one can specify the bandwidth using some resampling techniques
like cross-validation~\cite{diggle1988equivalence, wand1994kernel}. In our experimental
studies we choose separate bandwidth for each class. This is done by finding the 
maximum of the cross-validated log-likelihood of the kernel estimate of the shape 
densities. 
Hence, let $\widehat{p}_{i}(t;h)$ be the kernel estimate in~\eqref{eqn:4.10} specified 
by the bandwidth $h$. Then, the likelihood function of $\widehat{p}_{i}(t;h)$
specified by test data is given by
\begin{equation}
	\mathbf{CV}(h) = \prod_{l=1}^{p}\prod_{r=1}^{N^{[l]}}
	\widetilde{p}_{i}(t_{r}^{[l]};h)
	\label{eqn:4.19} \,,
\end{equation}
where $t_{r}^{[l]}$ represents the $r$-th observation of the $l$-th test sample.
We use the test sample of size $q$ (per class). Also $\widetilde{p}_{i}(t;h)$ is
the version of $\widehat{p}_{i}(t;h)$ in~\eqref{eqn:4.10} determined from
the $L_{i}-q$ size training set. Then, the bandwidth is selected as the one that
maximizes $\mathbf{CV}(h)$ in~\eqref{eqn:4.19}. This is equivalent to the following
choice
\begin{equation*}
	\widehat{h}_i =
	\argmax_h \sum_{l=1}^{p}\sum_{r=1}^{N^{[l]}} 
	\log \left(\widetilde{p}_{i}(t_{r}^{[l]};h)\right)
	\,.
\end{equation*}

\section{Simulation Results}
\label{sec:05_simulations}

\noindent In order to assess the proposed methodology, we conduct a simulated data study.
We limit the scope of our experiments to time-dependent intensity functions defined
in~\eqref{eqn:sim1}, and use these in simulations in order to gain insight into the
behavior of $\mathbf{R}_{L,T}$ with respect to the training set size $L$ and the 
observation window size $T$. 

In all experiments the kernel classifier is given by~\eqref{eqn:4.3} with the estimated 
$\widehat{\tau}_i$,\enspace $\widehat{p}_i(t;h)$ specified by~\eqref{eqn:4.1} 
and~\eqref{eqn:4.10}, respectively. The Gaussian kernel is employed,
whereas the bandwidth is selected by the log-likelihood method in~\eqref{eqn:4.19}. 
When selecting the bandwidth, we consider a grid of ten evenly logarithmically spaced 
points $h\in\langle 10^{-1}, 10^{1}\rangle$. Additionally, we employ a 5-fold cross 
validation in order to avoid biasing the selected bandwidth $\widehat{h}_i$ with the test 
data. Finally, we denote $\E{\mathbf{R}_{L,T}}$ as an empirically evaluated risk 
averaged over ten simulation runs with a testing set size of~$10^{4}$.

We shall focus on the simulation results obtained for the intensity 
function specified by~\eqref{eqn:sim1}. Unless noted otherwise, we refer to the 
intensity function pair parametrized by $\phi_{1}=\pi/16$ and $\phi_{2}=\pi/4$.

Figure~\ref{fig:05_simulations/sims/time} depicts the average risk versus $T$ for the 
size of training data ranging from $L=10$ to $L=200$. The Bayes risk 
$\mathbf{R}^{*}_{T}$ is also plotted for comparison. The convergence of  
$\mathbb{E}[\mathbf{R}_{L,T}]$ to zero analogous as it was observed for the Bayes risk
(see \figurename~\ref{fig:03_bayes_bounds/risk/simulations}) is seen.
Also the small value
of the difference $\mathbb{E}[\mathbf{R}_{L,T}] - \mathbf{R}^{*}_{T}$ for all  $T$ 
should be noted. We also observe the small variability of the risk with respect to the 
training data size $L$. The vertical dashed line at $T=10$ denotes the simulation space 
slice in subsequent analysis, i.e., with the value of $T$ fixed.

\begin{figure}[!t]
	\includegraphics[width=\figurewidth]{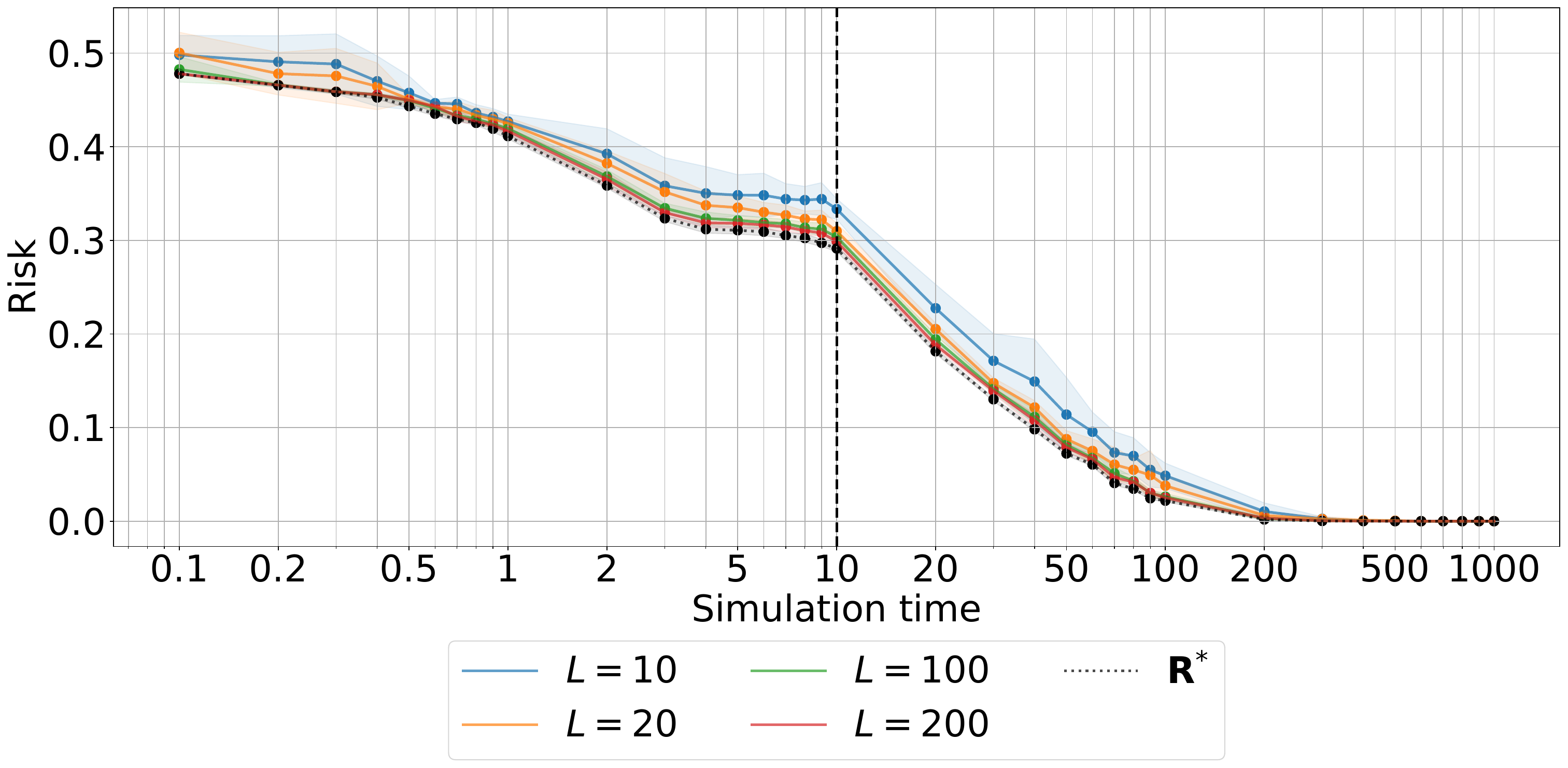}
	\centering
	\caption{
		The average risk $\E{\mathbf{R}_{L,T}}$ versus $T$ for different values of $L$. 
		The vertical dashed line at $T=10$ denotes the simulation space slice presented 
		in \figurename~\ref{fig:05_simulations/sims/bandwidth}b-\ref{fig:05_simulations/sims/examples}.
	}
	\label{fig:05_simulations/sims/time}
\end{figure}

Next, we analyze the value of the optimal bandwidth selected according to the 
log-likelihood method versus $T$. For brevity, in 
\figurename~\ref{fig:05_simulations/sims/bandwidth}a we show only the results for 
$\widehat{h}_{1}$, noting that the curves obtained for $\widehat{h}_{2}$ are analogous. 
We observe an increase in $\widehat{h}_{1}$ with $T$, which aligns with the notion that 
as the observation window increases, the distribution of events in time becomes 
sparser, yielding the larger bandwidth. On the other hand, the obtained results also show 
that $h(L)\to0$ as $L$ increases. Another way to view this property is to analyze the 
model log-likelihood versus $h$ for fixed $T$ 
(\figurename~\ref{fig:05_simulations/sims/bandwidth}b). 

\begin{figure}[!t]
	\includegraphics[width=\figurewidth]{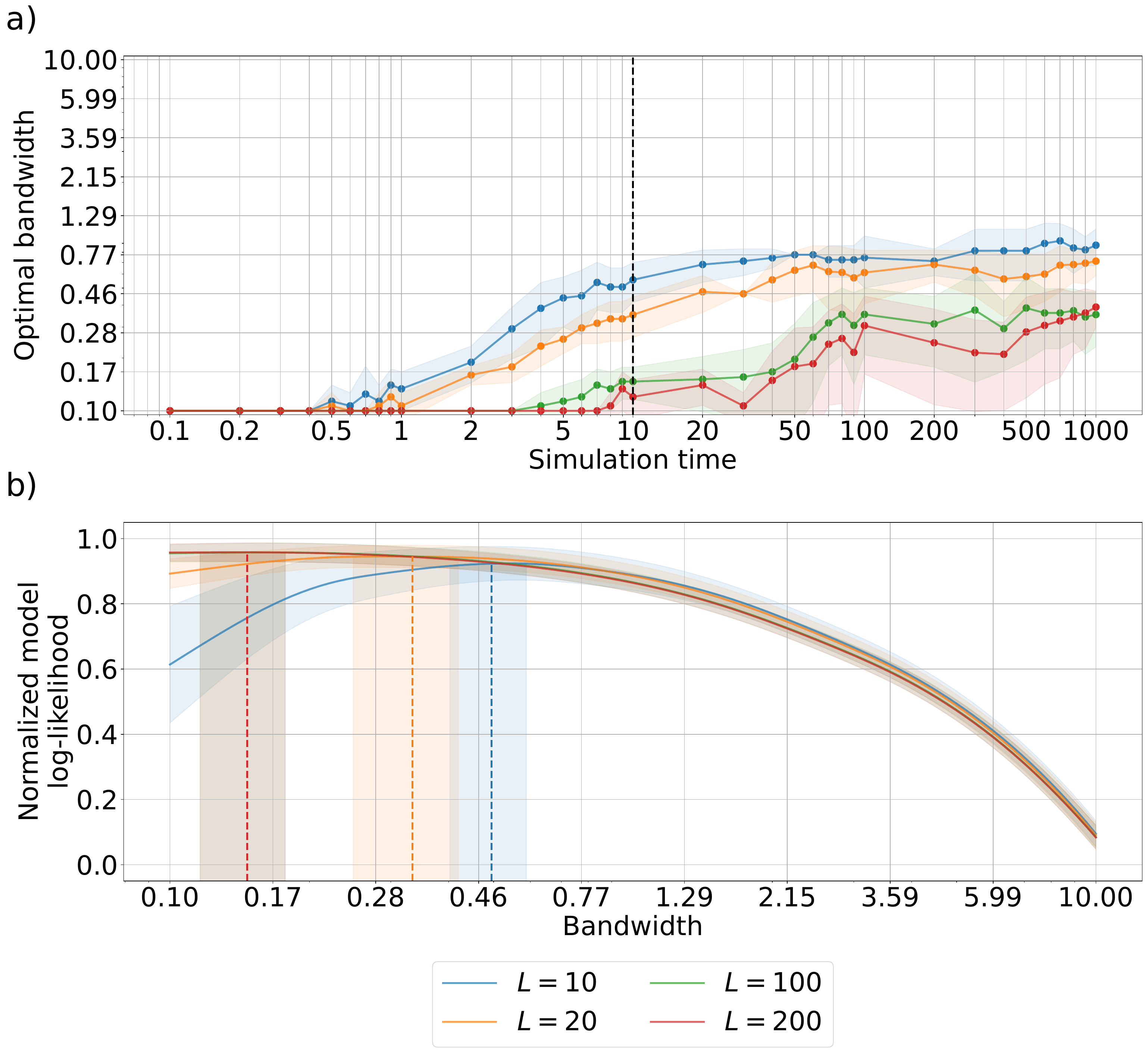}
	\centering
	\caption{
		a)~The average empirical bandwidth $\E{\widehat{h}_{1}}$ versus $T$ for different 
		values of $L$. The vertical dashed 
		line at $T=10$ denotes the simulation space slice presented in the lower 
		subfigure.
		b)~The average normalized model log-likelihood on test data versus $h$ for 
		different values of $L$. The vertical dashed lines denote function maxima. Note 
		that the curves for $L=100$ and $L=200$ overlap.
	}
	\label{fig:05_simulations/sims/bandwidth}
\end{figure}

Finally, \figurename~\ref{fig:05_simulations/sims/examples} shows the convergence of the 
empirical kernel rule risk to the Bayes risk for different values of the intensity 
function pair parameters $\phi_{1}$,\enspace $\phi_{2}$ versus $L$ and fixed $T=10$. 
Clearly, as the difficulty of the problem increases, i.e., when the two intensity
functions become more similar to one another, the rate of convergence decreases.
Also note that the Bayes risk is higher for more difficult
classification problems.

\begin{figure}[!t]
	\includegraphics[width=\figurewidth]{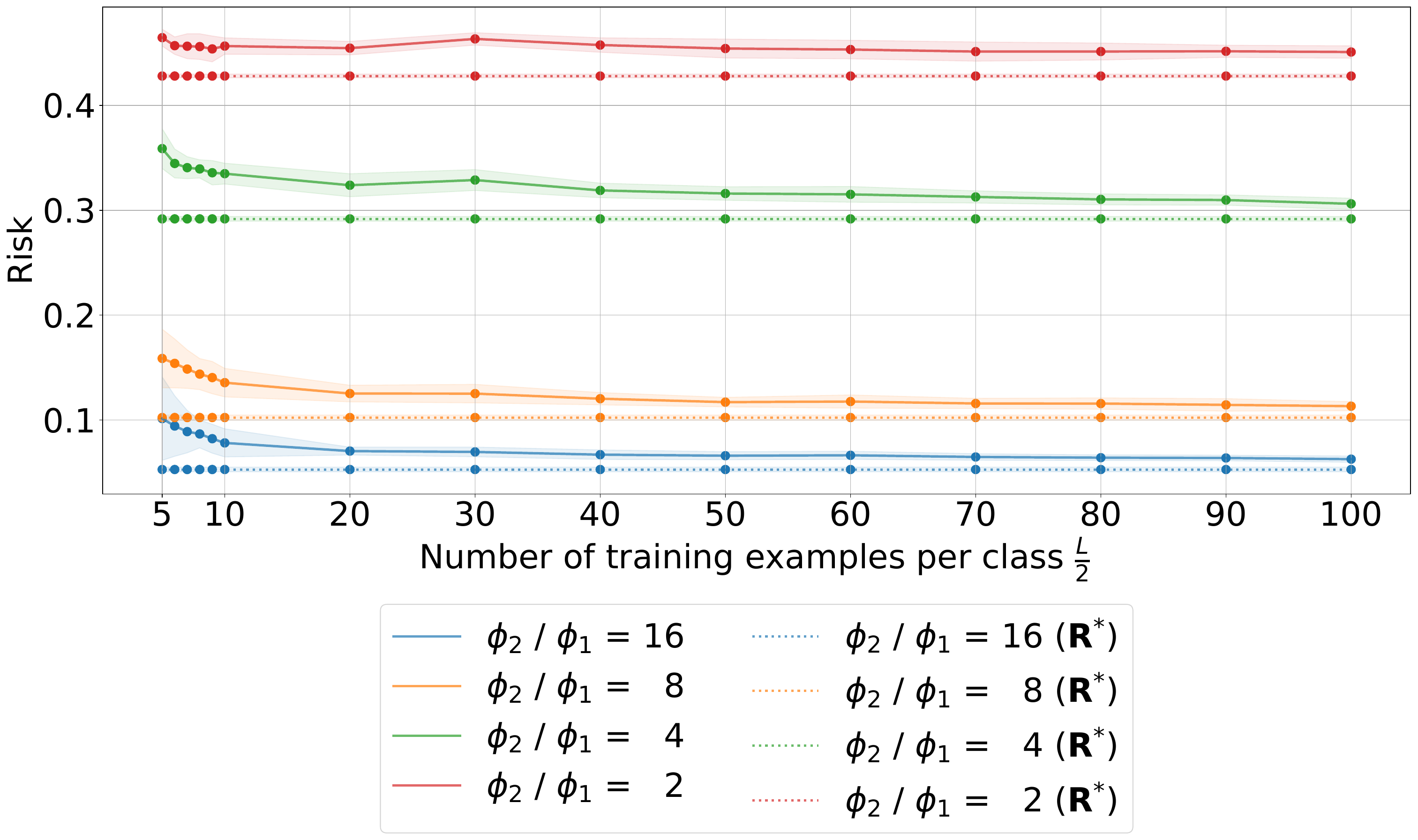}
	\centering
	\caption{
		The average risk $\mathbb{E}[\mathbf{R}_{L,T}]$ versus $L$ at given $T$ for 
		different values intensity function pairs parametrized by $\phi_{1}$,\enspace 
		$\phi_{2}$. The horizontal dashed lines denote estimated Bayes risk 
		$\mathbf{R}^{*}_{T}$ for the associated intensity function pair.
	}
	\label{fig:05_simulations/sims/examples}
\end{figure}

Let us briefly examine a counter-example when the proposed algorithm fails to converge.
Consider the following Gaussian type intensity function
\begin{equation}
	\lambda(t; a,b) = a \exp \left[-b\left(t-0.5\right)^{2}\right]
	\label{eqn:5.1} \,,
\end{equation}
which does not satisfy the assumptions $\assumption{1}$ and $\assumption{2}$.
While the intensity function has an infinite support, in practice it is extremely 
unlikely for events to occur outside of some narrow time interval. 
In \figurename~\ref{fig:05_simulations/sims/failure} we consider the classification
problem with \mbox{$\lambda_{1}(t)=\lambda\left(t; 300, 20\right)$} and
\mbox{$\lambda_{2}(t)=\lambda\left(t; 600, 40\right)$}.  For such specified intensity
functions we can evaluate that $\int_{0}^{T} \lambda_1(t)dt$ $< 119$ and
$\int_{0}^{T} \lambda_2(t)dt$  $< 169$ for all $T$.
Hence, the average number of events from each class is finite and consequently
the condition {\bf A2} does not hold. Note that the empirical risk
does not converge to the Bayes risk that takes very small values for $T > 0.5$. 
Note also that the risk $\E{\mathbf{R}_{L,T}}$ is the smallest around the maximum
of~\eqref{eqn:5.1} at
$t=0.5$. Afterwards $\mathbf{R}_{L,T}$ slightly increases and reaches a plateau because 
no new events can be observed.

\begin{figure}[!t]
	\includegraphics[width=\figurewidth]{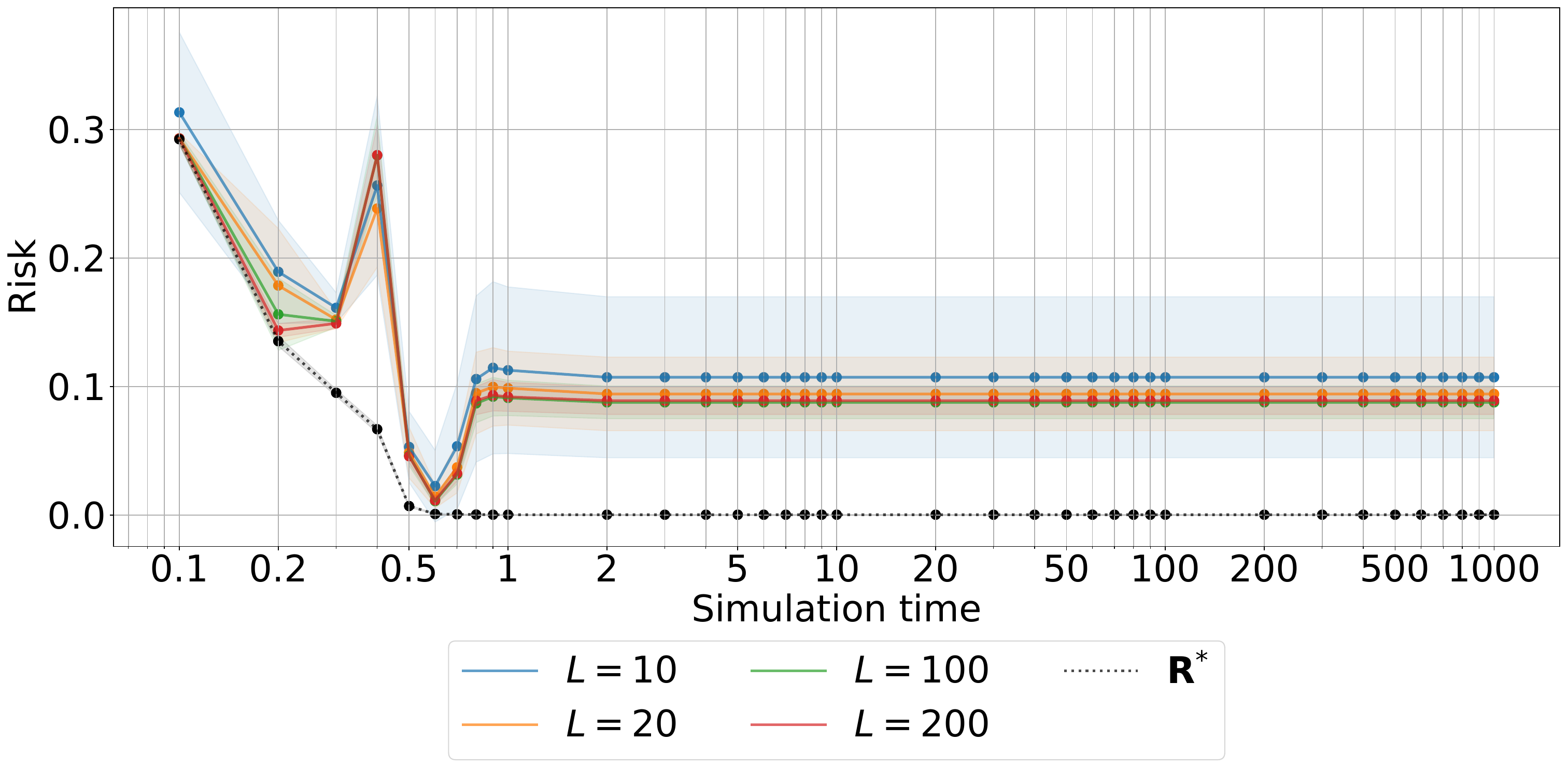}
	\centering
	\caption{
		The average risk $\E{\mathbf{R}_{L,T}}$ versus $T$ for different values of $L$ 
		for the Gaussian type intensity function.
	}
	\label{fig:05_simulations/sims/failure}
\end{figure}

\section{Concluding Remarks}
\label{sec:06_conclusions}

\noindent In this paper we have developed the rigorous asymptotic analysis for the 
classification
problem applied to spike trains data characterized by non-random intensity functions. The
optimal Bayes rule was derived and its finite and asymptotic (with respect to the length 
of the observation interval) properties were established. This includes the exponential 
bound for the Bayes risk. Our asymptotic theory is relied on the martingale 
representation of counting processes. We then introduced a general class of plug-in 
empirical classification rules and formulated the sufficient conditions for their 
convergence (as the amount of data grows) to the Bayes risk. This optimality property is 
confirmed and verified for the plug-in kernel classifier derived from the aggregated 
data. 

\noindent There are various ways to extend and generalize the results obtained in this 
paper. First of all, the log transformed version of the Bayes rule in~\eqref{eqn:2.4} 
holds for a general class of point processes such as the Hawkes self-excited 
process~\cite{lima023} and multivariate or marked point 
processes~\cite{daley2003introduction}. Hence, the extension of our results to this
type of
point processes is a natural topic for future research. 
The two-class classification problem studied in this paper has straightforward
generalization to the multi-class situation with the class labels denoted as \mbox{$
	\left\{\omega_{1}, \ldots, \omega_{c}\right\}
	$}. In fact, the Bayes rule in~\eqref{eqn:2.4} for the $c-$class 
classification problem reads as
\begin{equation*}
	\begin{split}
		\mathbf{X}\in\omega_{i}
		\enspace & \text{if}
		\enspace
		\sum_{s=1}^{N} \log\left(\frac{\lambda_{i}(t_s)}{\lambda_{k}(t_s)}\right) 
		\geq \gamma_{ik} \\
		& \text{for all}
		\enspace k=1, \ldots, c,
		\enspace k \neq i
	\end{split}
	\,,
\end{equation*}
where \mbox{$
	\gamma_{ik}=\int_{0}^{T}\left(
	\lambda_{i}(u)-\lambda_{k}(u)
	\right)du + \log\left(\pi_{k}/\pi_{i}\right)
	$}. Here $\left\{\lambda_{i}(t)\right\}$ are class intensity functions and 
$\left\{\pi_{i}\right\}$ are prior probabilities. 
Utilizing the martingale decomposition (see~\eqref{eqn:2.9}) for point processes would 
allow us to generalize our asymptotic results to the multi-class case. Also designing 
nonparametric plug-in classification rules with the desirable asymptotic optimality 
property would be of a great practical topic for further research.

\appendices

\section{}
\label{app:A}

The asymptotic theory of the classification problem examined in this paper is based on 
martingale methods. This appendix gives brief summary of the essential facts 
concerning the counting processes theory and their martingale representation, 
see~\cite{andersen2012statistical} for the full account of this theory. Hence, 
let~$N(t)$ be a spike train process which can be consider as a counting process of the 
occurrences in the interval~$\left[0,t\right]$ such that~$N(0)=0$. By~$dN(t)$ we 
denote the increment of~$N(t)$ over the small interval~$\left[t,t+dt\right)$ . The 
evolution of~$N(t)$ in time is completely characterized by the local intensity 
function~$\lambda(t)$. This is defined as
\begin{equation}
	\E{dN(t)|\mathbf{F}_t} = \lambda(t)dt
	\label{eqn:a1} \,,
\end{equation}
where~$\mathbf{F}_t$ denotes the history of~$N(t)$ in the interval~$\left[0,t\right)$. 
Note that~$\lambda(t)$ is generally random due to the dependence on the values 
of~$N(t)$ prior to the time~$t$. The formula in~\eqref{eqn:a1} implies that the 
residual process
\begin{equation}
	dM(t) = dN(t) - \lambda(t)dt
	\label{eqn:a2}
\end{equation}
satisfies the property
\begin{equation}
	\E{dM(t)|\mathbf{F}_t} = 0
	\label{eqn:a3} \,.
\end{equation}
This confirms the fact that the process
\begin{equation}
	M(t) = N(t) - \int_{0}^{t} \lambda(u) du
	\label{eqn:a4}
\end{equation}
is a zero mean local martingale.

The formula in~\eqref{eqn:a2} can be written as
\begin{equation}
	dN(t) = \lambda(t)dt + dM(t)
	\label{eqn:a5} \,.
\end{equation}
This can be viewed as the local signal plus noise decomposition of~$N(t)$. Moreover, 
the noise process~$dM(t)$ in~\eqref{eqn:a5} is a zero mean martingale that has 
uncorrelated but nonstationary increments~\cite{andersen2012statistical}. Based on 
these facts it can be shown that~$dM(t)$ has the following second order property
\begin{equation}
	\Var{dM(t)|\mathbf{F}_t} = \lambda(t)dt
	\label{eqn:a6} \,.
\end{equation}
Also \mbox{$\Var{dM(t)|\mathbf{F}_t}=\Var{dN(t)|\mathbf{F}_t}$}.

The fact that~$dM(t)$ has uncorrelated increments and that it reveals a piecewise 
constant sample paths allow us to define the stochastic Stieltjes type integral with 
respect to~$dM(t)$. Hence, let
\begin{equation*}
	\mathbf{I}(t) = \int_{0}^{t} g(u) dM(u)
\end{equation*}
define the stochastic integral of the measurable function~$g(t)$ with respect to the
increments of the martingale~$M(t)$. It is known~\cite{andersen2012statistical} that 
the martingale property is preserved under stochastic integration. Since 
\mbox{$\E{\mathbf{I}(t)}=0$} the integral~$\mathbf{I}(t)$ is a zero mean martingale 
with respect to the history of the counting process~$N(t)$. The variance 
of~$\mathbf{I}(t)$ is given by
\begin{equation}
	\Var{\mathbf{I}(t)|\mathbf{F}_t} =
	\int_{0}^{t} g^2(u) \lambda(u) du
	\label{eqn:a7} \,.
\end{equation}
The uncorrelated increments property of the martingale process allows us to establish 
the following generalized version of~\eqref{eqn:a7}
\begin{equation}
	\begin{split}
		& \Cov{\int_{0}^{t}g_1(u)dM(u),\,\int_{0}^{t}g_2(u)dM(u),\,|\mathbf{F}_t} \\
		& = \int_{0}^{t} g_1(u) g_2(u) \lambda(u) du
	\end{split}
	\label{eqn:a8} \,,
\end{equation}
where~$g_1(t)$,~$g_2(t)$ are measurable functions.

\section{}
\label{app:B}

To prove the results of this section we need the following elementary inequalities
\begin{equation}
	\frac{x-1}{x} \leq \log(x) \leq x-1,
	\enspace
	x>0
	\label{eqn:b1}\,.
\end{equation}
The tighter version of this inequality for $x\geq1$ reads as follows
\begin{equation}
	2\frac{x-1}{x+1} \leq \log(x) \leq \frac{x^2-1}{2x},
	\enspace
	x\geq1
	\label{eqn:b2}\,.
\end{equation}

\begin{proof}[Proof of Lemma~\ref{lemma:1}]
	Let $\mathbf{X}\in\omega_{1}$. Then, the formula for the threshold value 
	$\alpha_{T}$ in~\eqref{eqn:3.3} becomes
	\begin{equation}
		\alpha_{T} = 
		\tau_{1} - \tau_{2} + \tau_{1}\log\left(\frac{\tau_{2}}{\tau_{1}}\right) -
		\tau_{1}\int_{0}^{T}\log\left(\frac{p_{1}(t)}{p_{2}(t)}\right)p_{1}(t)dt
		\label{eqn:b3}\,.
	\end{equation}
	Here \mbox{$
		\mathbf{K}_{T}\left(p_{1}\parallel p_{2}\right) =
		\int_{0}^{T}\log\left(\frac{p_{1}(t)}{p_{2}(t)}\right)p_{1}(t)dt
		$} is the Kullback-Leibler divergence between the densities $p_{1}(t)$ and $p_{2}(t)$. 
	Then, by virtue of~\eqref{eqn:b1} we have
	\begin{equation*}
		\begin{split}
			\alpha_{T} & \leq
			\tau_{1} - \tau_{2} + \tau_{1}\left\{\frac{\tau_{2}}{\tau_{1}}-1\right\}-
			\tau_{1}\mathbf{K}_T(p_{1} \parallel p_{2}) \\
			& = -\tau_{1}\mathbf{K}_T(p_{1} \parallel p_{2})
		\end{split}
		\,.
	\end{equation*}
	As $\mathbf{K}_T(p_{1} \parallel p_{2})\geq0$ we conclude that $\alpha_{T}\leq0$. 
	Concerning the lower bound for $\alpha_{T}$ in~\eqref{eqn:b3} we again 
	use~\eqref{eqn:b1}. Hence,
	\begin{equation*}
		\begin{split}
			\alpha_{T} & \geq
			\tau_{1} - \tau_{2} + \tau_{1}\left\{1-\frac{\tau_{1}}{\tau_{2}}\right\}-
			\tau_{1}\mathbf{K}_T(p_{1} \parallel p_{2}) \\
			& = \frac{\left(\tau_{1}-\tau_{2}\right)^2}{\tau_{2}}-
			\tau_{1}\mathbf{K}_T(p_{1} \parallel p_{2})
		\end{split}
		\,.
	\end{equation*}
	This confirms the inequalities in Lemma~\ref{lemma:1}~(a). The case when 
	$\mathbf{X}\in\omega_{2}$ can be proved in 	the analogous way by noting that 
	$\alpha_{T}$ is now equal to
	\begin{equation*}
		\alpha_{T}=
		\tau_{1} - \tau_{2} + \tau_{2}
		\log\left(\frac{\tau_{2}}{\tau_{1}}\right)+
		\tau_{2}\mathbf{K}_T(p_{2} \parallel p_{1})
		\,.
	\end{equation*}
	Then, the application of~\eqref{eqn:b1} gives the result in Lemma~\ref{lemma:1}~(b).
\end{proof}

\begin{proof}[Proof of Lemma~\ref{lemma:2}]
	The result of Lemma~\ref{lemma:2} is implied by the straightforward application of 
	the identity in~\eqref{eqn:a7} to the stochastic integral
	\begin{equation*}
		\int_{0}^{T}\log\left(\frac{\lambda_{1}(t)}{\lambda_{2}(t)}\right)dM(t)
		\,.
	\end{equation*}
	Here $M(t)$ is the local martingale corresponding to the intensity 
	$\lambda_{1}(t)$ or $\lambda_{2}(t)$ depending whether $\mathbf{X}\in\omega_{1}$ 
	or $\mathbf{X}\in\omega_{2}$, respectively.
\end{proof}

\begin{proof}[Proof of Lemma~\ref{lemma:5}]
	The result in~\eqref{eqn:3.15} of Lemma~\ref{lemma:5} is the version of Theorem~5 
	in~\cite{le2021exponential} that says that under the conditions (a) and (b) of 
	Lemma~\ref{lemma:5} we have 
	\begin{equation}
		\mathbf{P}\left(\left|U_{T}\right|\geq\epsilon\right) \leq
		2\exp\left[
		-\frac{v_{T}}{u_{T}^2}J\left(\epsilon\frac{u_{T}}{v_{T}}\right)
		\right]
		\label{eqn:b4}\,,
	\end{equation}
	where $J(x)=(1+x)\log\left(1+x\right)-x$,\enspace$x\geq0$. Using the inequalities
	in~\eqref{eqn:b2} we can easily obtain that
	\begin{equation*}
		\frac{x^2}{2+x} \leq J(x) \leq \frac{x^2}{2}
		\,.
	\end{equation*}
	The application of the above lower bound in~\eqref{eqn:b4} leads to the version 
	of~\eqref{eqn:b4} given in~\eqref{eqn:3.15}.
\end{proof}

\begin{proof}[Proof of Theorem~\ref{theorem:3}]
	We will prove the result in~\eqref{eqn:3.21}. This clearly implies the
	convergence $\mathbf{R}_{T}^{*}\to0$ as $T\to\infty$. By virtue of~\eqref{eqn:2.8}
	it suffices to consider the probability of misclassification
	$\mathbf{P}\left(W_{T}(\mathbf{X})\geq\eta_{T}|\mathbf{X}\in\omega_{2}\right)$. As it
	has been observed in~\eqref{eqn:3.19} this probability is equivalent 
	to the following probability
	\begin{equation}
		\mathbf{P}\left(
		\frac{1}{T}U_{T}(\mathbf{X})\geq\frac{1}{T}\alpha_{T} +
		\frac{1}{T}\kappa | \mathbf{X}\in\omega_{2}
		\right)
		\label{eqn:b7}\,,
	\end{equation}
	where $\kappa=\log\left(\pi_{2}/\pi_{1}\right)$. Since $\mathbf{X}\in\omega_{2}$ 
	then
	\begin{equation}
		\alpha_{T} = 
		\tau_{1} - \tau_{2} + \tau_{2} \log\left(\frac{\tau_{2}}{\tau_{1}}\right) +
		\tau_{2} \mathbf{K}_T(p_{2} \parallel p_{1})
		\label{eqn:b8}\,.
	\end{equation}
	Then, by the Chebyshev inequality the probability in~\eqref{eqn:b7} is bounded by
	\begin{equation}
		b_{T}\frac{1}{T}
		\label{eqn:b9}\,,
	\end{equation}
	where
	\begin{equation}
		b_{T} = \frac{
			\Var{\frac{1}{\sqrt{T}}U_{T}(\mathbf{X})}	
		}{
			\left[T^{-1}\alpha_{T}+T^{-1}\kappa\right]^2
		}
		\label{eqn:b10}\,.
	\end{equation}
	By the assumptions $\assumption{1}$ and $\assumption{2}$ we have
	\begin{equation}
		\overline{\lim}_{T\to\infty} \alpha_{T} / T \leq
		d \log\left(\frac{C}{\delta}\right)
		\label{eqn:b6-1}
	\end{equation}
	and also
	\begin{equation}
		\underline{\lim}_{T\to\infty} \alpha_{T} / T \geq
		d \log\left(\frac{\delta}{C}\right)
		\label{eqn:b6-2} \,.
	\end{equation}
	Then by the result of Lemma~\ref{lemma:4} and~\eqref{eqn:b6-2}, we get
	\begin{equation}
		\overline{\lim}_{T\to\infty} b_{T} \leq
		\frac{
			d \log^{2}\left(\frac{C}{\delta}\right)
		}{
			d^{2} \log^{2}\left(\frac{\delta}{C}\right)
		} = \frac{1}{d} \left(
		\frac{\log(C/\delta)}{\log(\delta/C)}			
		\right)^{2}
		\label{eqn:b11} \,.
	\end{equation}
	Hence, the probability of misclassification \mbox{$
		\mathbf{P}\left(W_{T}(\mathbf{X})\geq\eta_{T}|\mathbf{X}\in\omega_{2}\right)
		$} is bounded by $b_{T}/T$,	where the superior limit of $b_{T}$ is given in~\eqref{eqn:b11}.
	In the analogous way one can show that the probability of misclassification \mbox{$
		\mathbf{P}\left(W_{T}(\mathbf{X})<\eta_{T}|\mathbf{X}\in\omega_{1}\right)
		$} is bounded by $a_{T}/T$, where the superior limit of $a_{T}$ is also given
	by ~\eqref{eqn:b11}.
	This concludes the proof of Theorem~\ref{theorem:3}.
\end{proof}

\begin{proof}[Proof of Theorem~\ref{theorem:4}]
	Consider again 
	$\mathbf{P}\left(W_{T}(\mathbf{X})\geq\eta_{T}|\mathbf{X}\in\omega_{2}\right)$ or
	equivalently the probability in~\eqref{eqn:b7}. We wish to use the exponential 
	inequality in~\eqref{eqn:3.17} of Lemma~\ref{lemma:6}. Then, the probability 
	in~\eqref{eqn:b7} is bounded by
	\begin{equation}
		\exp\left[
		-T \frac{\epsilon_{T}^{2}}{2\theta_{T}+u\epsilon_{T}}
		\right]
		\label{eqn:b12}\,,
	\end{equation}
	where $u=\log\left(\frac{C}{\delta}\right)$ characterizes the assumption 
	$\assumption{1}$, \mbox{$\epsilon_{T}=\frac{1}{T}\alpha_{T}+\frac{1}{T}\kappa$}, 
	and \mbox{$\theta_{T}=\Var{\frac{1}{\sqrt{T}}U_{T}(\mathbf{X})}$}. This defines 
	the exponential factor
	\begin{equation*}
		B_{T} = \frac{\epsilon_{T}^{2}}{2\theta_{T}+u\epsilon_{T}}
		\,.
	\end{equation*}
	Owing to Lemma~\ref{lemma:4}, \eqref{eqn:b6-1} and~\eqref{eqn:b6-2}, the limit inferior
	of~$B_{T}$ is not smaller than
	\begin{equation*}
		\begin{split}
			\underline{\lim}_{T\to\infty} B_{T} & \geq
			\frac{
				d^{2} \log^{2}(\delta/C)
			}{
				2d \log^{2}(C/\delta) + ud \log(C/\delta)
			} \\
			& = d \frac{1}{3} \left(
			\frac{\log(\delta/C)}{\log(C/\delta)}			
			\right)^{2}
		\end{split}
		\,.
	\end{equation*}	
	This combined with~\eqref{eqn:b12} gives the required bound. Since the analogous 
	analysis can be carried out for the probability of misclassification 
	$\mathbf{P}\left(W_{T}(\mathbf{X})<\eta_{T}|\mathbf{X}\in\omega_{1}\right)$ therefore
	the	proof of Theorem~\ref{theorem:4} has been completed.
\end{proof}

\section{}
\label{app:C}

\begin{proof}[Proof of Theorem~\ref{theorem:5}]
	The proof of Theorem~\ref{theorem:5} is in the spirit of the proof of Theorem~1
	in~\cite{greblicki1978asymptotically}. Hence, the consistency results established
	in~\eqref{eqn:4.6} and~\eqref{eqn:4.7} imply that for the selected $\delta>0$ there
	exists~$l_{0}$ such that for~$L > l_{0}$ and $\epsilon >0 $ we have
	\begin{equation}
		\begin{split}
			& \mathbf{P}\left(
			\left|
			\widehat{W}_{L,T}(\mathbf{x}) - W_{T}(\mathbf{x})
			\right| < \epsilon
			\right) > 1 - \delta/2
			\,,\\
			& \mathbf{P}\left(
			\Big|
			\widehat{\eta}_{L,T} - \eta_{T}
			\Big| < \epsilon
			\right) > 1 - \delta/2
			\,.
		\end{split}
		\label{eqn:c4}
	\end{equation}
	Let \mbox{$\psi_{T}^{\star}(\mathbf{x})=\omega_{1}$}, i.e., we have
	\mbox{$W_{T}(\mathbf{x})>\eta_{T}$}. Then,
	\begin{equation*}
		\mathbf{P}\left(
		\widehat{\psi}_{L,T}(\mathbf{x}) = \psi_{T}^{\star}(\mathbf{x})
		\right) = \mathbf{P}\left(
		\widehat{W}_{L,T}(\mathbf{x}) > \widehat{\eta}_{L,T}
		\right)
		\,.
	\end{equation*}
	The right-hand side of this equality is not smaller than
	\begin{equation}
		\mathbf{P}\left(\left|
		\left(
		\widehat{W}_{L,T}(\mathbf{x}) - \widehat{\eta}_{L,T}
		\right) - \left(
		W_{T}(\mathbf{x}) - \eta_{T}
		\right)
		\right| < 2\epsilon \right)
		\label{eqn:c5}
	\end{equation}
	for \mbox{$0<\epsilon<\frac{1}{2}\left(W_{T}(\mathbf{x})-\eta_{T}\right)$}.
	Moreover,~\eqref{eqn:c5} is bounded from below by
	\begin{equation}
		\mathbf{P}\left(
		\left|
		\widehat{W}_{L,T}(\mathbf{x}) - W_{T}(\mathbf{x})
		\right| < \epsilon,\;
		\left|
		\widehat{\eta}_{L,T} - \eta_{T}
		\right| < \epsilon
		\right)
		\label{eqn:c6}\,.
	\end{equation}
	In turn by the elementary inequality \mbox{$
		\mathbf{P}(A \cap B) \geq \mathbf{P}(A) + \mathbf{P}(B) - 1
		$}, the lower bound for~\eqref{eqn:c6} is
	\begin{equation*}
		\mathbf{P}\left(
		\left|
		\widehat{W}_{L,T}(\mathbf{x}) - W_{T}(\mathbf{x})
		\right| < \epsilon
		\right)
		+
		\mathbf{P}\left(
		\left|
		\widehat{\eta}_{L,T} - \eta_{T}
		\right| < \epsilon
		\right) - 1
		\,.
	\end{equation*}
	Recalling~\eqref{eqn:c4} we have shown that for $L>l_{0}$
	\begin{equation*}
		\mathbf{P}\left(
		\widehat{\psi}_{L,T}(\mathbf{x}) = \psi_{T}^{\star}(\mathbf(x))
		\right) > 1-\delta
		\,.
	\end{equation*}
	Since we can choose an arbitrary small $\delta$, this confirms the claimed
	convergence.
\end{proof}

\begin{proof}[Proof of Lemma~\ref{lemma:7}]
	The proof will be based on the following version of Helly's 
	thorem~\cite{apostol1974mathematical} for the Stieltjes integral. 
	\begin{quoting}
		Let
		\begin{equation*}
			f_{L}(x) \to f(x)
			\enspace \text{uniformly on}
			\enspace [a,b]
			\enspace \text{as}
			\enspace L\to\infty
			\,.
		\end{equation*}
		If $g(x)$ is a function of bounded variation on $[a,b]$ then
		\begin{equation}
			\int_{a}^{b}f_{L}(x)dg(x) \to
			\int_{a}^{b}f(x)dg(x)
			\enspace \text{as}
			\enspace L\to\infty
			\label{eqn:c1}\,.
		\end{equation}
	\end{quoting}
	Consider the optimal decision function $W_{T}(\mathbf{x})$ in~\eqref{eqn:2.6} and 
	its empirical counterpart $\widehat{W}_{L,T}(\mathbf{x})$ in~\eqref{eqn:4.3}. Then, 
	we can write (see~\eqref{eqn:2.4a})
	\begin{equation}
		\begin{split}
			& \widehat{W}_{L,T}(\mathbf{x}) - W_{L,T}(\mathbf{x}) \\
			& = 
			\int_{0}^{T}\left[
			\log\left(\frac{\widehat{p}_{1}(t)}{\widehat{p}_{2}(t)}\right) - 
			\log\left(\frac{p_{1}(t)}{p_{2}(t)}\right)
			\right] dN(t)
		\end{split}
		\label{eqn:c2}\,.
	\end{equation}
	We wish to prove that \mbox{$
		\left|\widehat{W}_{L,T}(\mathbf{x}) - W_{L,T}(\mathbf{x})\right|
		$}\enspace$(P)$ as $L\to\infty$. Owing to Helly's theorem it suffices to show that
	\begin{equation}
		\begin{split}
			& \left|
			\log\left(\frac{\widehat{p}_{1}(t)}{\widehat{p}_{2}(t)}\right) - 
			\log\left(\frac{p_{1}(t)}{p_{2}(t)}\right)
			\right| \to 0
			\enspace (P) \\
			& \text{uniformly on}
			\enspace [0,T]
			\enspace \text{as}
			\enspace L\to\infty
		\end{split}
		\label{eqn:c3}\,.		
	\end{equation}
	Observe that the left-hand side of~\eqref{eqn:c3} is equal to \mbox{$
		\left|
		\log\left(
		\frac{\widehat{p}_{1}(t)p_{2}(t)}{\widehat{p}_{2}(t)p_{1}(t)}
		\right)
		\right|
		$}. Then, using~\eqref{eqn:b1} this is bounded by
	\begin{equation*}
		\begin{split}
			& \left|
			\frac{
				\widehat{p}_{1}(t)p_{2}(t)-\widehat{p}_{2}(t)p_{1}(t)
			}{
				\widehat{p}_{2}(t)p_{1}(t)
			}
			\right| \\
			& = \left|
			\frac{
				\left(
				\widehat{p}_{1}(t)-p_{1}(t)
				\right)p_{2}(t) +
				\left(
				p_{2}(t)-\widehat{p}_{2}(t)
				\right)p_{1}(t)
			}{
				\left(
				\widehat{p}_{2}(t)-p_{2}(t)
				\right)p_{1}(t) + p_{1}(t)p_{2}(t)
			}
			\right|
		\end{split}
		\,.
	\end{equation*}
	This is not greater than
	\begin{equation*}
		\frac{
			\left|
			\widehat{p}_{1}(t)-p_{1}(t)
			\right|p_{2}(t) +
			\left|
			\widehat{p}_{2}(t)-p_{2}(t)
			\right|p_{1}(t)
		}{
			p_{1}(t)p_{2}(t)
		}
		\,.
	\end{equation*}
	By the assumption~$\assumption{1}$ limited to the interval $[0,T]$ and the fact 
	that \mbox{$
		p_{i}(t) = \lambda_{i}(t)/\tau_{i}
		$} the above expression does not exceed
	\begin{equation*}
		\left(\frac{C}{\delta}\right)^{3}T\left\{
		\left|
		\widehat{p}_{1}(t)-p_{1}(t)
		\right| + 
		\left|
		\widehat{p}_{2}(t)-p_{2}(t)
		\right|
		\right\}
		\,.
	\end{equation*}
	This by recalling the assumption in~\eqref{eqn:4.4} proves~\eqref{eqn:c3}. The 
	proof of Lemma~\ref{lemma:7} has been completed.
\end{proof}

\section{}
\label{app:D}

\begin{proof}[Proof of Lemma~\ref{lemma:8}]
	We wish to show that
	\begin{equation}
		\sup_{t\in T_{\epsilon}} \left|
		\widehat{\lambda}(t)-\lambda(t)
		\right| \to 0
		\enspace (P)
		\enspace \text{as}
		\enspace L \to \infty
		\label{eqn:d1}\,.
	\end{equation}
	where $T_{\epsilon}=[\epsilon,T-\epsilon]$ for small $\epsilon>0$. We begin with the 
	standard bounding into the variance and bias terms
	\begin{equation}
		\left|\widehat{\lambda}(t)-\lambda(t)\right| \leq
		\left|\widehat{\lambda}(t)-\E{\widehat{\lambda}(t)}\right| + 
		\left|\E{\widehat{\lambda}(t)}-\lambda(t)\right|
		\label{eqn:d2}\,.
	\end{equation}
	Owing to~\eqref{eqn:4.15} the bias term is equal to
	\begin{equation*}
		\E{\widehat{\lambda}(t)}-\lambda(t) = \int_{-t/h}^{(T-t)/h}
		K(s)\lambda\left(t+hs\right)ds-\lambda(t)
	\end{equation*}
	for $t\in T_{\epsilon}$. Since $K(t)$ and $\lambda(t)$ are positive and $K(t)$ is a 
	density function supported on $[-1,1]$ then we have
	\begin{equation*}
		\begin{split}
			\left|\E{\widehat{\lambda}(t)}-\lambda(t)\right| 
			& = \int_{-1}^{1} K(s)\left|\lambda\left(t+hs\right)-\lambda(t)\right|ds \\
			& \leq M_{\lambda}h \int_{-1}^{1} K(s)|s|ds
		\end{split}
	\end{equation*}
	uniformly in $t\in T_{\epsilon}$, where $M_{\lambda}$ is the Lipschitz
	constant of $\lambda(t)$.\\
	Let us consider the stochastic part in~\eqref{eqn:d2}. As the interval $T_{\epsilon}$ 
	is compact, one can define a finite partition of $T_{\epsilon}$ into disjoint 
	equal size intervals, i.e., \mbox{$
		T_{\epsilon} = \bigcup\limits_{j=1}^{q(L)} \mathbf{U}_{j}
		$}, where the size of $\mathbf{U}_{j}$ is denoted as $\Delta(L)$. Clearly the number 
	of intervals is of order \mbox{$T_{\epsilon}/\Delta(L)$}. Let 
	$u_{j}\in\mathbf{U}_{j}$ be the middle point of $\mathbf{U}_{j}$. Then, the uniform 
	norm of the stochastic term in~\eqref{eqn:d2} can be bounded as follows
	\begin{equation}
		\begin{split}
			& \sup_{t\in T_{\epsilon}}\left|
			\widehat{\lambda}(t)-\E{\widehat{\lambda}(t)}
			\right| \\
			& \leq \max_{1\leq j\leq q(L)} \sup_{t\in T_{\epsilon}\cap \mathbf{U}_j}
			\left|\widehat{\lambda}(t)-\widehat{\lambda}(u_{j})\right| \\
			& + \max_{1\leq j\leq q(L)} \sup_{t\in T_{\epsilon}\cap \mathbf{U}_j}
			\left|\E{\widehat{\lambda}(t)}-\E{\widehat{\lambda}(u_{j})}\right| \\
			& + \max_{1\leq j\leq q(L)} \left|
			\widehat{\lambda}(u_{j}) - \E{\widehat{\lambda}(u_{j})}
			\right| \\
			& = A_{1} + A_{2} + A_{3}.
		\end{split}
		\label{eqn:d3} \,.
	\end{equation}
	Consider first the term $A_{1}$. By virtue of~\eqref{eqn:4.14} we have
	\begin{equation*}
		\widehat{\lambda}(t)-\widehat{\lambda}(u_{j}) = \int_{0}^{T} \left[
		K_{h}(t-s)-K_{h}(u_{j}-s)
		\right] d\overline{N}_{L}(s)
	\end{equation*}
	for $t,u_{j}\in\mathbf{U}_{j}$. Noting that \mbox{$|t-u_{j}|\leq\Delta(L)$} and using 
	the fact that $K(t)$ is Lipschitz we get
	\begin{equation}
		\left|\widehat{\lambda}(t)-\widehat{\lambda}(u_{j})\right| \leq
		M_{K}\frac{\Delta(L)}{h^{2}} \int_{0}^{T} d\overline{N}_{L}(s)
		\label{eqn:d4} \,.
	\end{equation}
	Note that \mbox{$
		\int_{0}^{T} d\overline{N}_{L}(s) = \overline{N}_{L}(T)
		$} and we know, see~\eqref{eqn:4.11}, that \mbox{$
		\E{\overline{N}_{L}(T)} = \int_{0}^{T} \lambda(t)dt
		$} and \mbox{$
		\Var{\overline{N}_{L}(T)} = \frac{1}{L} \int_{0}^{T} \lambda(t)dt
		$}. This proves that
	\begin{equation}
		\left|\widehat{\lambda}(t)-\widehat{\lambda}(u_{j})\right| = 
		\mathcal{O}\left(\frac{\Delta(L)}{h^{2}}\right)
		\enspace \text(a.s.)
		\label{eqn:d5}
	\end{equation}
	uniformly in $t\in T_{\epsilon}$. Concerning the term $A_{2}$ in~\eqref{eqn:d3} we 
	can use~\eqref{eqn:4.15}. Then, we obtain
	\begin{equation*}
		\begin{split}
			& \E{\widehat{\lambda}(t)} - \E{\widehat{\lambda}(u_{j})} \\
			& = \int_{0}^{T} \left[
			K_{h}(t-s) - K_{h}(u_{j}-s)
			\right] \lambda(s) ds
		\end{split}
		\,.
	\end{equation*}
	This gives
	\begin{equation}
		A_{2} = \mathcal{O}\left(\frac{\Delta(L)}{h^{2}}\right)
		\label{eqn:d6} \,.
	\end{equation}
	Hence, we have shown that the terms $A_{1}$ and $A_{2}$ are of order \mbox{$
		\mathcal{O}\left(\frac{\Delta(L)}{h^{2}}\right)	
		$}, where $\Delta(L)$ is to be selected.
	
	Finally, let us consider the term $A_{3}$ in~\eqref{eqn:d3}. First we note that for 
	$\delta>0$
	\begin{equation}
		\begin{split}
			& \mathbf{P}\left(
			\max_{1\leq j\leq q(L)} \left|
			\widehat{\lambda}(u_{j}) - \E{\widehat{\lambda}(u_{j})}
			\right| \geq \delta
			\right) \\
			& \leq q(L) \sup_{t\in T_{\epsilon}} \mathbf{P}\bigg(
			\left|
			\widehat{\lambda}(t) - \E{\widehat{\lambda}(t)}
			\right| \geq \delta
			\bigg)
		\end{split}
		\label{eqn:d7} \,.
	\end{equation}
	By virtue of~\eqref{eqn:4.14} and~\eqref{eqn:4.11} we have
	\begin{equation*}
		\widehat{\lambda}(t) - \E{\widehat{\lambda}(t)} = \int_{0}^{T}
		K_{h}(t-s) d\overline{M}_{L}(s)
		\,.
	\end{equation*}
	This,~\eqref{eqn:4.12} and Chebyshev inequality yield
	\begin{equation*}
		\mathbf{P}\left(
		\left|
		\widehat{\lambda}(t) - \E{\widehat{\lambda}(t)}
		\right| \geq \delta
		\right) \leq \frac{1}{L} \int_{0}^{T} K^{2}_{h}(t-s)\lambda(s)ds/\delta^{2}
		\,.
	\end{equation*}
	Note that the right-hand side of this inequality is of order $\mathcal{O}(1/Lh)$ 
	uniformly in $t\in T_{\epsilon}$. This,~\eqref{eqn:d7} and the fact that
	$q(L) = \mathcal{O}(1/\Delta(L))$
	lead to the following uniform bound
	\begin{equation*}
		\mathbf{P}\left(A_{3}\geq\delta\right) = \mathcal{O}\left(
		1/\Delta(L)Lh
		\right)
	\end{equation*}
	or equivalently \mbox{$
		A_{3} = \mathcal{O}_{P}\left(1/\sqrt{\Delta(L)Lh}\right)	
		$}. 
	Hence, balancing $A_{3} = \mathcal{O}_{P}\left(1/\sqrt{\Delta(L)Lh}\right)$
	versus $A_1, A_2 = \mathcal{O}(\Delta(L)/h^2)$ gives the choice 
	\mbox{$\Delta(L)=h/L^{1/3}$}. This yields the convergence in~\eqref{eqn:d1} if
	\begin{equation*}
		h(L) \to 0
		\enspace \text{and}
		\enspace Lh^{3}(L) \to \infty
		\enspace \text{as}
		\enspace L \to \infty
		\,.
	\end{equation*}
	The proof of Lemma~\ref{lemma:8} has been completed.
\end{proof}

\section*{Acknowledgment}
\noindent This work was supported by the Polish National Center of Science under
Grant DEC-2017/27/B/ST7/03082 and NSERC Grant 319732.



\begin{IEEEbiography}
	[{\includegraphics[width=1in,height=1.25in,clip,keepaspectratio]
		{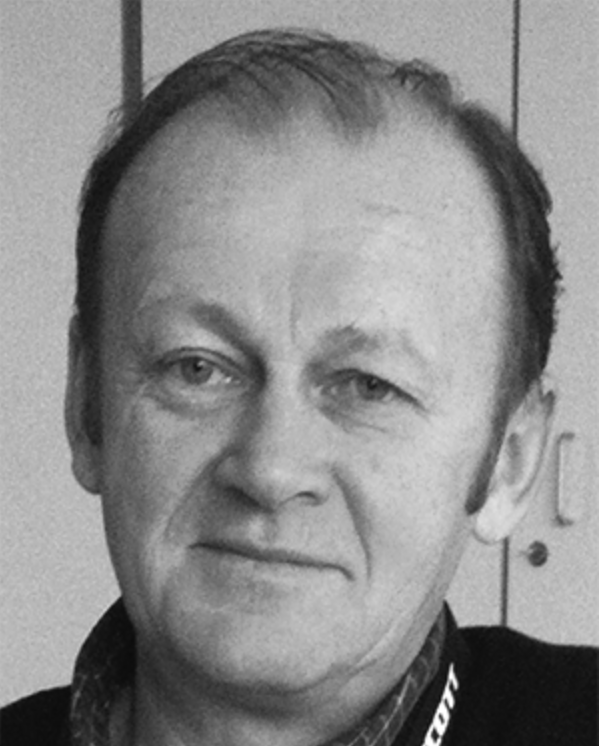}}]
	{Mirosław Pawlak} received
	the Ph.D. (under the supervision of Prof. Greblicki) and D.Sc. degrees in computer engineering from Wrocław University of Technology,
	Wrocław, Poland, in 1984 and 2006, respectively.
	He is currently a Professor with the Department of Electrical and Computer Engineering,
	University of Manitoba, Winnipeg, MB, Canada.
	He has held a number of visiting positions in
	North American, Australian, and European Universities. He was with the University of Ulm and University of Goettingen
	as an Alexander von Humboldt Foundation Fellow. Among his publications in these areas are the books Image Analysis by Moments (Wrocław
	Univ. Technol. Press, 2006), and Nonparametric System Identification
	(Cambridge Univ. Press, 2008), coauthored with Prof. Włodzimierz Greblicki. His research interests include statistical signal processing, machine learning, and nonparametric modeling.
	Dr. Pawlak has been an Associate Editor for the Journal of Pattern
	Recognition and Applications, Pattern Recognition, International Journal
	on Sampling Theory in Signal and Image Processing, Opuscula Mathematica and Statistics in Transition-New Series.
\end{IEEEbiography}

\begin{IEEEbiography}
	[{\includegraphics[width=1in,height=1.25in,clip,keepaspectratio]
		{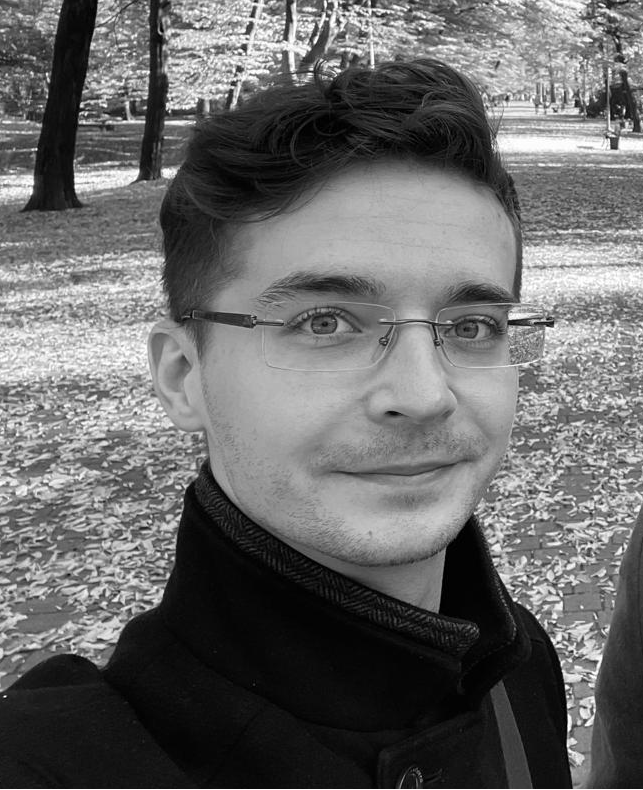}}]
	{Mateusz Pabian} received the M.Sc.
	degree in biomedical engineering from the AGH
	University of Science and Technology, Kraków,
	Poland, in 2019, where he is currently pursuing the
	Ph.D. degree at the Department of Measurement and Electronics.
	In 2017 he has completed an internship at NTT Communication Science Laboratories as
	part of the Vulcanus in Japan 2016-2017 Programme, which focused on experiment 
	design, signal acquisition and analysis of the auditory evoked potentials.
	From 2017 to 2022 he was a Machine Learning Researcher at Comarch Healthcare, where
	he has been involved in the development of computer-aided medical diagnostics
	systems.
	Since 2022 he has been working as Model Developer at UBS in Kraków, Poland.
	His research
	interests include machine learning and event-based systems, particularly
	spiking neural networks.
\end{IEEEbiography}

\begin{IEEEbiography}
	[{\includegraphics[width=1in,height=1.25in,clip,keepaspectratio]
		{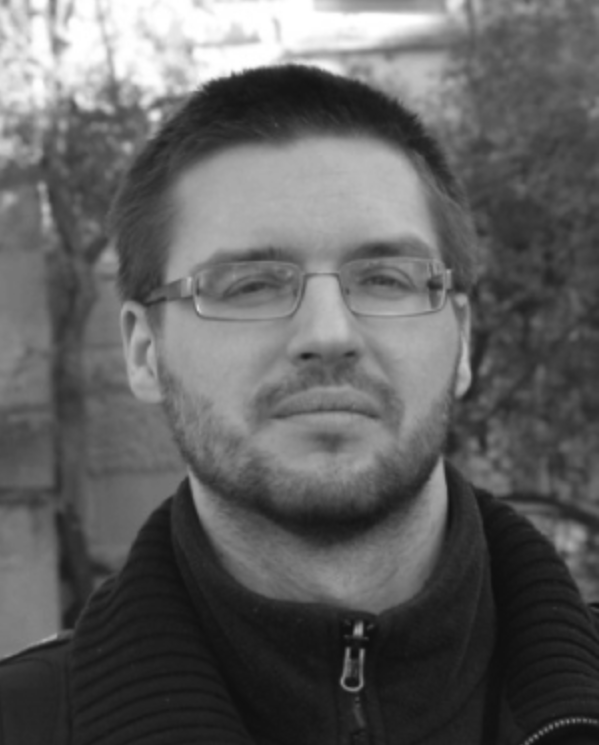}}]
	{Dominik Rzepka} received his M.Sc. and Ph.D. degree in
	electrical engineering from the AGH University of
	Science and Technology, Kraków, Poland, in 2009 and 2018 respectively. From 2007 to
	2011, he was with the Wireless Sensor and Control
	Networks Group, AGH University of Science and
	Technology, where he was involved in the design
	of the low-power algorithms for the processing of
	radio signals and software-defined radio. In 2011,
	he joined the Event-Based Control and Signal Processing Group at AGH
	University of Science and Technology, where he currently works on methods of signals reconstruction from event-triggered samples and on neuromorphic machine learning. He was
	a Visiting Student and Postdoc Researcher in the University of Manitoba, Winnipeg,
	Canada, from 2014 to 2023, and in The City College of New York, USA,
	in 2015. Since 2015, he is working as Signal Processing and Machine Learning Researcher in
	Comarch Healthcare and Fitech, developing algorithms for diagnostics and quality assurance systems. His research interests include
	signal processing and machine learning in biomedicine, wireless communication and industrial inspection, and event-based systems.
\end{IEEEbiography}

\end{document}